\pdfoutput=1
\documentclass[10pt,pra,aps,superscriptaddress,nofootinbib,twocolumn,longbibliography]{revtex4-1}
\usepackage{amsmath,bbm}
\usepackage{amsthm}
\usepackage{amsmath}
\usepackage{latexsym}
\usepackage{amssymb}
\usepackage{float}
\usepackage{graphicx}           
\usepackage{color}
\usepackage{xcolor}
\usepackage{mathpazo}
\usepackage{comment}
\usepackage{enumitem}
\usepackage{multirow}
\usepackage{setspace}
\usepackage{subcaption}
\usepackage[colorlinks=true,linkcolor=blue,citecolor=magenta,urlcolor=blue]{hyperref}
\usepackage{svg}

\newcommand{\be}{\begin{equation}}
\newcommand{\ee}{\end{equation}}
\newcommand{\bea}{\begin{eqnarray}}
\newcommand{\eea}{\end{eqnarray}}

\def\squareforqed{\hbox{\rlap{$\sqcap$}$\sqcup$}}
\def\qed{\ifmmode\squareforqed\else{\unskip\nobreak\hfil
\penalty50\hskip1em\null\nobreak\hfil\squareforqed
\parfillskip=0pt\finalhyphendemerits=0\endgraf}\fi}
\def\endenv{\ifmmode\;\else{\unskip\nobreak\hfil
\penalty50\hskip1em\null\nobreak\hfil\;
\parfillskip=0pt\finalhyphendemerits=0\endgraf}\fi}

\newcommand{\tr}{\text{Tr}}
\newcommand{\I}{\mathbbm{1}}

\newcommand{\ket}[1]{|#1\rangle}
\newcommand{\bra}[1]{\langle#1|}

\newcommand{\la}{\langle}
\newcommand{\ra}{\rangle}

\makeatletter
\newtheorem*{rep@theorem}{\rep@title}
\newcommand{\newreptheorem}[2]{%
\newenvironment{rep#1}[1]{%
 \def\rep@title{#2 \ref{##1}}%
 \begin{rep@theorem}}%
 {\end{rep@theorem}}}
\makeatother

\newtheorem{thm}{Theorem}
\newreptheorem{thm}{Theorem}

\newtheorem{definition}{Definition}

\begin{document}

\title{Single-shot distinguishability and antidistinguishability of quantum measurements}


\author{Satyaki Manna}
\author{Sneha Suresh}
\author{Manan Singh Kachhawaha}
\author{Debashis Saha}
\affiliation{School of Physics, Indian Institute of Science Education and Research Thiruvananthapuram, Kerala 695551, India}


\begin{abstract}
Among the surprising features of quantum measurements, the problem of distinguishing and antidistinguishing general quantum measurements is fundamentally appealing. Unlike classical systems, quantum theory offers entangled states and peculiar state update rule of the post-measurement state, which gives rise to four distinct scenarios for distinguishing (and antidistinguishing) quantum measurements - $(i)$ probing single systems and without access to the post-measurement states, $(ii)$ probing entangled systems and without access to the post-measurement states, $(iii)$ probing single systems with access to the post-measurement states, and $(iv)$ probing entangled systems with access to the post-measurement states. In these scenarios, we consider the probability of distinguishing (and antidistinguishing) quantum measurements sampled from a given set in the single-shot regime. For some scenarios, we provide the analytical expressions of distinguishability (and antidistinguishability) for qubit projective measurements. We show that the distinguishability of any pair of qubit projective measurements in scenario $(iii)$ is always greater than its values in scenario $(ii)$. Interestingly, we find certain pairs of qubit non-projective measurements for which the optimal distinguishability in scenario $(ii)$ is achieved using a non-maximally entangled state. It turns out that, for any set of measurements, distinguishability (and antidistinguishability) in scenario $(i)$ is always less than or equal to in any other scenario, while it reaches its highest possible value in scenario $(iv)$. We establish that the relations form a strict hierarchy, and there is no hierarchical relation between scenarios $(ii)$ and $(iii)$. In particular, we introduce different variants of the well-known `trine' qubit measurement to construct pairs (and triples) of qubit quantum measurements such that they are perfectly distinguishable (and antidistinguishable) in scenario $(ii)$ but not in scenario $(iii)$, and vice versa. Additionally, we present qubit measurements that are perfectly distinguishable (and antidistinguishable) in scenario $(iv)$ but not in any other scenario.
\end{abstract}

\maketitle


\section{Introduction} \label{SEC I} 

The ability to distinguish between different physical processes establishes a fundamental limit to our understanding and observation of the physical world. In classical theory, the notion of distinguishability is straightforward. However, in quantum theory, which exhibits numerous counter-intuitive and non-classical phenomena, the concept becomes far more nuanced. Distinguishability, in general, refers to identifying which specific process has occurred from a set of possible processes. A related but weaker notion, called antidistinguishability, concerns identifying which process has not occurred from a given set.

Over the years, many aspects of distinguishability and antidistinguishability of quantum states have been studied extensively \cite{Hellstrom,Caves02,PhysRevA.93.062112,PhysRevA.70.022302,PhysRevA.88.052313,BandhopadhyayaPRA2014,johnston2023tight}. The role of distinguishability and antidistinguishability of quantum states on foundational aspects of quantum theory, such as the interpretation of the reality of quantum states, has been investigated \cite{Leifer,Barrett,Branciard,Chaturvedi2020quantum,Pusey_2012,ray2024epistemic}. Beyond foundational insights, the relevance of these notions extends to practical applications in quantum information science. Distinguishability and antidistinguishability have been explored within quantum cryptography \cite{JánosABergou_2007, Bae_2015, PhysRevA.98.012330} and have been shown to play crucial roles in achieving quantum advantages in communication tasks \cite{Chaturvedi2020quantum,PhysRevResearch.6.043269, PhysRevLett.115.030504, PhysRevResearch.2.013326,PhysRevLett.125.110402}.

While distinguishing quantum states has been well-established, the problem of distinguishing generic quantum channels is far more complex \cite{sacchi2005, piani2005, pianiPRA, duan2009, harrow2010, piani2015, NakahiraPRL,MuraoPRL2021,ji2024barycentric}. In particular, studies on distinguishing general quantum measurements  (`positive operator-valued measures' (POVMs)) have focused predominantly on the multiple-shot regime, where multiple copies of the measurement are available  \cite{ji2006, ziman2008, ziman2009, fiurasek2009, Pucha_a_2021, ghoreishi2024multipleshotlabelingquantumobservables}. Fewer works explore the more challenging single-shot setting, where a single copy of each measurement is sampled from a given set of measurements \cite{Sedlak2014,PhysRevA.98.042103,njp2021,Nidhin-pra}. However, these studies on measurements presume that the post-measurement states are inaccessible. Moreover, antidistinguishability in the context of measurement remains largely unexplored, except \cite{ji2024barycentric}. 
Interestingly, quantum theory offers four distinct scenarios for investigating the distinguishability (or antidistinguishability) of measurements, depending on two key factors: whether single or entangled systems are used to probe the measurement and whether the post-measurement states are accessible. In this work, we conduct a comprehensive and comparative study of these four scenarios in the single-shot regime.

We formulate the problem of distinguishability and antidistinguishability by evaluating the probability of distinguishing and antidistinguishing quantum measurements, sampled from a given set, in the single-shot regime under the following scenarios: $(i)$ using single systems and without access to post-measurement states, $(ii)$ using entangled systems and without access to post-measurement states, $(iii)$ using single systems with access to post-measurement states, and $(iv)$ using entangled systems with access to post-measurement states. For each scenario, we provide simplified expressions of distinguishability and antidistinguishability of quantum measurements using a generalized formulation of distinguishability and antidistinguishability of quantum states. In scenario $(i)$, we derive a closed-form expression of distinguishability and antidistinguishability of an arbitrary set of qubit projective measurements. For a pair of qubit projective measurements, the optimal value of distinguishability in scenario $(ii)$ is always achieved when the probing system is initially prepared in a maximally entangled state. Compellingly, we found examples of two qubit POVMs for which the non-maximally entangled state outperforms the maximally entangled state as the initial entangled state for distinguishability. It is evident that the distinguishing (or antidistinguishing) probability of any set of measurements in scenario $(iv)$ is always greater than or equal to the probability in any other scenario, and the probability of distinguishability (or antidistinguishability) in scenario $(i)$ is lesser or equal to any other scenario. We show that these hierarchies are strict by providing explicit examples of qubit measurements. Our study reveals that for any pair of qubit projective measurements, scenario $(iii)$ yields better distinguishability than scenario $(ii)$. 
However, the relationship between scenarios $(ii)$ and $(iii)$ is not strictly one-directional; we construct counterexamples to demonstrate that the relative advantage can depend on the specific measurements involved. By introducing novel variants of the well-known \textit{trine} qubit POVM \cite{Sedlak2014}, we construct pairs (and triples) of qubit quantum measurements such that they are perfectly distinguishable (and antidistinguishable) in scenario $(ii)$ but not in scenario $(iii)$, and vice versa. Furthermore, we present examples of qubit measurements that are perfectly distinguishable (and antidistinguishable) only in scenario $(iv)$, highlighting the unique power of entanglement combined with access to post-measurement states.

The paper is organized as follows. In section \ref{sec II}, we present a general formulation of distinguishability and antidistinguishability of quantum states and discuss some already-known results. In section \ref{SEC III}, the problem of distinguishability and antidistinguishability of quantum measurements is introduced. We describe all the aforementioned four different scenarios in detail. Section \ref{SEC new} presents several results that are applicable to arbitrary qubit measurements. In the subsequent section, we introduce variants of the trine POVMs to illustrate the comparative advantages of different scenarios. Through these examples, we demonstrate how certain scenarios enable perfect distinguishability (or antidistinguishability) where others do not. In conclusion, we summarize the key findings and discuss several open problems and potential future research directions.

\section{Distinguishability and antidistinguishability of quantum states} \label{sec II}

Before we dive into quantum measurements, in this section, we give a small overview of the distinguishability and antidistinguishability of quantum states. Assume, we are given $n$ a priori known quantum states $\{\rho_k\}_{k=1}^n$ and they are associated with a set of positive numbers $\{q_k\}_k, q_k >0$.  Distinguishability of $n$ quantum states $\{\rho_k\}_{k=1}^n$ with a set of positive numbers $\{q_k\}_k$, is defined as,
\bea  \label{pD}
\mathcal{DS}[\{\rho_k\}_k,\{q_k\}_k] &=&  \max_{\{M\}}\Bigg\{ \sum_{k}q_k \ p(a=k|\rho_k,M)\Bigg\} \nonumber \\
&=& \max_{\{M\}}\Bigg\{ \sum_{k}q_k \ \tr(\rho_kM_k)\Bigg\} ,
\eea 
where $M$ is a $n$-outcome measurement that distinguish $\{\rho_k\}_{k=1}^n$, and $p(a|\rho_k,M)$ denotes the probability of outcome $a$ when measurement $M$ is implemented on the quantum state $\rho_k$. 
 Note that, the way this distinguishability is defined here, we do not impose $\sum_k q_k=1$, i.e., $\{q_k\}_k$ may not be not a probability distribution, on purpose. In the course of the paper, we will see for most of the scenarios, measurement distinguishability (antidistinguishability) will be related to the problem of distinguishability (antidistinguishability) of the states such that the associated coefficients $\{q_k\}_k$ may not form a probability distribution. Thus, not taking $\{q_k\}_k$ as a probability distribution will help us to simplify the expression of distinguishability (antidistinguishability) of quantum measurements. 
It is easy to see that the maximum value of $\mathcal{DS}$ is $\sum_k q_k$, and it will happen when all the states are perfectly distinguishable.
For two quantum states, Hellstrom \cite{Hellstrom} showed \eqref{pD} reduces to,
\be\label{dspsi12}
\mathcal{DS}[\{\rho_1,\rho_2\},\{q_1,q_2\}] = q_2  + \| q_1\rho_1 - q_2\rho_2 \|,
\ee 
where $\| \cdot \|$ denotes maximum eigenvalue. For two pure states, this becomes,
\bea\label{DSpsi12}
&& \mathcal{DS}[\{\psi_1,\psi_2\},\{q_1,q_2\}] \nonumber \\
&=& \frac12 \left( (q_1+q_2) + \sqrt{(q_1+q_2)^2 - 4 q_1 q_2 |\la \psi_1|\psi_2\ra |^2 }\right).\nonumber\\
\eea  
If $\{q_1,q_2\}$ from a probability distribution, i.e., $q_1+q_2=1$, then \eqref{DSpsi12} will become the famous Hellstrom bound \cite{Hellstrom}.\\

Similarly, one can use the same set-up to antidistinguish $n$ previously given quantum states. 
Antidistinguishability of $n$ quantum states $\{\rho_k\}_{k=1}^n$ is also a linear function of a set positive numbers $\{q_k\}$, which is defined as,
\bea  
\mathcal{AS}[\{\rho_k\}_k,\{q_k\}_k] &= & \max_{\{M\}}\Bigg\{ \sum_{k,a} q_k \ p(a\neq k|\rho_k,M)\Bigg\}. \nonumber
\eea
As $\sum_a (p(a\neq k|\rho_k,M)+p(a=k|\rho_k,M))=1$ , the above expression becomes,
\bea\label{pA}
\mathcal{AS}[\{\rho_k\}_k,\{q_k\}_k] &=& \sum_k q_k - \min_{\{M\}}\Bigg\{\sum_{k}q_k \ p(a=k|\rho_k,M)\Bigg\} \nonumber \\
&=& \sum_k q_k - \min_{\{M\}}\left\{\sum_{k}q_k \ \tr(\rho_k M_k)\right\} .
\eea 
 It is needless to say that, for any two states,
\be 
\mathcal{DS}[\{\rho_1,\rho_2\},\{q_1,q_2\}] = \mathcal{AS}[\{\rho_1,\rho_2\},\{q_1,q_2\}] .
\ee 
The sufficient conditions \cite{Caves02} for perfect antidistinguishability of three pure quantum states, i.e.,
\be \label{ADSpsi123}
\mathcal{AS}[\{\psi_1,\psi_2,\psi_3\},\{q_1,q_2,q_3\}] = \sum_k q_k,
\ee 
are following:
\begin{subequations}\label{condAS}
\be
    x_1 + x_2 + x_3 < 1 
\ee
\be 
(x_1+x_2+x_3-1)^2 \geqslant 4x_1x_2x_3 ,
\ee 
\text{  where $x_1=|\la \psi_1|\psi_2\ra|^2, x_2=|\la \psi_1|\psi_3\ra|^2, x_3=|\la \psi_2|\psi_3\ra|^2$.}\\
\end{subequations}

For the qubit states, the above conditions are necessary and sufficient. In this case, these conditions are equivalent to the fact that the three qubit states lie on a great circle on the Bloch sphere, and the sum of every pair of angles between the vectors is greater than or equal to $\pi$ \cite{ray2024epistemic}.

\section{Distinguishability and Antidistinguishability of  Quantum measurements}\label{SEC III} 

Quantum measurement having $m$  distinct outcome is defined by a set of operators,
\be 
M :=\{F_a\}_a = \{F_1,\cdots,F_m\} , 
\ee 
such that $\sum_{a=1}^m F_a^\dagger F_a = \I$. When this measurement is performed on a quantum state $\rho$,
the probability of obtaining outcome $a$ is given by ,
\be \label{prob}
p(a|\rho,F_a) = \tr(\rho F^\dagger_aF_a) ,
\ee 
and post-measurement state is,
\be
\rho_a=\frac{F_a\rho  F^\dagger_a}{\tr(\rho F^\dagger_aF_a)}.
\ee
Projective measurement is a particular case where $F_a$ are projectors.

Now, we define distinguishability and antidistinguishability of quantum measurements analogous to the similar way we defined quantum states.
We consider a priori known set of $n$ measurements acting on $d$-dimensional quantum states, each having $m$ outcomes, defined by $\{F_{a|x}\}_{a,x}$, which are sampled from a probability distribution $\{p_x\}_x$, i.e., $p_x > 0, \sum_x p_x =1$. Here $x\in\{1,\cdots,n\}$ denotes the measurements and $a\in\{1,\cdots,m\}$ denotes the outcomes. Note that, in general $d$ may not be same as $m$. We define the positive operators, commonly known by POVM elements, as
\be 
M_{a|x} = F^\dagger_{a|x}F_{a|x} ,
\ee
which is often convenient to use, particularly when we do not consider the post-measurement states.  To distinguish or antidistinguish these $n$ measurements $\{F_{a|x}\}_{a,x}$, the measurement device is given a known quantum state, can be single or entangled, and the device carries out one of these $n$ measurements. Measurements and this initial quantum state belong to $\mathcal{C}^d$. After performing one of the measurements, the device gives a classical output and a post-measurement state. Based on the initial quantum state and whether we have access to the post-measurement state or not, we can formulate four different situations: \\




In the next subsections, we give the detailed formulation of all four scenarios for distinguishability and antidistinguishability of quantum measurements.


\begin{widetext}

\begin{figure}
    \centering
    \begin{subfigure}[b]{0.4\textwidth}
        \centering
        \includegraphics[scale=0.2]{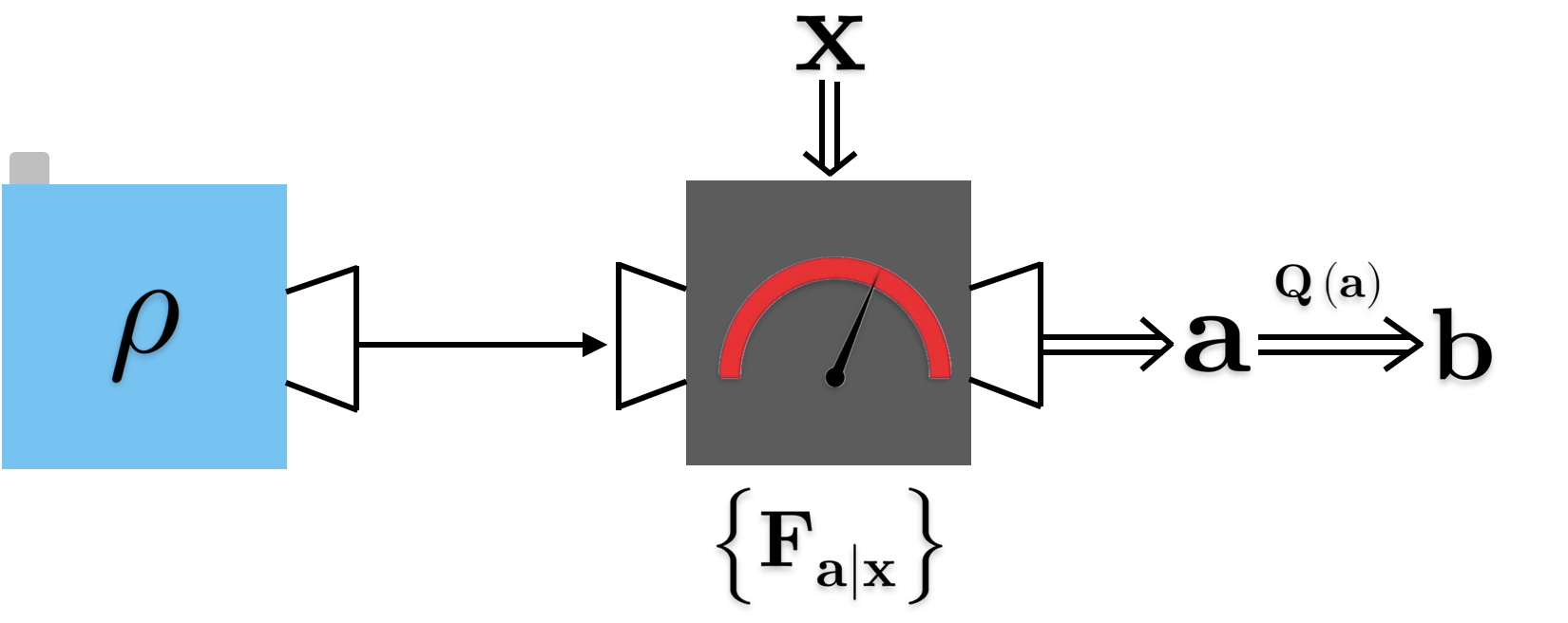}
        \caption{With single system and without post-measurement state ($\mathcal{DMS}/\mathcal{AMS}$)}
        \label{1a}
    \end{subfigure}
    \hfill
    \begin{subfigure}[b]{0.4\textwidth}
        \centering
        \includegraphics[scale=0.2]{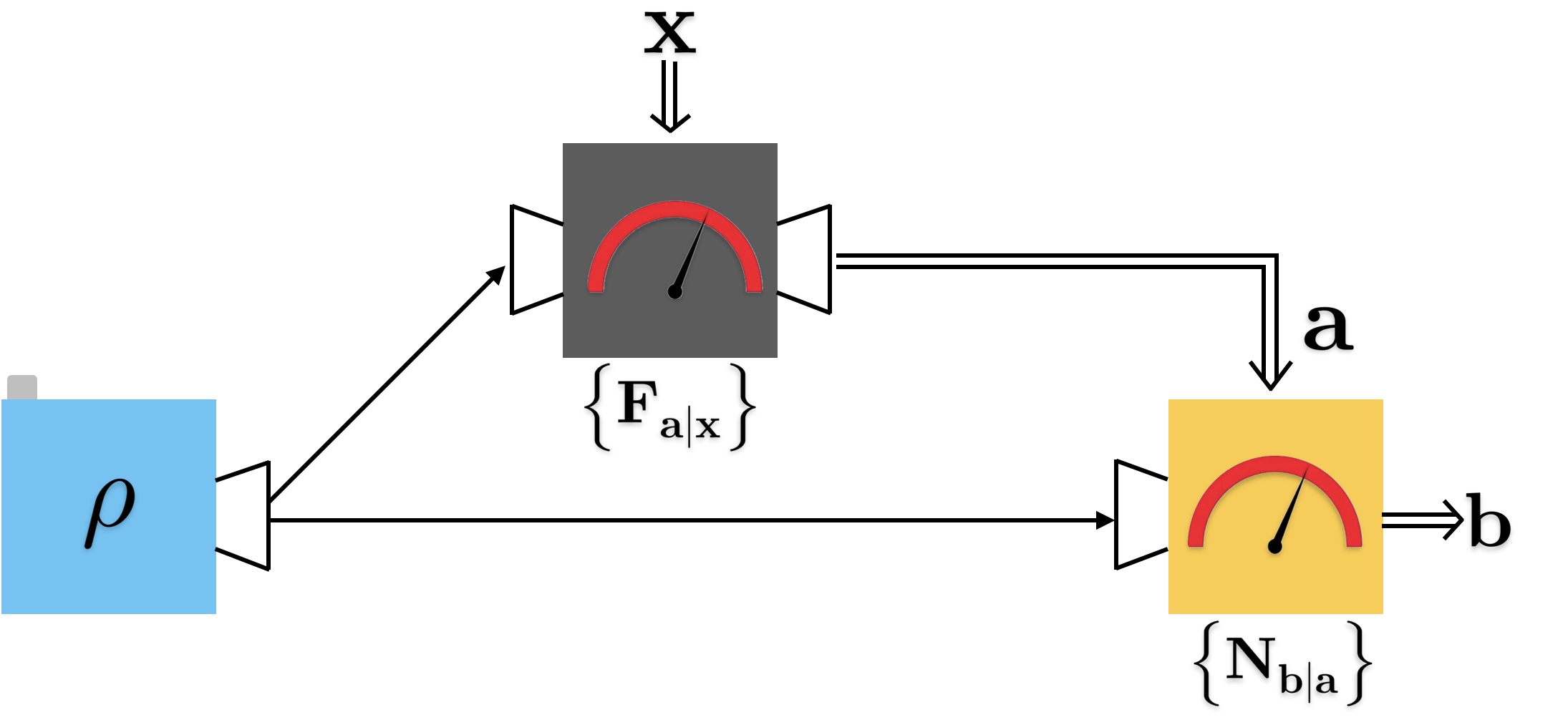}
        \caption{With entangled system and without post-measurement state ($\mathcal{DME}/\mathcal{AME}$)}
        \label{1b}
    \end{subfigure} \\
    \begin{subfigure}[b]{0.4\textwidth}
        \centering
        \includegraphics[scale=0.2]{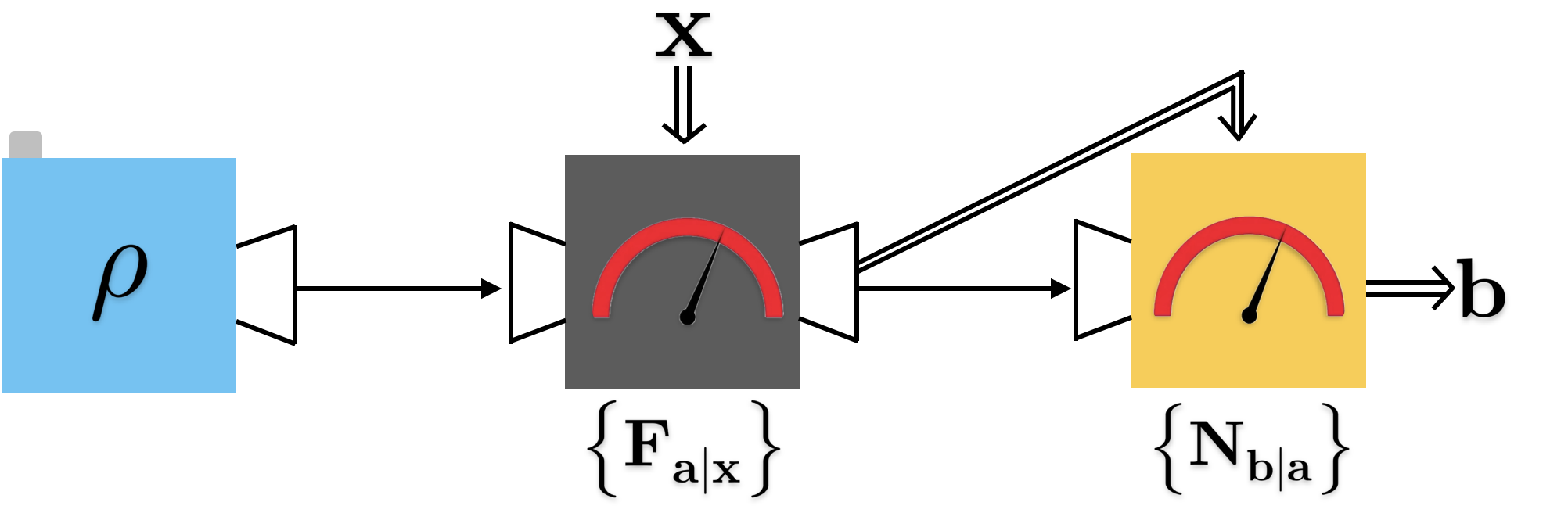}
        \caption{With single system and with post-measurement state ($\mathcal{D\overline{M}S}/\mathcal{A\overline{M}S}$)}
        \label{1c}
    \end{subfigure}
    \hfill
    \begin{subfigure}[b]{0.4\textwidth}
        \centering
        \includegraphics[scale=0.2]{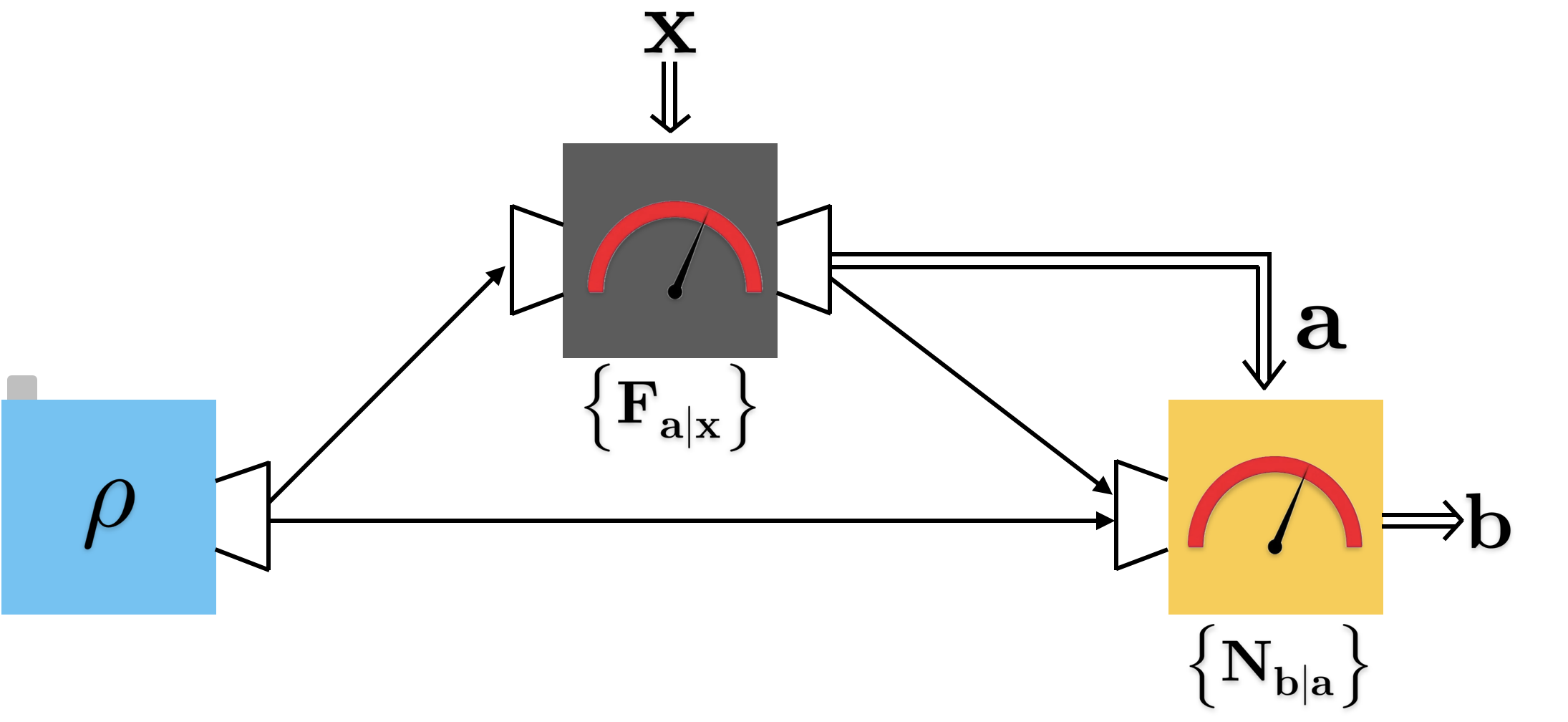}
        \caption{With entangled system and with post-measurement state ($\mathcal{D\overline{M}E}/\mathcal{A\overline{M}E}$)}
        \label{1d}
    \end{subfigure} 
    \caption{Four different scenarios are depicted to distinguishing (and antidistinguishing) a set of given measurements $\{F_{a|x}\}_{a,x}$. Here, double-line arrows denote classical variables, while single-line arrows denote quantum systems. Grey devices implement the measurements we want to distinguish (and antidistinguish). While we have full control over the blue preparation and the yellow measurement devices, we implement the best possible states and measurements in these to accomplish the task. In all the scenarios, outcome $b$ is the answer to distinguish (and antidistinguish).}
    \label{fig}
\end{figure}
\end{widetext}

\subsection{With single system and without the post-measurement state }\label{IIIA}
At first, we can start with a single system as the initial state and only have the classical output of the measurement device. We do not have access to the post-measurement state in this case (see FIG. \eqref{1a}). In this scenario, we will get an outcome with the classical probability $p(a|x,\rho)$. Upon receiving this classical output, in general, one can perform a classical post-processing defined by $p(z|a)$, where $z\in\{1,\cdots,n\}$ and $\forall a$, $\sum_z p(z|a)=1.$ Post-processing protocol acts on $a$ and returns output $z$ for a given $a$ such that $z$ is the guess of the measurement. So, maximum distinguishability of the set of quantum measurements with single systems, denoted by $\mathcal{DMS}$, is given by,
\bea 
&& \mathcal{DMS}\left[\left\{F_{a|x}\right\}_{a,x},\left\{p_x\right\}_x\right] \nonumber \\
&=& \max_{\rho} \sum_{x,a,z} p_x p\left(z= x|a\right) p\left(a|x,\rho\right) \nonumber\\
&=& \max_{\rho} \sum_{a} \max_x \left\{p_x p\left(a=x|x,\rho\right)\right\}, \nonumber\\ 
\eea
  in which the last equality follows by taking the best possible post-processing \cite{njp2021}. Replacing the quantum description, we have,
\bea\label{Drho}
&& \mathcal{DMS}\left[\left\{F_{a|x}\right\}_{a,x},\left\{p_x\right\}_x\right] \nonumber\\
&=&  \max_{\rho} \sum_a \max_x \left\{ p_x \mbox{Tr}\left(\rho M_{a|x} \right) \right\} .
\eea


Similarly, we formalize antidistinguishability of quantum measurements with a single system $\rho$. In the case of antidistinguishability, the aim is to guess which measurement is not implemented in the respective run. Using the same setup depicted in FIG. \eqref{1a}, we can write the expression of antidistinguishability of a set of measurements and denote it by $\mathcal{AMS}$ as follows:  
\bea\label{AMS}
&& \mathcal{AMS}\left[\left\{F_{a|x}\right\}_{a,x},\left\{p_x\right\}_x\right] \nonumber \\ 
 & = & \max_{\rho} \sum_{x,a,z} p_x p\left(z\neq x|a\right) p\left(a|x,\rho\right) \nonumber\\
 &=&  1 - \min_{\rho} \sum_{a} \min_x \left\{p_x p\left(a=x|x,\rho\right)\right\} \nonumber\\ 
 & = & 1 - \min_{\rho} \sum_{a} \min_x\left\{p_x\tr\left(\rho M_{a|x}\right)\right\}.
\eea
The third line follows from the fact that $\sum_z p(z|a)=1$.

\subsection{With entangled system and without the post-measurement state}\label{IIIB}
As shown in FIG \eqref{1b}, in this setup, the distinguishability/antidistinguishability of a set of quantum measurements $\{F_{a|x}\}_{a,x}$, a known quantum bipartite state $\rho^{AB}$ is used by two observers, say, Alice and Bob. One of the measurements from the set is carried out on Alice's subsystem of the entangled system, and an outcome $a$ is received, which is conveyed to Bob. We define the reduced state of Bob when outcome $a$ is obtained by Alice for measurement $x$, 
\be \label{rhoBax}
\rho^B_{a|x} = \tr_A\left[\frac{\left(F_{a|x}\otimes \I\right)\rho^{AB}\left(F^\dagger_{a|x}\otimes \I\right)}{\tr\left( \rho^{AB} \left( M_{a|x} \otimes \I \right) \right)}  \right],
\ee 

where $p(a|x,\rho) = \tr \left[(M_{a|x}\otimes \I)\rho^{AB} \right] $.\\

Depending on Alice's outcome, Bob performs a measurement described by the set of POVM elements $\{N_{b|a}\}$ where $b\in\{1,\cdots,n\}$ is the outcome and $a\in\{1,\cdots,m\}$ is the choice of measurement settings, $\{N_{b|a}\}\geqslant 0, \sum_b N_{b|a}=\I$. For distinguishability, this protocol is successful when Bob's output $b$ will be the same as Alice's input $x$. So, the success probability of this task depends on the joint probability of both Alice's and Bob's measurements. Note that since Bob can choose the best possible measurement on his side, any classical post-processing of the outcome of his measurement can be absorbed within the measurement $\{N_{b|a}\}$. Consequently, distinguishability of quantum measurements with entangled systems, denoted by $\mathcal{DME}$, is written by, 
\bea \label{DME1}
&& \mathcal{DME}\left[\left\{F_{a|x}\right\}_{a,x},\left\{p_x\right\}_x\right] \nonumber\\
&=& \sum_{x,a} p_x p\left(a|x,\rho\right)p\left(b=x|a\right)\nonumber\\
&=& \max_{\rho^{AB}, \{N_{b|a}\}} \ \sum_{x,a} p_x \mbox{Tr} \left[\rho^{AB}(M_{a|x}\otimes N_{b=x|a}\right].
\eea 
Leveraging the expression of joint probability and \eqref{rhoBax}, \eqref{DME1} becomes,
\bea \label{DMEsim}
&&\mathcal{DME}\left[\left\{F_{a|x}\right\}_{a,x},\left\{p_x\right\}_x\right] \nonumber\\ 
&=& \max_{\rho^{AB}} \sum_{a} \max_{\{N_{b|a}\}} \sum_{x} p_x p\left(a|x,\rho\right) \mbox{Tr} \left(\rho^{B}_{a|x} N_{x|a}\right) .
\eea
Interestingly, the expression for every outcome $a$ within the summation is similar to the expression of distinguishability of quantum states defined in \eqref{pD}. using that, \eqref{DMEsim} can be re-expressed as,
\bea\label{DME}
&&\mathcal{DME}\left[\left\{F_{a|x}\right\}_{a,x},\left\{p_x\right\}_x\right] \nonumber\\
&=& \max_{\rho^{AB}} \sum_a \mathcal{DS}\left[\left\{\rho^{B}_{a|x} \right\}_x,\left\{p_x p\left(a|x,\rho\right) \right\}_x\right] .
\eea 
For uniform distribution of the measurements, that means $p_x = 1/n$, we have,
\bea \label{DMEequaln}
&&\mathcal{DME}\left[\left\{F_{a|x}\right\}_{a,x},\left\{p_x\right\}_x\right] \nonumber\\ 
&=& \frac1n \max_{\rho^{AB}} \sum_a \mathcal{DS}\left[\left\{\rho^{B}_{a|x} \right\}_x,\left\{p(a|x,\rho) \right\}_x\right].
\eea 

There exist several measurements, including qubit POVM measurements \cite{Sedlak2014} and high-dimensional projective measurements that are perfectly distinguishable in this scenario but not using single systems. An extended study can be found in \cite{njp2021}.

For antidistinguishability, Alice and Bob execute the same protocol, but the task succeeds when Bob's output $b$ is not equal to Alice's input $x$. Similarly, antidistinguishability of quantum  measurements with entangled systems,

\bea
&& \mathcal{AME}\left[\left\{F_{a|x}\right\}_{a,x},\left\{p_x\right\}_x\right] \nonumber\\
&=& \sum_{x,a} p_x p\left(a|x,\rho\right)p\left(b\neq x|a\right) \nonumber\\
&=& \max_{\rho^{AB}, \{N_{b|a}\}} \ \sum_{x,a,b} p_x \operatorname{Tr} \left[\rho^{AB} \left( M_{a|x} \otimes N_{b\neq x|a}\right)\right] \nonumber\\ 
&=& \max_{\rho^{AB}, \{N_{b|a}\}} \ \sum_{x,a,b} p_x \operatorname{Tr} \left[\rho^{AB} \left( M_{a|x}\otimes(\I - N_{b= x|a})\right)\right] \nonumber\\
&=& \max_{\rho^{AB}} \sum_a \min_{\{N_{b|a}\}} \sum_{x} p_x p(a|x,\rho) \left(1 - \operatorname{Tr} \left(\rho^{B}_{a|x} N_{b=x|a}\right)\right).\nonumber 
\eea

Yet again, the term within the summation over $a$ coincides with the antidistinguishability of quantum states introduced in \eqref{pA}. This implies,
\bea\label{AME}
&& \mathcal{AME}\left[\left\{F_{a|x}\right\}_{a,x},\left\{p_x\right\}_x\right] \nonumber\\
&=& \max_{\rho^{AB}} \sum_a \mathcal{AS}\left[\left\{\rho^{B}_{a|x} \right\}_x,\left\{p_x p\left(a|x,\rho\right) \right\}_x\right].
\eea

In this scenario, the measurements we want to distinguish or antidistinguish are applied on Alice's part. So, the dimensions of the measurements and Alice's state should match. However, the state on Bob's part can be of any dimension. At this point, it is an important question to be asked if the dimension of Bob's state has any constraints depending on the dimension of the measurements. We address this question in theorem \ref{th2} (later in Sec. \ref{SEC new}).

\subsection{With single systems and the post-measurement state}

For discriminating the set of quantum measurements in this scenario, we have a setup similar to the single system in the subsection \ref{IIIA}, but with the post-measurement state (see FIG. \eqref{1c}). After applying any of the measurements from the set $\{F_{a|x}\}_{a,x}$, we will have a classical probability $p(a|x,\rho)$. In addition to that, now we have the access of post-measurement state $\rho_{a|x}$, which can be written by,
\be 
\rho_{a|x} = \frac{F_{a|x}\rho F^\dagger_{a|x}}{p(a|x,\rho)} .
\ee 
Therefore, one can perform any measurement depending on the outcome $a$. Let us describe this measurement by a set of POVM elements $\{N_{b|a}\}$ where $a\in\{1,\cdots,m\}$ denotes the measurement setting and $b\in\{1,\cdots,n\}$ is the outcome. The protocol is successful in distinguishing the measurements if $b$ is the same as $x$. Any classical post-processing of the outcome $b$ can be included within the measurement $\{N_{b|a}\}$. Distinguishability of quantum measurements with single systems with the post-measurement state in this prescription, designated by $\mathcal{D\overline{M}S}$, is given by,
\bea 
&&\mathcal{D\overline{M}S}\left[\left\{F_{a|x}\right\}_{a,x},\left\{p_x\right\}_x\right]\nonumber\\
&=&  \sum_x p_x p\left(b=x|x\right) \nonumber \\
&=& \sum_x p_x \sum_a p\left(a|x,\rho\right)p\left(b=x|a\right) \nonumber \\
&=& \max_{\rho} \sum_{a} \max_{\{N_{b|x}\}} \sum_x p_x \underbrace{\tr\left(\rho M_{a|x}\right)}_{p\left(a|x,\rho\right)} \tr \left(\rho_{a|x} N_{b=x|a}\right) . \nonumber \\
\eea
By \eqref{dspsi12}, it takes the form of distinguishability of the post-measurement states for each outcome $a$, and thus,
\bea\label{DIS}
&&\mathcal{D\overline{M}S}\left[\left\{F_{a|x}\right\}_{a,x},\left\{p_x\right\}_x\right]\nonumber\\
&=& \max_{\rho} \sum_a \mathcal{DS}\left[\left\{{\rho}_{a|x}\right\}_x, \left\{p_x p(a|x,\rho)\right\}_x\right].
\eea
 
The same recipe is applicable for antidistinguishability with a different motive for attainment, which is $b\neq x$. 
Antidistinguishability of quantum measurements with single systems 
is denoted by $\mathcal{A\overline{M}S}$, structured like following:
\begin{widetext}
\bea
 \mathcal{A\overline{M}S}\left[\left\{F_{a|x}\right\}_{a,x},\left\{p_x\right\}_x\right]
&=& \sum_x p_x p\left(b\neq x|x\right) \nonumber \\
&=&  \sum_x p_x \sum_a p\left(a|x,\rho\right)p\left(b\neq x|a\right) \nonumber \\
&=& \sum_{a,x} p_x p\left(a|x,\rho\right)  \left(1 - p(b=x|a)\right) \nonumber \\
&=& \max_{\rho} \left[\sum_{a,x} p_x p\left(a|x,\rho\right) - \sum_a \min_{\{N_{b|a}\}}\sum_x p_x p(a|x,\rho) p(b=x|a)\right] \nonumber \\
&=& \max_{\rho}\sum_a \left[\sum_x p_x p(a|x,\rho)
-\min_{\{N_{b|a}\}} \sum_x  p_x {p\left(a|x,\rho\right)} \underbrace{\operatorname{Tr} \left(\rho_{a|x} N_{b=x|a}\right)}_{p(b=x|a)} \right]. 
\eea
\end{widetext}
With the help of \eqref{pA}, it reduces to the antidistinguishability of the post-measurement state summed over all the outcomes, described by,
\bea\label{AbarMS}
&&\mathcal{A\overline{M}S}\left[\left\{F_{a|x}\right\}_{a,x},\left\{p_x\right\}_x\right]\nonumber\\
&=& \max_{\rho} \sum_a \mathcal{AS}\left[\left\{{\rho}_{a|x}\right\}_x, \left\{p_x p(a|x,\rho)\right\}_x\right].
\eea

\subsection{With entangled systems and the post-measurement state}

For discriminating a set of quantum measurements in this scenario, we have a similar setup as with the entangled system in the subsection \ref{IIIB}, but in this case, the post-measurement state of Alice can be accessed by Bob. (see FIG. \eqref{1d})
 Alice and Bob uses a bipartite entangled state $\rho^{AB}$ to distinguish $\{F_{a|x}\}_{a,x}$. Alice executes an unknown measurement picked from the set. After the measurement, Bob has access to Alice's post-measurement state ($\rho^A_{a|x}$) as well as the reduced state ($\rho_{a|x}^B$) to his side. Depending on this joint state ($\rho_{a|x}^{AB}$), Bob conducts a measurement from the set $\{N_{b|a}\}$. Success probability is the highest when $'b=x'$ holds. This scenario is denoted by $\mathcal{D\overline{M}E}$ and as usual, by leveraging the idea of joint probability, it can be written like the following:
\bea \label{DIE}
&&\mathcal{D\overline{M}E}\left[\left\{F_{a|x}\right\}_{a,x},\left\{p_x\right\}_x\right] \nonumber\\
&=& \sum_x \sum_a p_x p\left(a|x,\rho\right)p\left(b=x|a\right)  \nonumber \\
&=& \max_{\rho^{AB},\{N_{b|a}\}}\sum_a \sum_x p_x p\left(a|x,\rho\right) \tr\left( \rho^{AB}_{a|x} N_{b=x|a} \right) \nonumber \\
&=& \max_{\rho^{AB}} \sum_a \mathcal{DS}\left[\left\{\rho^{AB}_{a|x} \right\}_x, \left\{p_xp(a|x,\rho)\right\}_x\right],
\eea 
where,
\bea \label{rhoABax}
& p(a|x,\rho) = \tr\left( \rho^{AB} \left( M_{a|x} \otimes \I \right) \right), \\  & \rho^{AB}_{a|x} = \frac{\left(F_{a|x}\otimes \I\right)\rho^{AB}\left(F^\dagger_{a|x}\otimes \I\right)}{\tr\left( \rho^{AB} \left( M_{a|x} \otimes \I \right) \right)} .
\eea  
For uniform distribution, $p_x=1/n$,
\bea\label{equaldbarme}
&&\mathcal{D\overline{M}E}\left[\left\{F_{a|x}\right\}_{a,x},\left\{p_x\right\}_x\right]\nonumber\\
&=& \frac1n \max_{\rho^{AB}} \sum_a \mathcal{DS}\left[\left\{\rho^{AB}_{a|x} \right\}_x, \left\{p(a|x,\rho)\right\}_x\right] .
\eea

For antidistinguishability, Alice and Bob follow the same procedure with a different condition for success merit, i.e., $b\neq x$. So,
antidistinguishability of quantum measurements in this scenario,
\begin{widetext}
\bea\label{abarmeprev}
\mathcal{A\overline{M}E}\left[\left\{F_{a|x}\right\}_{a,x},\left\{p_x\right\}_x\right] &=& \sum_b \sum_x \sum_a p_x p\left(a|x,\rho\right)p\left(b\neq x|a\right)  \nonumber \\
&=& \sum_{a,x} p_x p\left(a|x,\rho\right)  \left(1 - p(b=x|a)\right) \nonumber \\
&=& \max_{\rho^{AB}}\left[\sum_{a,x} p_x p\left(a|x,\rho\right) - \min_{\{N_{b|a}\}} \sum_{a,x} p_x p\left(a|x,\rho\right)\underbrace{ \tr\left( \rho^{AB}_{a|x} N_{b=x|a} \right)}_{p(b=x|a)}\right]  
\eea
\end{widetext}
Eventually, by \eqref{pA}, \eqref{abarmeprev} shapes into the antidistinguishability of the joint post-measurement states, and it can be written as,
\bea\label{AbarME}
&&\mathcal{A\overline{M}E}\left[\left\{F_{a|x}\right\}_{a,x},\left\{p_x\right\}_x\right] \nonumber\\
&=& \max_{\rho^{AB}} \sum_a \mathcal{AS}\left[\left\{\rho^{AB}_{a|x} \right\}_x, \left\{p_xp(a|x,\rho)\right\}_x\right].
\eea

Similarly to the scenarios of $\mathcal{DME}$ and $\mathcal{AME}$, the sufficient dimension of the state at Bob's side can not be trivially concluded. We discuss this topic in theorem \ref{th2}.





\section{General Results}\label{SEC new}
In this section, we present some generic results regarding the distinguishability and antidistinguishability of the measurements. Let us first point out that, for any set of measurements, the following relation holds,
\begin{align}\label{all}
\mathcal{DMS}(\mathcal{AMS}) & \ \ \leq \ \ \mathcal{D\overline{M}S}(\mathcal{A\overline{M}S}) \notag \\[1em]
&\hspace{-1cm}\rotatebox[origin=c]{90}{$\geq$}  \hspace{2cm} \rotatebox[origin=c]{90}{$\geq$} \notag \\[1em]
\mathcal{DME}(\mathcal{AME}) & \ \ \leq \ \  \mathcal{D\overline{M}E}(\mathcal{A\overline{M}E}).
\end{align}

These implications are straight-forward since any strategy with single systems without the post-measurement state is a particular instance of single systems with the post-measurement state with trivial measurement on the post-measurement state as well as a particular instance with bipartite states without the post-measurement state with product state. Similarly, any strategy with an entangled state and any strategy with a single system and post-measurement state are particular instances of the strategy with an entangled state and post-measurement state.

Though there exists a qualitative analysis, it is very difficult to calculate the closed form of any of these quantities for any set of measurements. For the simplest case, which is a set of qubit projective measurements, we want to find a closed form of $\mathcal{DMS}$ and $\mathcal{AMS}$.

\begin{widetext}
\begin{thm}\label{th1}
    For a set of qubit projective measurements defined as $\{F_{1|x}\}=\{\ket{\psi_x}\bra{\psi_x}\}_{x=1}^n$ and sampled from the probability distribution $\{p_x\}_{x=1}^n$, the following holds:
    \bea\label{dmsqubitprojective}
      &&\mathcal{DMS}\left[\left\{\ket{\psi_x}\right\}_x,\left\{p_x\right\}_x\right]= \max_{x,x^\prime,x\neq x^\prime} \left\{ \frac12 \left( \left(p_x+p_{x^\prime}\right) + \sqrt{(p_x+p_{x^\prime})^2-4p_xp_{x^\prime}|\la \psi_x|\psi_{x^\prime}\ra |^2} \right) \right\},
    \eea
    \bea\label{amsqubitprojective}
      &&\mathcal{AMS}\left[\left\{\ket{\psi_x}\right\}_x,\left\{p_x\right\}_x\right]
= 1-\min_{x,x^\prime,x\neq x^\prime}\left\{\frac12 \left( (p_x+p_{x^\prime}) - \sqrt{(p_x+p_{x^\prime})^2-4p_xp_{x^\prime}|\la \psi_x|\psi_{x^\prime}\ra |^2} \right) \right\}.
\eea
\end{thm}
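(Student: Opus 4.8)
The plan is to reduce both identities to the two--pure--state formula \eqref{DSpsi12} (equivalently \eqref{pA}) evaluated on the best pair of measurement directions, using the single--system expressions \eqref{Drho} and \eqref{AMS} together with Bloch--sphere geometry. A qubit projective measurement $x$ has exactly two POVM elements, $M_{1|x}=\ket{\psi_x}\bra{\psi_x}=\frac12(\I+\hat n_x\cdot\vec\sigma)$ and $M_{2|x}=\I-M_{1|x}=\frac12(\I-\hat n_x\cdot\vec\sigma)$, and writing the probe as $\rho=\frac12(\I+\vec r\cdot\vec\sigma)$ gives $\tr(\rho M_{1|x})=\frac12(1+\vec r\cdot\hat n_x)$ and $\tr(\rho M_{2|x})=\frac12(1-\vec r\cdot\hat n_x)$. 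Substituting into \eqref{Drho} turns $\mathcal{DMS}$ into an optimization over the Bloch ball of a sum of two terms, each of the form $\max_x\{p_x\,\frac12(1\pm\vec r\cdot\hat n_x)\}$, and similarly for $\mathcal{AMS}$ via \eqref{AMS}.

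First I would argue that a pure probe suffices. Since a $\max_x$ of affine functions of $\vec r$ is convex, the objective in \eqref{Drho} is convex in $\rho$ and is therefore maximized at an extreme point of the Bloch ball, i.e. $|\vec r|=1$; dually, the objective in \eqref{AMS} is a sum of $\min_x$ of affine functions and hence concave, so its minimum is also attained at $|\vec r|=1$. Thus both optimizations run over pure states.

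Next, for a fixed choice of the labels $x_1,x_2$ selected by the two outer optima, the relevant expression is affine in $\vec r$, so its optimum over the unit sphere is $\frac12[(p_{x_1}+p_{x_2})\pm|p_{x_1}\hat n_{x_1}-p_{x_2}\hat n_{x_2}|]$, attained at $\vec r$ (anti)parallel to $p_{x_1}\hat n_{x_1}-p_{x_2}\hat n_{x_2}$. Using $\hat n_{x_1}\cdot\hat n_{x_2}=2|\la\psi_{x_1}|\psi_{x_2}\ra|^2-1$ one finds $|p_{x_1}\hat n_{x_1}-p_{x_2}\hat n_{x_2}|=\sqrt{(p_{x_1}+p_{x_2})^2-4p_{x_1}p_{x_2}|\la\psi_{x_1}|\psi_{x_2}\ra|^2}$, which is exactly the square root in \eqref{DSpsi12}. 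A two--sided bound then closes the proof. For the lower bound of $\mathcal{DMS}$ I fix any distinct pair $(x,x')$, take $\vec r$ parallel to $p_x\hat n_x-p_{x'}\hat n_{x'}$, and bound the first $\max_x$ below by its $x$--term and the second by its $x'$--term, recovering the pair's value; maximizing over pairs yields $\mathcal{DMS}\ge\max_{x\neq x'}(\cdots)$. For the upper bound I evaluate at the optimal probe, freeze the two selected labels, and re--optimize $\vec r$, which can only raise the value to that of the selected pair, giving $\mathcal{DMS}\le\max_{x\neq x'}(\cdots)$. The antidistinguishability claim is identical with $\max\leftrightarrow\min$ and \eqref{pA} in place of \eqref{DSpsi12}.

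The step I expect to be delicate is the case in which a single measurement $x_0$ is selected for \emph{both} outcomes, so that the frozen ``pair'' is degenerate and the value collapses to $p_{x_0}$, which is not of the stated pairwise form. Here one shows $x_0$ must be the extremal--probability measurement ($\arg\max_x p_x$ for $\mathcal{DMS}$, $\arg\min_x p_x$ for $\mathcal{AMS}$, obtained by adding the two selection inequalities), and that for every $x'$ the two--state value of the pair $(x_0,x')$ already dominates (resp. is dominated by) $p_{x_0}$: since $s=|\la\psi_{x_0}|\psi_{x'}\ra|^2\le1$, the quantity $\frac12[(p_{x_0}+p_{x'})\pm\sqrt{(p_{x_0}+p_{x'})^2-4p_{x_0}p_{x'}s}]$ is forced to lie on the correct side of $p_{x_0}$. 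Hence the degenerate selection never beats the best distinct pair, and the optimum over distinct pairs is the true value, establishing \eqref{dmsqubitprojective} and \eqref{amsqubitprojective}.
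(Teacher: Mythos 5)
Your proof is correct and follows essentially the same route as the paper's: decompose the two-outcome sum, reduce each fixed pair of labels to the two-state Helstrom value, and show the degenerate same-label selection (value $p_{x_0}$) is always dominated by the best distinct pair. Your explicit Bloch-vector computation of $|p_x\hat n_x-p_{x'}\hat n_{x'}|$ is just the unpacked form of the paper's maximum-eigenvalue expression $\|p_xM_{1|x}-p_{x'}M_{1|x'}\|$, and your two-sided bounding makes rigorous the interchange of optimizations that the paper performs implicitly.
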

\begin{proof}
    For $x\in \{ 1,\cdots,n \}$ and $a\in \{1,2\}$, \eqref{Drho} reduces to,

\bea \label{dmsbinary}
&& \mathcal{DMS}\left[\left\{F_{a|x}\right\}_{a,x},\left\{p_x\right\}_x\right] \nonumber \\
&=& \max_{\rho}\bigg[ \max\left\{ p_1 \mbox{Tr}\left(\rho M_{1|1} \right) , \cdots, p_x\mbox{Tr}\left(\rho M_{1|x} \right) \right\}  + \max\left\{ p_1 - p_1 \mbox{Tr}\left(\rho M_{1|1} \right),\cdots ,  p_x - p_x \mbox{Tr}\left(\rho M_{1|x} \right) \right\}  \bigg] \nonumber \\
&=& \max\bigg\{\max_x\{p_x\},\max_{x,x^{\prime},x\neq x^{\prime}}\{ p_{x^{\prime}} + \| p_x M_{1|x} - p_{x^\prime} M_{1|x^{\prime}}  \| \}  \bigg\} ,
\eea 
where $x,x^{\prime}\in\{1,\cdots,n\}$, $M_{a|x}=F^\dagger_{a|x}F_{a|x}$ and $||.||$ denotes the maximum eigenvalue.

If they are qubit projective measurements , defined as $F_{1|x} = \ket{\psi_x}\!\bra{\psi_x}$, it further simplifies to ,
\bea \label{dmsbinary1}
\mathcal{DMS}\left[\left\{\ket{\psi_x}\right\},\left\{p_x\right\}\right] &=& \max\left\{ \max_x\left\{ p_x \right\}, \max_{x, x' \, (x \neq x')}\left\{ \mathcal{DS}\left[\left\{\ket{\psi_x}, \ket{\psi_{x'}}\right\}, \left\{p_x, p_{x'}\right\}\right] \right\} \right\}\nonumber\\
&=& \max\left\{\max_x\left\{p_x\right\},\max_{x,x^\prime,x\neq x^\prime} \left\{ \frac12 \left( (p_x+p_{x^\prime}) + \sqrt{(p_x+p_{x^\prime})^2-4p_xp_{x^\prime}|\la \psi_x|\psi_{x^\prime}\ra |^2} \right) \right\}\right\}.\nonumber\\
\eea
Similarly, \eqref{AMS} becomes,

\bea \label{amsbinary}
&& \mathcal{AMS}\left[\left\{F_{a|x}\right\},\left\{p_x\right\}_x\right] \nonumber \\
&=& 1-\min_{\rho}\left[ \min\left\{ p_1 \mbox{Tr}\left(\rho M_{1|1} \right) , \cdots, p_x\mbox{Tr}\left(\rho M_{1|x} \right) \right\}  + \min\left\{ p_1 - p_1 \mbox{Tr}\left(\rho M_{1|1} \right),\cdots ,  p_x - p_x \mbox{Tr}\left(\rho M_{1|x} \right) \right\}  \right] \nonumber \\
&=&1- \min\left\{\min_x\{p_x\},\min_{x,x^\prime,x\neq x^\prime}\left\{ p_{x^\prime} + \| p_x M_{1|x} - p_{x^\prime} M_{1|x^\prime}  \|_\bigstar \right\}  \right\} .
\eea
Here $||.||_\bigstar$ denotes the minimum eigenvalue and $M_{a|x}=F_{a|x}^\dagger F_{a|x}$.
If the set of measurements are qubit projective measurements, then
\bea \label{amsbinary1}
&&\mathcal{AMS}\left[\left\{F_{a|x}\right\},\left\{p_x\right\}_x\right]\nonumber\\
&=& 1-\min\left\{\min_x\left\{p_x\right\},\min_{x,x^\prime,x\neq x^\prime}\left\{\frac12 \left( (p_x+p_{x^\prime}) - \sqrt{(p_x+p_{x^\prime})^2-4p_xp_{x^\prime}|\la \psi_x|\psi_{x^\prime}\ra |^2} \right) \right\}\right\}.
\eea
For \eqref{dmsqubitprojective}, we can always exclude first max term inside the second bracket as the second max term is always greater than the first one. Similarly, for \eqref{amsqubitprojective}, we exclude the first min term inside the second bracket. Thus, from \eqref{dmsbinary1} and \eqref{amsbinary1}, we arrive at \eqref{dmsqubitprojective} and \eqref{amsqubitprojective} respectively.
\end{proof}
\end{widetext}

For the special case where the measurements are sampled from an equal probability distribution, $p_x=\frac1n, \forall x$, 
\bea\label{dmsproj}
\mathcal{DMS}&=& \max_{x,x^\prime}\left\{\left( \frac1n + \frac1n\sqrt{1-|\la \psi_x|\psi_{x^\prime}\ra |^2} \right) \right\},
\eea
\bea\label{amsprojective}
\mathcal{AMS}&=& 1-\min_{x,x^\prime,x\neq x^\prime}\left\{ \frac1n - \frac1n\sqrt{1-|\la \psi_x|\psi_{x^\prime}\ra |^2}  \right\}.
\eea

It is easy to see that $\mathcal{AMS}\geqslant \mathcal{DMS}$ for qubit projective measurements in general.
From \eqref{dmsproj}, it is evident that two different qubit projective measurements can never be distinguished in this scenario. However, consider the following two qutrit projective measurements such that $\{F_{a|1}\}= \{\ket{0}\bra{0},\ket{1}\bra{1}, \ket{2}\bra{2}$\} and $\{F_{a|2}\}=\{\frac12\ket{1+2}\bra{1+2},\ket{0}\bra{0},\frac12\ket{1-2}{\bra{1-2}}$\}. If we choose $\ket{0}$ as the best initial state, then $\mathcal{DMS}=1.$

Now, we consider the access of the post-measurement state, which is the case of $\mathcal{D\overline{M}S}$, and we take any pair of qubit projective measurements to find the closed form of the distinguishing probability in this scenario.
\begin{thm}\label{q2proj}
    For two qubit projective measurements defined as $\{F_{1|x}\}=\{\ket{\psi_x}\bra{\psi_x}\}_{x=1}^2$ and sampled from equal probability distribution, 
    \be \mathcal{D\overline{M}S}=\frac12+\frac12\sqrt{1-|\la \psi_1|\psi_2\ra|^4}.
    \ee 
\end{thm}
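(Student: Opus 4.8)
The plan is to evaluate the master formula \eqref{DIS} for $\mathcal{D\overline{M}S}$ directly. The crucial simplification is that each $F_{1|x}=\ket{\psi_x}\bra{\psi_x}$ is a rank-one projector, so measurement $x$ is the two-outcome projective measurement with elements $\ket{\psi_x}\bra{\psi_x}$ and $\ket{\psi_x^\perp}\bra{\psi_x^\perp}$, and the post-measurement states are $\rho_{1|x}=\ket{\psi_x}\bra{\psi_x}$ and $\rho_{2|x}=\ket{\psi_x^\perp}\bra{\psi_x^\perp}$, \emph{independent} of the probe $\rho$. Thus only the weights $p_xp(a|x,\rho)=\tfrac12\tr(\rho M_{a|x})$ carry the $\rho$-dependence. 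Writing $s_1=\bra{\psi_1}\rho\ket{\psi_1}$ and $s_2=\bra{\psi_2}\rho\ket{\psi_2}$, the $a=1$ term of \eqref{DIS} is the $\mathcal{DS}$ of the pair $\{\ket{\psi_1},\ket{\psi_2}\}$ with weights $\{\tfrac12 s_1,\tfrac12 s_2\}$, while the $a=2$ term is the $\mathcal{DS}$ of $\{\ket{\psi_1^\perp},\ket{\psi_2^\perp}\}$ with weights $\{\tfrac12(1-s_1),\tfrac12(1-s_2)\}$. Both pairs share the same overlap $c:=|\la\psi_1|\psi_2\ra|^2=|\la\psi_1^\perp|\psi_2^\perp\ra|^2$.

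Next I would substitute into the two-pure-state formula \eqref{DSpsi12}. The linear pieces of the two terms cancel and the objective collapses to $\mathcal{D\overline{M}S}=\tfrac12+\tfrac14\max_\rho g$, where $g=\sqrt{u^2(1-c)+cv^2}+\sqrt{(2-u)^2(1-c)+cv^2}$ with $u=s_1+s_2$ and $v=s_1-s_2$ (using $s_1s_2=(u^2-v^2)/4$). I would then parametrize $\rho$ by its Bloch vector $\vec r$ and the states by Bloch vectors $\hat n_1,\hat n_2$ with $\hat n_1\cdot\hat n_2=2c-1$; setting $\alpha$ by $c=\cos^2\alpha$ and choosing axes symmetric about $\hat n_1,\hat n_2$ gives $u=1+\cos\alpha\,r_z$ and $v=\sin\alpha\,r_x$. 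The component of $\vec r$ orthogonal to the plane of $\hat n_1,\hat n_2$ does not enter $g$, so it is set to zero, and the density-matrix constraint $|\vec r|\le1$ becomes the ellipse $(u-1)^2/\cos^2\alpha+v^2/\sin^2\alpha\le1$.

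The clean way to finish is a geometric reframing: the substitution $Q=(u\sin\alpha,\,v\cos\alpha)$ turns $g$ into $|Q-A|+|Q-B|$ with foci $A=(0,0)$ and $B=(2\sin\alpha,0)$, while the feasible set for $Q$ becomes the disk centered at the midpoint $C=(\sin\alpha,0)$ with radius $\sin\alpha\cos\alpha$. Since $Q\mapsto|Q-A|+|Q-B|$ is convex, its maximum over the disk is attained on the bounding circle; parametrizing that circle and writing $|Q-A|^2=P+D$ and $|Q-B|^2=P-D$ with $P=\sin^2\alpha+\sin^2\alpha\cos^2\alpha$ constant and $D\propto\cos\beta$, one gets $g^2=2P+2\sqrt{P^2-D^2}$, maximized at $D=0$, i.e. at the top/bottom of the disk ($u=1,\ v=\pm\sin\alpha$). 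This yields $g_{\max}=2\sin\alpha\sqrt{1+\cos^2\alpha}=2\sqrt{1-c^2}$, hence $\mathcal{D\overline{M}S}=\tfrac12+\tfrac12\sqrt{1-c^2}=\tfrac12+\tfrac12\sqrt{1-|\la\psi_1|\psi_2\ra|^4}$.

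The main obstacle is precisely this last optimization: it is tempting but not sufficient to check only the symmetric slice $s_1+s_2=1$, so one must rule out larger values elsewhere in the feasible region. The convexity argument (reducing the maximum to the boundary circle) together with the identity $g^2=2P+2\sqrt{P^2-D^2}$ is what makes this rigorous and pins down the optimizer exactly; a final remark that the optimal $(u,v)$ corresponds to an admissible (pure) probe state $\rho$ confirms the bound is achieved and closes the argument.
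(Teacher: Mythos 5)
Your proposal is correct, and its skeleton is the same as the paper's: both reduce \eqref{DIS} to a sum of two two-pure-state Helstrom terms via \eqref{DSpsi12} (using that the post-measurement states of a rank-one projective measurement are the fixed projectors $\ket{\psi_x}\bra{\psi_x}$, $\ket{\psi_x^\perp}\bra{\psi_x^\perp}$, with only the weights depending on the probe), and then optimize over the probe state. Where you genuinely diverge is the final optimization, and your treatment is the stronger one. The paper parametrizes a \emph{pure} probe by angles $(\alpha,\beta)$ and simply asserts that the maximum of \eqref{DISprojective1} occurs at $(\alpha,\beta)=(\tfrac{\pi}{2}-\tfrac{\theta}{2},\pi)$, without justifying either the restriction to pure states or the location of the optimum; you instead work over the full Bloch ball, observe that the objective depends only on $(s_1+s_2,s_1-s_2)$, map the feasible set to a disk on which the objective becomes a sum of distances to two foci, and use convexity plus the identity $g^2=2P+2\sqrt{P^2-D^2}$ to locate the maximizer exactly and verify it is an admissible pure state. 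This buys a fully rigorous proof of the step the paper glosses over (and, as a byproduct, shows mixed probes cannot do better); the paper's route is shorter but leaves that maximization as an unverified claim. All intermediate identities in your argument check out, including $4(1-s_1)(1-s_2)=(2-u)^2-v^2$, the equality of overlaps $|\la\psi_1^\perp|\psi_2^\perp\ra|=|\la\psi_1|\psi_2\ra|$ in dimension two, and $g_{\max}=2\sqrt{1-c^2}$.
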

\begin{proof}
    Without loss of generality, we can apply the same unitary $U$ to both sides of two qubit projective measurements. We choose this $U$ such that $U\ket{\psi_1}=\ket{0}$ and $U\ket{\psi_2}=\ket{\phi}$. So the transformed measurements are: $F_{1|1} = \ket{0}\!\bra{0}$ and $F_{1|2} = \ket{\phi}\!\bra{\phi}$ with uniform distribution, $p_1=p_2=1/2$, where $\ket{\phi}=\cos(\frac{\theta}{2})\ket{0}+\sin(\frac{\theta}{2})\ket{1}$. The initial qubit state is taken as  $\rho=\ket{\psi}\bra{\psi}$ where $\ket{\psi}=\cos(\frac{\alpha}{2})\ket{0}+e^{i\beta}\sin(\frac{\alpha}{2})\ket{1}$ and this $\ket{\psi}$ is needed to be optimized. For these two measurements with this initial state, \eqref{DIS} will be,\\
\bea\label{DISprojective}
\mathcal{D\overline{M}S} &=& \frac{1}{2} \max_{\rho} \left[ \mathcal{DS}\Big(\Big\{\ket{0}, \ket{\phi}\Big\}, \left\{q_1, q_2\right\}\Big) \right. \nonumber \\
&& \left. + \mathcal{DS}\left(\left\{\ket{1}, \ket{\phi^{\perp}}\right\}, \left\{q_3, q_4\right\}\right) \right],
\eea
 where,  
 \bea
&& q_1=\cos^2{\frac{\alpha}{2}},\nonumber\\
&& q_2=\cos^2{\frac{\alpha}{2}}\cos^2{\frac{\theta}{2}}+\sin^2{\frac{\alpha}{2}}\sin^2{\frac{\theta}{2}}+\frac{1}{2}\sin{\alpha}\sin{\theta}\cos{\beta},\nonumber\\
&& q_3=\sin^2{\frac{\alpha}{2}},\nonumber\\
&& q_4=\cos^2{\frac{\alpha}{2}}\sin^2{\frac{\theta}{2}}+\sin^2{\frac{\alpha}{2}}\cos^2{\frac{\theta}{2}}-\frac{1}{2}\sin{\alpha}\sin{\theta}\cos{\beta}.\nonumber
 \eea
\\
Using \eqref{DSpsi12}, \eqref{DISprojective} reduces to,  

\bea\label{DISprojective1}
\mathcal{D\overline{M}S} &=& \frac{1}{2} + \frac{1}{4} \left[ \left\{ \left( q_1 + q_2 \right)^2 - 4q_1 q_2 \cos^2\left( \frac{\theta}{2} \right) \right\}^{\frac{1}{2}} \right. \nonumber \\
&& \left. + \left\{ \left( q_3 + q_4 \right)^2 - 4q_3 q_4 \cos^2\left( \frac{\theta}{2} \right) \right\}^{\frac{1}{2}} \right].
\eea

Maximizing this for $\alpha$ and $\beta$, we get the maximum value of \eqref{DISprojective1} at $(\alpha,\beta)=(\frac{\pi}{2}-\frac{\theta}{2},\pi)$ and $(\frac{\theta}{2}-\frac{\pi}{2},0)$. Putting these values of $\alpha$ and $\beta$ into \eqref{DISprojective1}, we attain at,
\bea\label{DISclosed}
\mathcal{D\overline{M}S}=\frac12+\frac12\sqrt{1-|\la 0|\phi\ra|^4},
\eea
which is nothing but $\mathcal{D\overline{M}S}=\frac12+\frac12\sqrt{1-|\la 0|U^\dagger U|\phi\ra|^4}$ and finally with our definition of $U$, we prove theorem \ref{q2proj}.
\end{proof}

The value of $\mathcal{A\overline{M}S}$ for two qubit projective measurements is the same as that of $\mathcal{D\overline{M}S}$ as it eventually encounters the antidistinguishability of two states, and that is identical with the distinguishability of two states. Theorem \ref{q2proj} tells us that the value of $\mathcal{D\overline{M}S}$ is always less than $1$ for two qubit projective measurements unless we deal with two same measurements with outcome label being opposite. So, it would be a nice venture to find out the conditions for which $\mathcal{D\overline{M}S (A\overline{M}S)} = 1$ for any set of measurements.
\begin{thm}\label{th3}
    $\mathcal{D\overline{M}S}$ ($\mathcal{A\overline{M}S})=1$ for any set of $n$ measurements if and only if there exists an initial state $\rho$ such that for every outcome $a$, one of the conditions holds:\\
    $(i)$ the set of post-measurement state $\{{\rho}_{a|x}\}_x$, where $\tr( \rho M_{a|x}) \neq 0$, are distinguishable (antidistinguishable). \\
    $(ii)$ for all $x$, $\tr( \rho M_{a|x}) = 0$.
\end{thm}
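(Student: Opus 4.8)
The plan is to read the result off directly from the state-distinguishability reformulation \eqref{DIS}, combined with the elementary saturation bound on $\mathcal{DS}$. First I would note that the total weight appearing in \eqref{DIS} is fixed at $1$:
\be
\sum_a \sum_x p_x\, p(a|x,\rho) = \sum_x p_x \sum_a p(a|x,\rho) = \sum_x p_x = 1 ,
\ee
using $\sum_a p(a|x,\rho)=1$ for each $x$ together with $\sum_x p_x =1$. Since the paper has already noted that $\mathcal{DS}[\{\rho_k\}_k,\{q_k\}_k]\leq \sum_k q_k$, with equality exactly when the states are perfectly distinguishable, every summand in \eqref{DIS} obeys $\mathcal{DS}[\{\rho_{a|x}\}_x,\{p_x p(a|x,\rho)\}_x]\leq \sum_x p_x p(a|x,\rho)$. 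Summing over $a$ and using the identity above gives the universal upper bound $\mathcal{D\overline{M}S}\leq 1$ for every $\rho$, hence also after the maximization.

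Next I would characterise when this bound is attained. For a fixed $\rho$, the sum $\sum_a \mathcal{DS}[\cdots]$ equals $1$ if and only if every summand individually saturates its bound, i.e.\ for each outcome $a$ one needs $\mathcal{DS}[\{\rho_{a|x}\}_x,\{p_x p(a|x,\rho)\}_x]=\sum_x p_x p(a|x,\rho)$. By the saturation condition for $\mathcal{DS}$ this holds precisely when the post-measurement states $\{\rho_{a|x}\}_x$ of nonzero weight are perfectly distinguishable. A weight $p_x p(a|x,\rho)$ vanishes exactly when $\tr(\rho M_{a|x})=0$ (as $p_x>0$), in which case the corresponding $\rho_{a|x}$ is undefined and drops out of the distinguishing problem altogether; this is why condition $(i)$ is restricted to the indices with $\tr(\rho M_{a|x})\neq 0$.

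The step I expect to need the most care is the degenerate outcome where the weight vanishes for all $x$ simultaneously, i.e.\ $\tr(\rho M_{a|x})=0$ for every $x$ (condition $(ii)$). For such an $a$ the bound $\sum_x p_x p(a|x,\rho)$ is itself $0$ and the $\mathcal{DS}$ term is the distinguishability of an empty family, which is $0$; the summand is then vacuously maximal and contributes nothing to either side of the inequality. With this case isolated, both directions follow. If $\mathcal{D\overline{M}S}=1$, a maximizer $\rho$ exists by compactness of the state space (the objective depends on $\rho$ only through the continuous operators $p_x F_{a|x}\rho F^\dagger_{a|x}$), and each outcome $a$ must then fall under $(i)$ or $(ii)$. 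Conversely, if some $\rho$ places every outcome under $(i)$ or $(ii)$, every summand is maximal, the sum equals $1$, and since $\mathcal{D\overline{M}S}\leq 1$ always, equality holds.

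Finally, the antidistinguishability claim follows by the identical argument applied to \eqref{AbarMS}, replacing $\mathcal{DS}$ by $\mathcal{AS}$: the total weight is again $1$, and $\mathcal{AS}$ attains its total-weight maximum exactly when the relevant states are perfectly antidistinguishable (cf.\ \eqref{ADSpsi123}), so the same saturation analysis yields the stated equivalence with ``distinguishable'' replaced by ``antidistinguishable''.
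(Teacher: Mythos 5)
Your proposal is correct and follows essentially the same route as the paper's proof: bound each $\mathcal{DS}$ (resp. $\mathcal{AS}$) term in \eqref{DIS} (resp. \eqref{AbarMS}) by its total weight $\sum_x p_x p(a|x,\rho)$, sum over $a$ to get the universal bound $1$, and characterize saturation by perfect distinguishability (antidistinguishability) of the nonzero-weight post-measurement states, with the all-zero-weight outcome treated as vacuous. Your treatment of the degenerate case and the existence of a maximizer is slightly more explicit than the paper's, but the argument is the same.
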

\begin{proof}
Taking into account that the maximum value of the general form of distinguishability \eqref{pD} (or antidistinguishability \eqref{pA}) is given by $\sum_k q_k$, the expressions of  $\mathcal{D\overline{M}S}$ ($\mathcal{A\overline{M}S})$ in Eq.~\eqref{DIS} (Eq.~\eqref{AbarMS}) satisfies the following inequality:
\bea
\mathcal{D\overline{M}S} (\mathcal{A\overline{M}S})
&\leq & \max_{\rho} \left(\sum_{a}\sum_x p_x p(a|x,\rho) \right)\nonumber \\
&=& \max_{\rho} \left( \sum_x p_x \left( \sum_{a} p(a|x,\rho) \right) \right)\nonumber \\
&=& 1.
\eea
Moreover, this inequality becomes equality if and only if for all $a$,
\bea \label{P11}
& \mathcal{DS}\left[\left\{{\rho}_{a|x}\right\}_x, \left\{p_x p(a|x,\rho)\right\}_x\right] = \sum_x p_x p(a|x,\rho); \nonumber \\
& \mathcal{AS}\left[\left\{{\rho}_{a|x}\right\}_x, \left\{p_x p(a|x,\rho)\right\}_x\right] = \sum_x p_x p(a|x,\rho) .
\eea
In other words, $\mathcal{D\overline{M}S} (\mathcal{A\overline{M}S})$, reaches the value 1 whenever the set of post-measurement states $\{\rho_{a|x}\}_x$ such that $p(a|x,\rho) = \tr(\rho M_{a|x})\neq0$ is perfectly distinguishable (antidistinguishable). The conditions \eqref{P11} becomes trivial when for all $x$, $p(a|x,\rho) = \tr(\rho M_{a|x}) = 0$.
Henceforth, theorem \ref{th3} is proved.
\end{proof}

We move into the scenarios that allow the initial state to be entangled. Firstly, we have to find the sufficient dimension of the entangled state at Bob's side to have the highest values of $\mathcal{DME}, \mathcal{AME}, \mathcal{D\overline{M}E}$ and $\mathcal{A\overline{M}E}$. The dimension of the state of Alice is trivially the same as the dimension of measurement, but the dimension of the state of Bob cannot be trivially culminated.

\begin{thm}\label{th2}
    For qudit measurements, it is sufficient to consider qudit-qudit entangled states in order to find $\mathcal{DME}, \mathcal{AME}, \mathcal{D\overline{M}E}$ and $\mathcal{A\overline{M}E}$.
\end{thm}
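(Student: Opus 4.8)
The plan is to establish the claimed equality of optimal values in both directions. One direction is immediate: a qudit-qudit strategy is a special case of a strategy with a higher-dimensional Bob, so the optimum over arbitrary $\dim B$ is at least the qudit-qudit optimum. The real content is the reverse inequality, that enlarging Bob's system beyond dimension $d$ cannot help.

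First I would reduce the optimization to \emph{pure} initial states. For each of the four quantities \eqref{DME}, \eqref{AME}, \eqref{DIE}, \eqref{AbarME}, once Bob's measurement $\{N_{b|a}\}$ is fixed, the objective is a linear functional of $\rho^{AB}$; this is transparent from the pre-simplified forms \eqref{DME1} and \eqref{DIE}, where $\rho^{AB}$ (or $(F_{a|x}\otimes\I)\rho^{AB}(F^\dagger_{a|x}\otimes\I)$) appears linearly inside a trace, and the antidistinguishability versions follow through the $(\I-N_{b=x|a})$ rewriting. Since the density matrices form a convex set whose extreme points are pure states, the maximum over $\rho^{AB}$ is attained at some pure $\ket{\psi}^{AB}$. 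Hence it suffices to bound the required dimension of Bob's factor for pure initial states.

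Next I would invoke the Schmidt decomposition. Write $\ket{\psi}^{AB}=\sum_{i=1}^{r}\sqrt{\lambda_i}\,\ket{e_i}_A\ket{f_i}_B$ with Schmidt rank $r\le\min(d,\dim B)\le d$, using that $\dim A=d$. Let $\mathcal{S}_B=\mathrm{span}\{\ket{f_i}\}_{i=1}^{r}$, of dimension $r\le d$. The crucial observation is that Alice's action preserves this subspace on Bob's side: $(F_{a|x}\otimes\I)\ket{\psi}=\sum_i\sqrt{\lambda_i}(F_{a|x}\ket{e_i})_A\ket{f_i}_B$ has its $B$-component entirely in $\mathcal{S}_B$. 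Thus in the scenarios without the post-measurement state the reduced states $\rho^B_{a|x}$ of \eqref{rhoBax} are supported on $\mathcal{S}_B$, and in the scenarios with the post-measurement state the joint states $\rho^{AB}_{a|x}$ of \eqref{rhoABax} are supported on $\mathcal{H}_A\otimes\mathcal{S}_B$; in either case Bob's relevant Hilbert space has dimension $r\le d$.

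Finally I would embed $\mathcal{S}_B$ isometrically into a fixed $d$-dimensional space, relabelling the $\ket{f_i}$ as the first $r$ vectors of a qudit basis; the state, the states $\rho^B_{a|x}$ or $\rho^{AB}_{a|x}$, and Bob's optimal measurements $\{N_{b|a}\}$ (which may be taken to act arbitrarily outside the support of these states) all transfer to this qudit without altering any trace in the objective. This exhibits a qudit-qudit strategy attaining the same value, giving the reverse inequality and hence the theorem for all four quantities at once. I expect the main obstacle to be the post-measurement scenarios: there Bob holds the full joint state $\rho^{AB}_{a|x}$ on the $d\times r$-dimensional space $\mathcal{H}_A\otimes\mathcal{S}_B$, so one must carefully separate the dimension of \emph{Bob's factor} $B$, which the theorem constrains and which is $r\le d$, from the dimension of the composite system on which his measurement actually acts.
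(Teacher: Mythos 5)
Your proposal is correct and follows essentially the same route as the paper: Schmidt decomposition of the optimal pure state shows that Bob's side effectively lives in an at most $d$-dimensional subspace, so any larger ancilla is redundant. You additionally justify the restriction to pure initial states via linearity of the objective in $\rho^{AB}$ for fixed $\{N_{b|a}\}$, and you explicitly carry the argument through the post-measurement scenarios where Bob measures on $\mathcal{H}_A\otimes\mathcal{S}_B$ --- both points the paper's proof takes for granted --- which makes your version more complete without changing the underlying idea.
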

\begin{proof}
     For $\mathcal{DME},\mathcal{AME},\mathcal{D\overline{M}E}$ and $\mathcal{A\overline{M}E}$ of any two qudit measurements, without loss of generality, we can take the best possible entangled state $\ket{\psi}^{AB}\in\mathcal{C}^d\otimes\mathcal{C}^{d'}$. By Schmidt decomposition, we can always write the state $\ket{\psi}^{AB}=\sum_{i=1}^{d} C_i\ket{\eta_i}\ket{\chi_i}$, where $d'\geqslant d$. 
    From the construction of our initial state, clearly, it is enough to take $d=d'$ as any measurement on Bob's reduced state essentially acts on the $d$-dimensional subspace spanned by $\{\ket{\chi_i}\}_{i=1}^d$. Extra $(d'-d)$ number of bases are redundant. The other case, when $d'<d$, is not considerable because we can not write the states with all the bases of $\ket{\eta_i}$. To make this happen, we need to take at least $d=d'$.
    Thus we become sure about the sufficiency of qudit-qudit entangled state for $\mathcal{DME},\mathcal{AME},\mathcal{D\overline{M}E}$ and $\mathcal{A\overline{M}E}$, that is theorem \ref{th2}.
\end{proof}
 We first discuss about $\mathcal{DME}$ with the initial state being a maximally entangled state. A maximally entangled state has a special feature such that it can be written on any basis. Bob can take any basis in his part of the entangled state because the measurement he needs to choose to distinguish the reduced state is in his control. It is an equivalent operation of acting an unitary on the reduced state on Bob's side after Alice's measurement. But, for Alice's side, is there any preferred basis so that the value of $\mathcal{DME}$ is the maximum? We probe this answer for any two general rank-one qubit measurements.
\begin{thm}\label{equientangled} 
For any two general rank-one qubit measurements, all the maximally entangled states are equivalent for evaluation of $\mathcal{DME}$. 
\end{thm}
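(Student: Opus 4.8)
The plan is to exploit the fact that every two-qubit maximally entangled state is related to the canonical state $\ket{\Phi^+}=\tfrac{1}{\sqrt2}(\ket{00}+\ket{11})$ by a \emph{single} local unitary on Bob's side, and then to track how that unitary propagates through the reduced post-measurement states appearing in \eqref{DME}. Concretely, I would write an arbitrary maximally entangled state as $\ket{\Psi}=(\I\otimes W)\ket{\Phi^+}$ with $W$ unitary, noting that any local unitary on Alice can be absorbed into $W$ through the transpose identity $(U\otimes\I)\ket{\Phi^+}=(\I\otimes U^{T})\ket{\Phi^+}$. Thus the entire freedom in the choice of maximally entangled input is carried by the single unitary $W$.

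Next I would compute the ingredients of \eqref{DME} for this state. Applying the transpose trick $(F_{a|x}\otimes\I)\ket{\Phi^+}=(\I\otimes F_{a|x}^{T})\ket{\Phi^+}$, the unnormalized reduced state on Bob becomes $\tr_A\!\big[(F_{a|x}\otimes\I)\ket{\Psi}\bra{\Psi}(F_{a|x}^\dagger\otimes\I)\big]=\tfrac12\,W\,M_{a|x}^{T}\,W^\dagger$, where $M_{a|x}=F_{a|x}^\dagger F_{a|x}$ and I used $F^{T}\overline{F}=(F^\dagger F)^{T}=M^{T}$. Since the reduced state of any maximally entangled state on Alice's side is $\I/2$, the outcome probability is $p(a|x,\rho)=\tfrac12\tr(M_{a|x})$, manifestly independent of $W$. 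After normalization (see \eqref{rhoBax}), this gives
\begin{equation}
\rho^{B}_{a|x}=W\,\frac{M_{a|x}^{T}}{\tr(M_{a|x})}\,W^\dagger ,
\end{equation}
so that the only dependence of the family $\{\rho^{B}_{a|x}\}_x$ on the choice of maximally entangled state is one common conjugation by $W$, while the weights $\{p_x\,p(a|x,\rho)\}_x$ do not depend on $W$ at all.

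The final step is to invoke the unitary invariance of the state distinguishability $\mathcal{DS}$: because $\{W^\dagger N_{b|a}W\}$ ranges over all valid POVMs exactly as $\{N_{b|a}\}$ does, one has $\mathcal{DS}[\{W\sigma_x W^\dagger\}_x,\{q_x\}_x]=\mathcal{DS}[\{\sigma_x\}_x,\{q_x\}_x]$ for any common $W$. Applying this to each outcome $a$ in \eqref{DME} eliminates the $W$-dependence term by term, and summing over $a$ shows that $\mathcal{DME}$ evaluated on $\ket{\Psi}$ equals its value on $\ket{\Phi^+}$ regardless of $W$; hence all maximally entangled states are equivalent. For two rank-one qubit measurements this can be made fully explicit: each $M_{a|x}^{T}/\tr(M_{a|x})$ is a pure state whose pairwise overlaps are preserved under the common conjugation by $W$, so the closed-form two-state value \eqref{DSpsi12} returns the same number for every maximally entangled input.

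I expect the only delicate point to be the reduced-state computation itself, namely applying the transpose trick correctly and verifying both $F^{T}\overline{F}=M^{T}$ and the $W$-independence of $p(a|x,\rho)$; once these are in place the conclusion is immediate from the invariance of $\mathcal{DS}$ under a common unitary. It is worth remarking that the argument nowhere uses that the measurements are rank-one or binary, so the same reasoning should extend the equivalence of maximally entangled inputs to arbitrary measurements, with the rank-one qubit case simply admitting the additional explicit check through \eqref{DSpsi12}.
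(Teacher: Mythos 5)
Your proof is correct, but it proceeds quite differently from the paper's. The paper parametrizes a general maximally entangled state explicitly as $\tfrac{1}{\sqrt2}(\ket{\eta}\ket{0}+\ket{\eta^\perp}\ket{1})$ with $\ket{\eta}=\cos(\delta/2)\ket{0}+e^{i\Delta}\sin(\delta/2)\ket{1}$ (arguing separately that Bob's basis freedom is absorbed into his optimized measurement), plugs the two rank-one POVMs into \eqref{DME}, and evaluates each term via the closed-form two-state expression \eqref{DSpsi12}, observing at the end that $\delta$ and $\Delta$ drop out of the resulting overlaps. You instead write every maximally entangled state as $(\I\otimes W)\ket{\Phi^+}$, use the transpose trick to show $\rho^{B}_{a|x}=W M_{a|x}^{T}W^\dagger/\tr(M_{a|x})$ with $W$-independent weights $p_x\tr(M_{a|x})/2$, and conclude by the invariance of $\mathcal{DS}$ under a common unitary conjugation (since $\{W^\dagger N_{b|a}W\}$ ranges over all POVMs). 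Your computation checks out — in particular $F^{T}\overline{F}=(F^\dagger F)^{T}=M^{T}$ and the absorption of Alice-side unitaries via $(U\otimes\I)\ket{\Phi^+}=(\I\otimes U^{T})\ket{\Phi^+}$ are both right — and your closing remark is also accurate: your argument nowhere uses that the measurements are rank-one, binary in number, or even qubit-valued, so it establishes the equivalence of maximally entangled inputs for arbitrary qudit measurements and likewise for $\mathcal{D\overline{M}E}$, which the paper only asserts in passing. What the paper's explicit route buys in exchange is the concrete closed-form value of $\mathcal{DME}$ on a maximally entangled state in terms of the overlaps $|\langle\psi_i|\zeta_i\rangle|$, which it reuses in later theorems; your argument shows equality without producing that number.
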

\begin{proof}
     Let us consider two general rank one n-outcome POVMs such that $A_i=\sqrt{\alpha_i}\ket{\psi_i}\bra{\psi_i}$ and $B_i=\sqrt{\beta_i}\ket{\zeta_i}\bra{\zeta_i}$ where $i \in \{1,\cdots,n\}$, $\ket{\psi_i}=\cos(\frac{\theta_i}{2})\ket{0}+e^{i\Theta_i}\sin(\frac{\theta_i}{2})\ket{1}$ and $\ket{\zeta_i}=\cos(\frac{\omega_i}{2})\ket{0}+e^{i\Omega_i}\sin(\frac{\omega_i}{2})\ket{1}$.  We consider a general maximally entangled state $\ket{\phi}^{AB}=\frac{1}{\sqrt{2}}(\ket{\eta}\ket{0}+\ket{\eta^\perp}\ket{1})$ as the best possible state where $\ket{\eta}=\cos(\frac{\delta}{2})\ket{0}+e^{i\Delta}\sin(\frac{\delta}{2})\ket{1}$.  Using this state and the two measurements, \eqref{DME} appears as,
     \begin{widetext}
    \bea
     \mathcal{DME}&=&\sum_i\mathcal{DS}\Bigg[\Bigg\{\left(\cos\frac{\theta_i}{2}\cos\frac{\delta}{2}+e^{i(\Delta-\Theta_i)}\sin\frac{\theta_i}{2}\sin\frac{\delta}{2}\right)\ket{0}+\left(\cos\frac{\theta_i}{2}\sin\frac{\delta}{2}-e^{i(\Delta-\Theta_i)}\sin\frac{\theta_i}{2}\cos\frac{\delta}{2}\right)\ket{1},\nonumber\\
    &&\left(\cos\frac{\omega_i}{2}\cos\frac{\delta}{2}+e^{i(\Delta-\Omega_i)}\sin\frac{\omega_i}{2}\sin\frac{\delta}{2}\right)\ket{0}+\left(\cos\frac{\omega_i}{2}\sin\frac{\delta}{2}-e^{i(\Delta-\Omega_i)}\sin\frac{\omega_i}{2}\cos\frac{\delta}{2}\right)\ket{1}\Bigg\},
    \left\{\frac{p_1\alpha_i}{2},\frac{(1-p_1)\beta_i}{2}\right\}\Bigg].\nonumber\\
    \eea
     Two measurements $A_i$ and $B_i$ are sampled from the probability distribution $\{p_1,(1-p_1)\}$ respectively. With help of \eqref{DSpsi12}, the above simplifies to,
    \bea
    &&\mathcal{DME}=\left[\frac12\sum_i\left(\frac{p_1\alpha_i}{2}+\frac{(1-p_1)\beta_i}{2}+
    \sqrt{\left(\frac{p_1\alpha_i}{2}+\frac{(1-p_1)\beta_i}{2}\right)^2-\alpha_i\beta_i|\cos\frac{\theta_i}{2}\cos\frac{\omega_i}{2}+e^{i(\Theta_i-\Omega_i)}\sin\frac{\theta_i}{2}\sin\frac{\omega_i}{2}|^2}\right)\right],\nonumber\\
    \eea
    \end{widetext}
    which is not dependant on $\delta$ and $\Delta$, i.e., $\ket{\eta}$. Thus we arrive at theorem \ref{equientangled}.
\end{proof}
The same result also holds for $\mathcal{D\overline{M}E}$.

At this moment, we can compare the value of $\mathcal{D\overline{M}S}$ and $\mathcal{DME}$ for any two qubit projective measurements. For evaluation of $\mathcal{DME}$, we need to choose the best possible initial entangled state. In the next theorem, we first show numerically that the maximally entangled state as the initial entangled state gives the best value of $\mathcal{DME}$ for any pair of qubit projective measurements, then show the superiority of $\mathcal{D\overline{M}S}$ over $\mathcal{DME}$.
\begin{thm}\label{DbarMSgeqDME}
 For any pair of qubit projective measurements sampled from an equal probability distribution, $\mathcal{D\overline{M}S} \geq  \mathcal{DME}$, and equality holds only if the measurements are the same. 
\end{thm}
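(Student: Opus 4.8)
The plan is to reduce both sides to closed forms depending on the single overlap $s := |\la \psi_1|\psi_2\ra|^2 \in [0,1]$ and then compare them algebraically. One side is already available: Theorem \ref{q2proj} gives $\mathcal{D\overline{M}S} = \tfrac12 + \tfrac12\sqrt{1-s^2}$. So the real work is to evaluate $\mathcal{DME}$ for a pair of qubit projective measurements $\{\ket{\psi_x}\bra{\psi_x},\ket{\psi_x^\perp}\bra{\psi_x^\perp}\}$, $x=1,2$, sampled with $p_1=p_2=\tfrac12$.

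To compute $\mathcal{DME}$, I would first invoke the numerical claim stated just before the theorem, namely that the optimum over initial entangled states is attained on a maximally entangled state, together with Theorem \ref{equientangled}, which guarantees that every maximally entangled state performs equally well. This lets me fix one convenient choice, say $\ket{\Phi}^{AB}=\tfrac{1}{\sqrt2}(\ket{00}+\ket{11})$. When Alice applies one of the projective measurements on her half, each outcome occurs with probability $\tfrac12$ (her reduced state is maximally mixed) and steers Bob's system to the conjugate state $\ket{\psi_x^*}$ (respectively $\ket{\psi_x^{\perp*}}$). Substituting into \eqref{DME} with $p_x=\tfrac12$, the coefficients become $p_x\,p(a|x,\rho)=\tfrac14$, and the relevant overlaps satisfy $|\la\psi_1^*|\psi_2^*\ra|^2=|\la\psi_1^\perp|\psi_2^\perp\ra|^2=s$, since complex conjugation preserves the modulus of inner products and, for qubits, orthogonal complements inherit the same overlap. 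Then \eqref{DSpsi12} gives, for each of the two outcomes, $\mathcal{DS}[\cdot,\{\tfrac14,\tfrac14\}]=\tfrac14+\tfrac14\sqrt{1-s}$, whence $\mathcal{DME}=\tfrac12+\tfrac12\sqrt{1-s}$.

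The comparison is then immediate: $\mathcal{D\overline{M}S}\ge\mathcal{DME}$ is equivalent to $\sqrt{1-s^2}\ge\sqrt{1-s}$, i.e. $s^2\le s$, which holds for all $s\in[0,1]$. Equality occurs precisely at $s\in\{0,1\}$. The case $s=1$ means $\ket{\psi_1}$ and $\ket{\psi_2}$ coincide up to a phase, while $s=0$ means $\ket{\psi_2}=\ket{\psi_1^\perp}$; in both situations the two projective measurements are identical (in the latter, after relabelling the outcomes), matching the claimed equality condition.

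The main obstacle is the step where the maximally entangled state is asserted to be optimal over all initial entangled states for $\mathcal{DME}$; the paper establishes this only numerically, and everything downstream is a one-line algebraic estimate. A fully analytic proof would require ruling out that some non-maximally entangled state does strictly better, and this is genuinely delicate: the paper elsewhere exhibits non-projective qubit POVMs for which a non-maximally entangled state is strictly superior. Hence the projective structure must be used in an essential way exactly at this point, and that is where I would expect the difficulty to concentrate.
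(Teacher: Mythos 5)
Your proposal is correct and follows essentially the same route as the paper: both rest on the (only numerically established) optimality of the maximally entangled state for $\mathcal{DME}$, reduce $\mathcal{DME}$ to $\tfrac12+\tfrac12\sqrt{1-s}$ versus $\mathcal{D\overline{M}S}=\tfrac12+\tfrac12\sqrt{1-s^2}$ from Theorem \ref{q2proj}, and conclude via $s^2\le s$. The gap you flag — that optimality of the maximally entangled state is not proved analytically — is present in the paper's own proof as well, so your argument is no weaker than theirs; your steering-based evaluation of $\mathcal{DME}$ on the Bell state is just a cleaner shortcut past their explicit Schmidt-parametrized computation.
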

\begin{proof}
  Without loss of generality, we can take two qubit projective measurements defined by $F_{1|1}=\ket{0}\bra{0}$ and $F_{1|2}=\ket{\phi}\bra{\phi}$ where $\ket\phi=\cos{\frac{\theta}{2}}\ket{0}+\sin{\frac{\theta}{2}}\ket{1}$. Alice and Bob share a bipartite entangled state $\ket{\zeta}^{AB}=\sqrt{p}\ket{\eta}\ket{0}+\sqrt{1-p}\ket{\eta^{\perp}}\ket{1}$ initially, where $\ket{\eta}=\cos\frac{\delta}{2}\ket{0}+e^{i\Delta}\sin\frac{\delta}{2}\ket{1}$. For these two measurements with the initial state $\ket{\zeta}^{AB}$, \eqref{DMEequaln} becomes,
     \begin{widetext}
     \bea\label{dmenumeric}
\mathcal{DME} &=& \frac{1}{2} \left( \mathcal{DS}\left[\left\{\frac{1}{\sqrt{q_1}}\left(\sqrt{p}\cos\frac{\delta}{2}\ket{0} + \sqrt{1-p}\sin\frac{\delta}{2}\ket{1}\right), \frac{1}{\sqrt{q_2}}\left(\sqrt{p}\left(\cos\frac{\theta}{2}\cos\frac{\delta}{2} + e^{i\Delta}\sin\frac{\theta}{2}\sin\frac{\delta}{2}\right)\ket{0}  \right. \right. \right.\right. \nonumber \\
&& \left.\left.\left. + \sqrt{1-p}\left(\cos\frac{\theta}{2}\sin\frac{\delta}{2} - e^{i\Delta}\sin\frac{\theta}{2}\cos\frac{\delta}{2}\right)\ket{1}\right)\right\}, \left\{q_1, q_2\right\}\right] 
 + \mathcal{DS}\left[\left\{\frac{1}{\sqrt{q_3}}\left(\sqrt{p}\sin\frac{\delta}{2}\ket{0} - \sqrt{1-p}\cos\frac{\delta}{2}\ket{1}\right), \right. \right. \nonumber \\
&& \left. \left. \left. \frac{1}{\sqrt{q_4}}\left(\sqrt{p}\left(\sin\frac{\theta}{2}\cos\frac{\delta}{2} - e^{i\Delta}\cos\frac{\theta}{2}\sin\frac{\delta}{2}\right)\ket{0} +  
\sqrt{1-p}\left(\sin\frac{\theta}{2}\sin\frac{\delta}{2} + e^{i\Delta}\cos\frac{\theta}{2}\cos\frac{\delta}{2}\right)\ket{1}\right)\right\}, \left\{q_3, q_4\right\}\right] \right).\nonumber\\
\eea
    \bea
     && q_1=p\cos^2\frac{\delta}{2}+(1-p)\sin^2\frac{\delta}{2}, q_2=p\left|\cos\frac{\theta}{2}\cos\frac{\delta}{2}+e^{i\Delta}\sin\frac{\theta}{2}\sin\frac{\delta}{2}|^2+(1-p)|\cos\frac{\theta}{2}\sin\frac{\delta}{2}-e^{i\Delta}\sin\frac{\theta}{2}\cos\frac{\delta}{2}\right|^2,\nonumber \\
     &&q_3=p\sin^2\frac{\delta}{2}+(1-p)\cos^2\frac{\delta}{2}, 
     q_4=p\left|\sin\frac{\theta}{2}\cos\frac{\delta}{2}-e^{i\Delta}\cos\frac{\theta}{2}\sin\frac{\delta}{2}|^2+(1-p)|\sin\frac{\theta}{2}\sin\frac{\delta}{2}+e^{i\Delta}\cos\frac{\theta}{2}\cos\frac{\delta}{2}\right|^2.\nonumber
   \eea
     
     By simplification with the help of \eqref{DSpsi12}, \eqref{dmenumeric} reduces to,
     \bea\label{dmeproj}
     & \mathcal{DME}=\frac12+\frac14\left(\sqrt{A}+
     \sqrt{B}\right),
     \eea
     where,
    \bea 
    &&A=\left(q_1+q_2\right)^2-4\left|p\cos\frac{\delta}{2}\left(\cos\frac{\theta}{2}\cos\frac{\delta}{2}+e^{i\Delta}\sin\frac{\theta}{2}\sin\frac{\delta}{2}\right)+\left(1-p\right)\sin\frac{\delta}{2}\left(\cos\frac{\theta}{2}\sin\frac{\delta}{2}-e^{i\Delta}\sin\frac{\theta}{2}\cos\frac{\delta}{2}\right)\right|^2,\nonumber \\
    && B=(q_3+q_4)^2-4\left|p\sin\frac{\delta}{2}\left(\sin\frac{\theta}{2}\cos\frac{\delta}{2}-e^{i\Delta}\cos\frac{\theta}{2}\sin\frac{\delta}{2}\right)+(1-p)\cos\frac{\delta}{2}\left(\sin\frac{\theta}{2}\sin\frac{\delta}{2}+e^{i\Delta}\cos\frac{\theta}{2}\cos\frac{\delta}{2}\right)\right|^2.\nonumber
    \eea
      \end{widetext}
      Numerically, we checked the maximum value of \eqref{dmeproj} by fixing the values of $\theta$ in small intervals (that means fixing the measurements) and exhausting all the values of $\delta,\Delta$, where $\delta\in(0,\pi)$ and $\Delta\in(0,2\pi)$. Then we take $p=\frac12$, which corresponds to maximally entangled state and \eqref{dmeproj} reduces to,
    \bea\label{th5}
     \mathcal{DME} =
    \frac{1}{2}+\frac{1}{2}\sqrt{1-\cos^2\big(\frac{\theta}{2}\big)}
    \eea
We see the evaluated values of $\mathcal{DME}$ in these two calculations are the same. From this, we can conclude that taking a maximally entangled state as the initial state is enough for calculating $\mathcal{DME}$ for two qubit projective measurements. But, it is interesting to notice that there are some non-maximally entangled states that can give the same value as the maximally entangled one.

As $\frac{1}{2}+\frac{1}{2}\sqrt{1-\cos^2\big(\frac{\theta}{2}\big)} \leq \frac{1}{2}+\frac{1}{2}\sqrt{1-\cos^4\big(\frac{\theta}{2}\big)}$, we can say from theorem \ref{q2proj} and \eqref{th5} that $\mathcal{D\overline{M}S} \geq \mathcal{DME}$ and equality holds at $\theta=n\pi$, i.e., when the measurements are same.
\end{proof}

\section{Measurements distinguishable (antidistinguishable) in one scenario but not in other(s)}\label{SEC V}
In this section, we specifically focus on some examples of qubit POVMs to show the advantages of one scenario over the other(s) in both the cases of distinguishability and antidistinguishability. We take all the measurements selected from an equal probability distribution ($\{p_x\}_{x=1}^n = \frac1n$) in all the theorems of this section.
For this, first, let us establish the following notation for these two states $\ket{v_{\pm}}= \frac{1}{2}\ket{0}\pm\frac{\sqrt{3}}{2}\ket{1}$ and $\ket{v^\perp_{\pm}}= \frac{\sqrt{3}}{2}\ket{0}\mp\frac{1}{2}\ket{1}$ which we will use repeatedly. For our convenience, let us define some qubit three-outcome POVM. 
\begin{definition}
[Trine] It is defined by $\{F_a\}$, where $`a'$ means the outcome of the measurement \cite{Sedlak2014}:
\bea \label{trine1}
& F_1 = \sqrt{\frac23}  \ket{0}\!\bra{0}, F_2 = \sqrt{\frac23}  \ket{v_+}\!\bra{v_+},\nonumber\\
& F_3 = \sqrt{\frac23}  \ket{v_-}\!\bra{v_-} .
\eea
\end{definition}
\begin{definition}
[Reverse Trine] It is the complement of \eqref{trine1} and it is denoted by $\{G_a\}$:
\bea\label{trine2}
& G_1 = \sqrt{\frac23} \ket{1}\!\bra{1}, G_2 = \sqrt{\frac23} \ket{v^\perp_+}\!\bra{v^\perp_+}, \nonumber\\
& G_3 = \sqrt{\frac23} \ket{v^\perp_-}\!\bra{v^\perp_-} .
\eea 
\end{definition}

 We will define other measurements based on the above POVMs, which will be useful to find the advantages and disadvantages of the different scenarios of measurement distinguishability and antidistinguishability described in section \ref{SEC III}.
\begin{definition}
[Left Asymmetric Trine] It can be made by applying unitary to the left side of each outcome of \eqref{trine1}, and it is denoted by the symbol $\{H_a\}$:
\bea \label{LAT}
& H_1 = \sqrt{\frac23} U_1 \ket{0}\!\bra{0}, H_2 = \sqrt{\frac23} U_2 \ket{v_+}\!\bra{v_+},\nonumber \\
& H_3 = \sqrt{\frac23} U_3 \ket{v_-}\!\bra{v_-} .
\eea
\end{definition}
One can check operating unitary on the left side does not hamper the positive operators, and so the necessary condition of POVM, i.e., $\sum_{i=1}^3 H_i^\dagger H_i=\I$.
\begin{definition}
   [Right Asymmetric Trine] This is nothing but \eqref{trine1} with a unitary $U$ to the right side of the first outcome. We symbolize it by $\{J_a(\theta)\}$:
\bea \label{RAT}
& J_1 = \alpha^{\frac12}\ket{0}\!\bra{0}U,
 J_2= \beta^{\frac12} \ket{v_+}\!\bra{v_+},   J_3 = \gamma^\frac12 \ket{v_-}\!\bra{v_-} , \nonumber \\ 
& \text{ where } U\ket{0}=\ket{\phi(\theta)}=\cos(\frac{\theta}{2})\ket{0} +  \sin(\frac{\theta}{2})\ket{1}\nonumber\\
&\text{ and } -\pi/3<\theta<\pi/3, \theta\neq 0. \nonumber\\
& \alpha=\left(\frac{\frac12}{\cos^2\frac{\theta}{2}-\frac{1}{4}}\right),
\beta=\left(1 - \frac{\alpha}{2}(1+\frac{2}{\sqrt{3}}\sin\theta) \right),\nonumber\\
& \gamma=\left(1 - \frac{\alpha}{2}(1-\frac{2}{\sqrt{3}}\sin\theta) \right).
\eea 

The necessary range of '$\theta$' makes \eqref{RAT} a valid POVM. We exclude '$\theta=0$' because it gives the usual \textit{'trine'} measurement \eqref{trine1}. 
\end{definition}
   Now, we want to exploit both the ploy of \eqref{LAT} and \eqref{RAT}.
   
\begin{definition}
    [Left-right Asymmetric Trine] It is defined by $\{K_a(\theta)\}$:
    \bea \label{LRAT}
& K_1 = \alpha^{\frac12}\ket{0}\!\bra{0}U,
 K_2= \beta^{\frac12} U_2\ket{v_+}\!\bra{v_+},  \nonumber \\ 
& K_3 = \gamma^\frac12 U_3\ket{v_-}\!\bra{v_-} , \nonumber \\ 
& \text{where} -\frac{\pi}{3}<\theta<\frac{\pi}{3} \text{  and }  \theta \neq 0.
\eea 
$U,\alpha,\beta,\gamma$ has the same meaning as in \eqref{RAT}.
\end{definition}

The construction we used in \eqref{LAT},\eqref{RAT}, and \eqref{LRAT} are related to \eqref{trine1}. The same construction one can use with \eqref{trine2} to make different measurements. For our work, we apply just the construction of \eqref{LRAT} to \eqref{trine2} with a little bit of variation.
 \begin{definition}
     [Left-right Asymmetric Reverse Trine] It is denoted by $\{L_a(\mu)\}$:
 \bea \label{LRArT}
& L_1 = a^{\frac12}V_1\ket{1}\!\bra{1},
 L_2= b^{\frac12} \ket{v_+^{\perp}}\!\bra{v_+^{\perp}}V_2,  \nonumber \\ 
& L_3 = c^\frac12 V_3\ket{v_-^{\perp}}\!\bra{v_-^{\perp}} , \nonumber \\ 
& \text{  where  } V_2\ket{v_+^{\perp}}=\ket{\psi(\mu)}= \cos(\frac{\mu}{2})\ket{0} +  \sin(\frac{\mu}{2})\ket{1}, \nonumber\\
&  \frac{4\pi}{3}<\mu< 2\pi.\nonumber\\
& a=b\left(\cos\mu-\frac{1}{\sqrt{3}}\sin\mu\right),
b=\left(\frac{1}{\cos^2\frac{\mu}{2}-\sqrt{3}\cos\frac{\mu}{2}\sin\frac{\mu}{2}}\right),\nonumber\\
& c=-b\left(\frac{4}{\sqrt{3}}\cos\frac{\mu}{2}\sin\frac{\mu}{2}\right)
\eea 
 \end{definition}
Pictorial descriptions of all these measurements are presented in figure \ref{figure}. There can be many measurements employing these kinds of modifications regarding \eqref{trine1} and \eqref{trine2}. For our paper, these will be used to derive our results. For distinguishability or antidistinguishability of measurements, labeling of the outcomes is really important. In the course of this paper, there will be examples where we will use these measurements but not necessarily in the same order of the outcomes. In those cases, we give the order of the outcomes explicitly.
\begin{widetext}

\begin{figure}[h!]
    \centering
    \begin{subfigure}[b]{0.4\textwidth}
        \centering
        \includegraphics[scale=0.2]{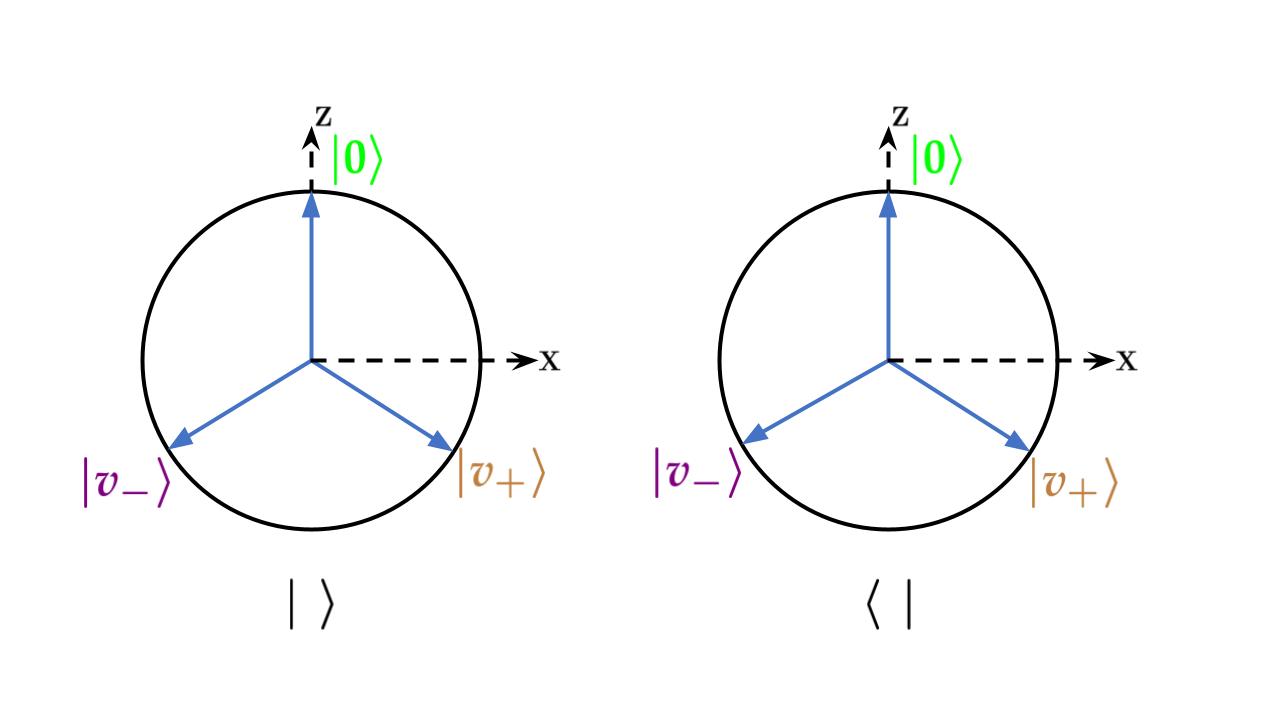}
        \caption{Trine}
        \label{2a}
    \end{subfigure}
    \hfill
    \begin{subfigure}[b]{0.5\textwidth}
        \centering
        \includegraphics[scale=0.2]{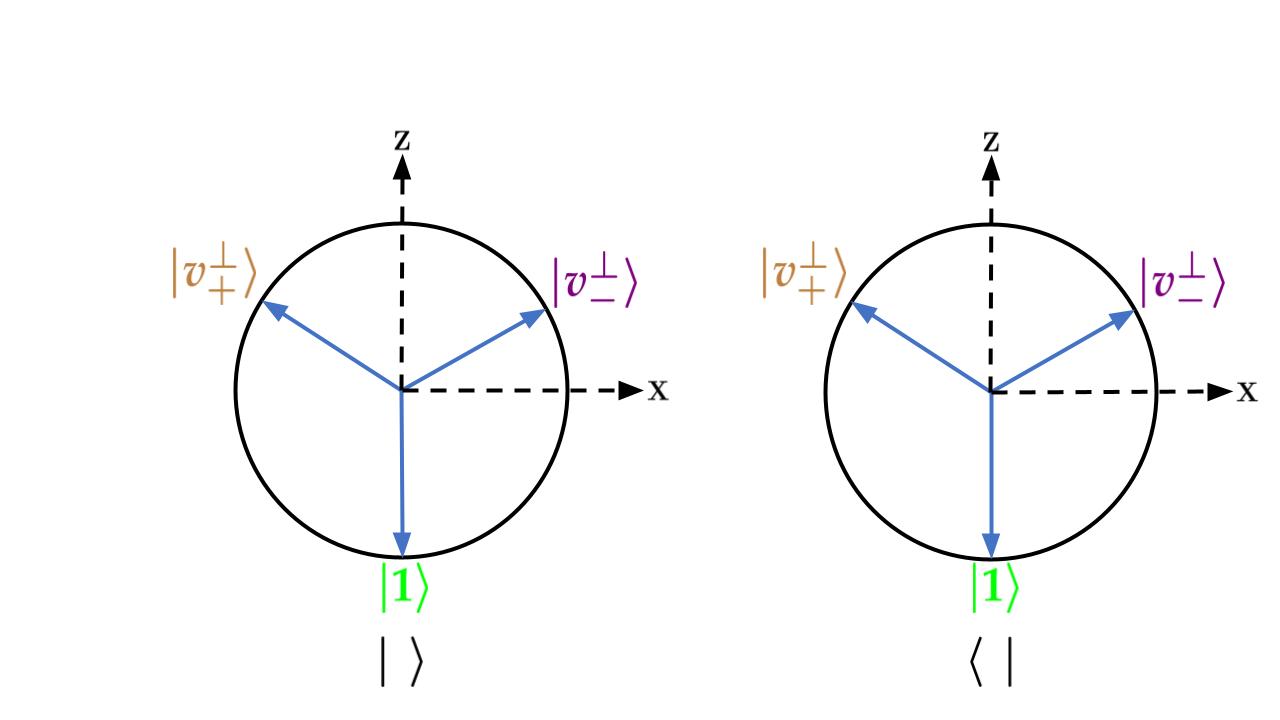}
        \caption{Reverse Trine}
        \label{2b}
    \end{subfigure} \\
    \begin{subfigure}[b]{0.4\textwidth}
        \centering
        \includegraphics[scale=0.2]{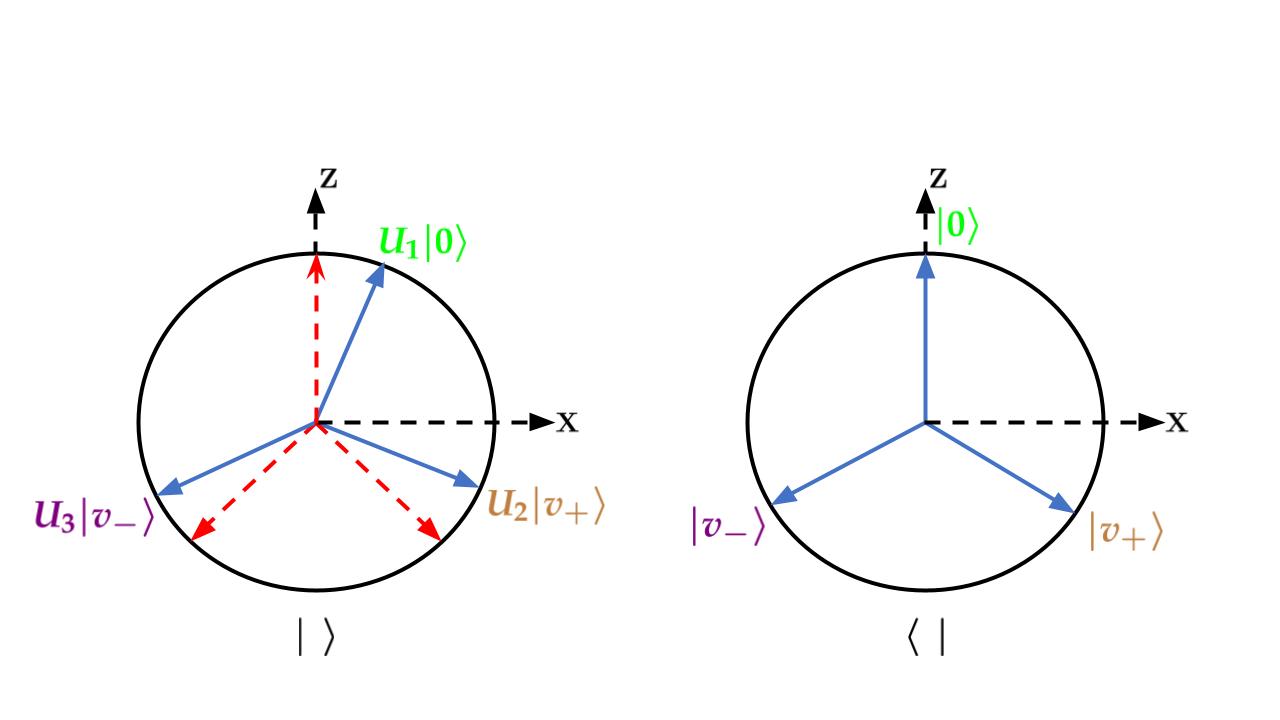}
        \caption{Left Asymmetric Trine}
        \label{2c}
    \end{subfigure}
    \hfill
    \begin{subfigure}[b]{0.45\textwidth}
        \centering
        \includegraphics[scale=0.2]{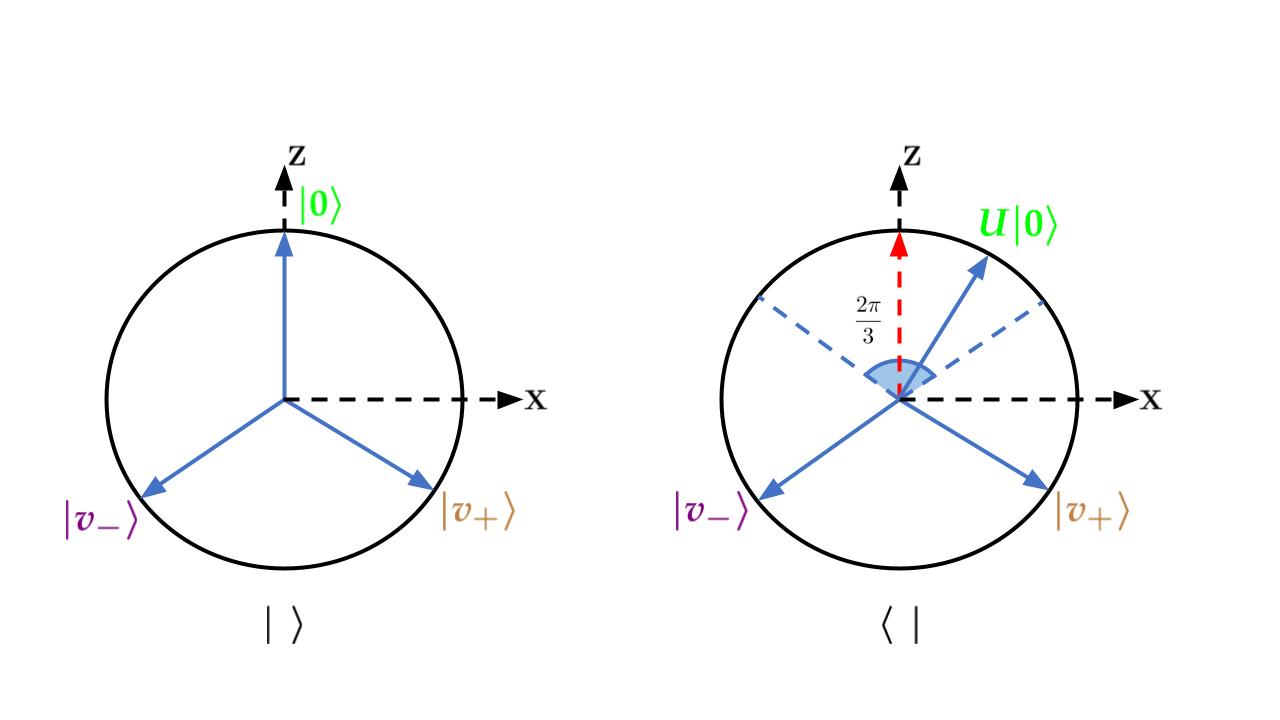}
        \caption{Right Asymmetric Trine}
        \label{2d}
    \end{subfigure} \\
     \begin{subfigure}[b]{0.4\textwidth}
        \centering
        \includegraphics[scale=0.2]{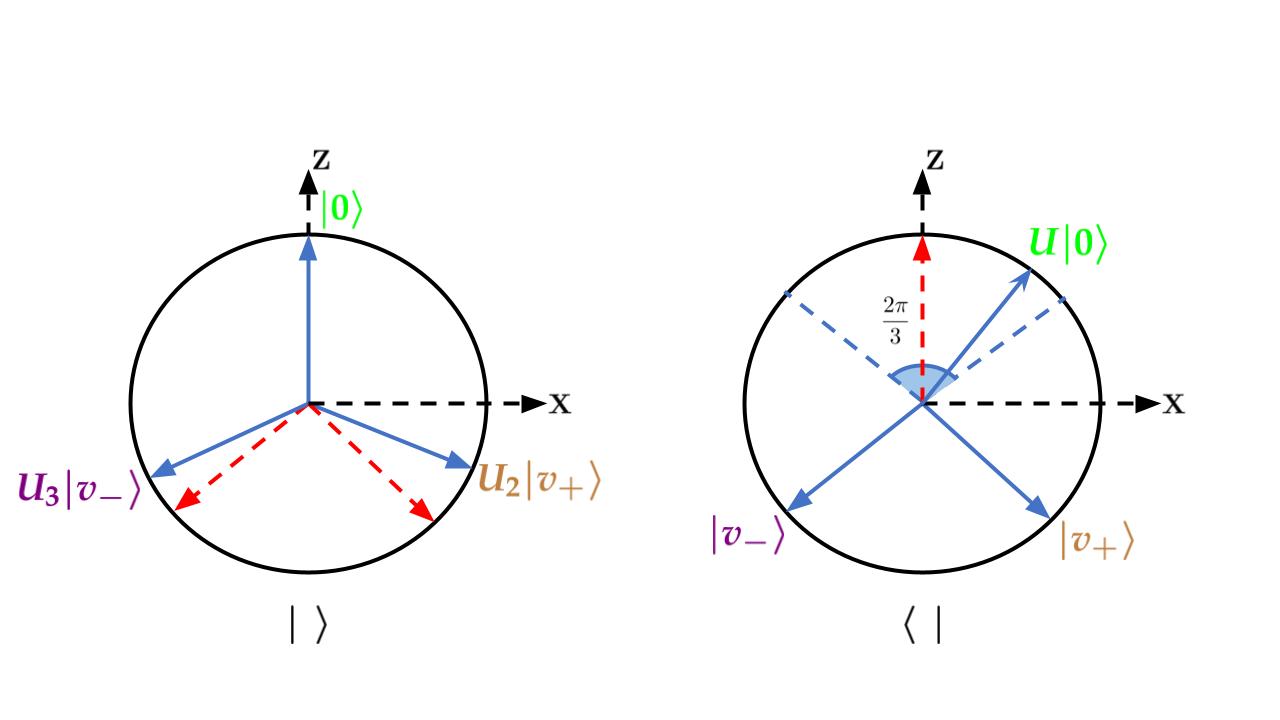}
        \caption{Left-right Asymmetric Trine}
        \label{2e}
    \end{subfigure}
    \hfill
    \begin{subfigure}[b]{0.45\textwidth}
        \centering
        \includegraphics[scale=0.2]{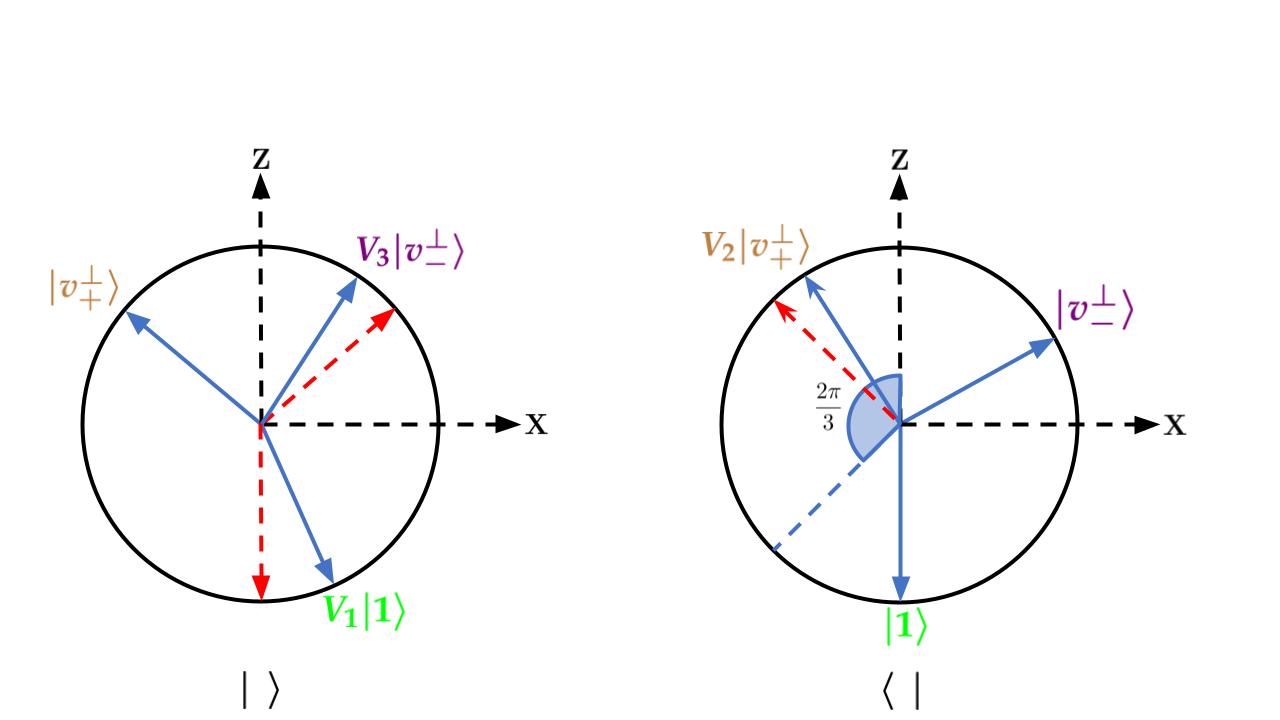}
        \caption{Left-right Asymmetric Reverse Trine}
        \label{2f}
    \end{subfigure} 
    \caption{Pictorial depiction of all the measurements defined at the beginning of Section \ref{SEC V} is provided in the X-Z plane of Bloch sphere. The POVM elements are defined as $\kappa|\cdot\rangle\langle \cdot |$ such that the corresponding ket vector $|\cdot\rangle$ and the bra vector $\langle\cdot|$ are presented, where $\kappa$ is a constant. Each POVM has three outcomes, and different colours are used to write the respective ket and bra vectors for different outcomes. The blue line denotes the state by which POVM elements are made. The red dotted line gives the original position of the `Trine' [in (c),(d),(e)] or the `Reverse Trine' [in (f)]. For the last three images, the valid range of angles in which the bra vectors must lie in order to form a valid POVM is highlighted. }
    \label{figure}
\end{figure}
\end{widetext}

\begin{thm}\label{theorem6}
     The value of $\mathcal{DMS}$ for the measurements \textit{'Right Asymmetric Trine'}\eqref{RAT} and \textit{'Reverse Trine'}\eqref{trine2} is always less than $1$.
\end{thm}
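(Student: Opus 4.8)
The plan is to start from the simplified single-system expression \eqref{Drho}, which for our two measurements ($n=2$, so $p_1=p_2=\tfrac12$), each with three outcomes, reads $\mathcal{DMS}=\max_{\rho}\sum_{a=1}^{3}\max\{\tfrac12\tr(\rho M_{a|1}),\tfrac12\tr(\rho M_{a|2})\}$, where $M_{a|1}=J_a^\dagger J_a$ are the POVM elements of the Right Asymmetric Trine \eqref{RAT} and $M_{a|2}=G_a^\dagger G_a$ those of the Reverse Trine \eqref{trine2}. Writing $t_{ax}=\tr(\rho M_{a|x})\ge0$ and using $\max\{A,B\}=A+B-\min\{A,B\}$ together with $\sum_a M_{a|x}=\I$ (so that $\sum_a t_{ax}=1$ for each $x$), I would recast this, in the same spirit as the bound used for $\mathcal{D\overline{M}S}$ in Theorem \ref{th3}, into the form
\be
\mathcal{DMS}=1-\tfrac12\,\min_{\rho}\sum_{a=1}^{3}\min\{t_{a1},t_{a2}\}.
\ee
Since each summand is non-negative, $\mathcal{DMS}=1$ holds if and only if there is a single state $\rho$ for which, for every outcome $a$, at least one of $t_{a1},t_{a2}$ vanishes. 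The whole proof then reduces to showing that no such $\rho$ can exist.

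The key point is that all six POVM elements are rank one, so every vanishing condition pins $\rho$ down completely. Concretely, $M_{2|1}=\beta\ket{v_+}\!\bra{v_+}$ and $M_{2|2}=\tfrac23\ket{v_+^\perp}\!\bra{v_+^\perp}$, so on a qubit the requirement $\min\{t_{21},t_{22}\}=0$ forces $\rho=\ket{v_+^\perp}\!\bra{v_+^\perp}$ (if $t_{21}=0$) or $\rho=\ket{v_+}\!\bra{v_+}$ (if $t_{22}=0$). Likewise, from $M_{3|1}=\gamma\ket{v_-}\!\bra{v_-}$ and $M_{3|2}=\tfrac23\ket{v_-^\perp}\!\bra{v_-^\perp}$, the requirement $\min\{t_{31},t_{32}\}=0$ forces $\rho\in\{\ket{v_-}\!\bra{v_-},\ket{v_-^\perp}\!\bra{v_-^\perp}\}$. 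These two admissible sets are disjoint: $\ket{v_+},\ket{v_+^\perp},\ket{v_-},\ket{v_-^\perp}$ are four distinct Bloch rays (at $120^\circ,300^\circ,240^\circ,60^\circ$ in the $X$--$Z$ plane), so no single $\rho$ can simultaneously annihilate an outcome-$2$ term and an outcome-$3$ term. Hence $\min_\rho\sum_a\min\{t_{a1},t_{a2}\}>0$, and therefore $\mathcal{DMS}<1$.

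I would stress that the obstruction comes entirely from outcomes $2$ and $3$, which carry the undeformed trine and reverse-trine directions; the first outcome, where the unitary $U$ acts, never enters the argument. This makes the conclusion manifestly independent of $\theta$ over the entire valid range $-\pi/3<\theta<\pi/3,\ \theta\neq0$, which is precisely the ``always'' in the statement. The only step needing a little care is the reduction to a single fixed state: because \eqref{Drho} optimizes over one initial $\rho$ used for \emph{both} measurements, and because a vanishing trace of a rank-one operator against $\rho$ already forces $\rho$ to be the unique orthogonal pure state, mixed $\rho$ are handled automatically and no separate convexity or extreme-point argument is required. I expect no substantial obstacle here; the main thing to verify cleanly is which pure state each vanishing condition selects and that the four relevant Bloch directions are genuinely distinct.
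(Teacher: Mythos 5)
Your argument is correct, but it proves the theorem by a genuinely different route than the paper. The paper parametrizes the probe state as $\ket{\zeta}=\sin\delta\ket{0}+\cos\delta\ket{1}$, writes out \eqref{Drho} term by term as in \eqref{DMSth6}, locates the optimal $\delta$ ($\pi/6$ or $5\pi/6$ depending on the sign of $\theta$), and thereby obtains the exact value $\mathcal{DMS}=\max\{(1+\beta)/2,(1+\gamma)/2\}<1$ in \eqref{DMSMAX}. You instead rewrite $\mathcal{DMS}=1-\tfrac12\min_\rho\sum_a\min\{\tr(\rho M_{a|1}),\tr(\rho M_{a|2})\}$ and run a pure impossibility argument: perfect distinguishability would require a single $\rho$ annihilating one rank-one element per outcome, and the outcome-$2$ and outcome-$3$ conditions pin $\rho$ to the disjoint sets $\{\ket{v_+}\!\bra{v_+},\ket{v_+^\perp}\!\bra{v_+^\perp}\}$ and $\{\ket{v_-}\!\bra{v_-},\ket{v_-^\perp}\!\bra{v_-^\perp}\}$ respectively. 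Combined with compactness of the state space (so that ``never zero'' upgrades to ``bounded away from zero''), this yields $\mathcal{DMS}<1$. What your approach buys is robustness and transparency: it needs no optimization, no restriction to real pure states in the $X$--$Z$ plane (which the paper invokes as a ``without loss of generality''), and it makes manifest that the conclusion is independent of $U$ and of $\theta$. What it gives up is the quantitative content: the paper identifies the optimal probe state and the exact gap from $1$, which your argument does not. Two small points to make explicit: your outcome-$2$ and outcome-$3$ steps tacitly use $\beta>0$ and $\gamma>0$, which holds on the open interval $-\pi/3<\theta<\pi/3$ (they vanish only at the excluded endpoints); and the passage from ``no $\rho$ achieves zero'' to ``the minimum is strictly positive'' should cite continuity of $\rho\mapsto\sum_a\min\{t_{a1},t_{a2}\}$ together with compactness of the set of density matrices.
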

\begin{proof}

Without loss of generality, we can consider the form of starting best state to be $\ket{\zeta}=\sin{\delta}\ket{0}+\cos{\delta}\ket{1}$, where $\delta\in[0,\pi]$. Therefore, with these measurements, from \eqref{Drho}, we can write,
\begin{widetext}
\bea\label{DMSth6}
  \mathcal{DMS}
 &=&\frac{1}{2}\max_\delta\Bigg[\max\left\{\alpha\sin^2\left({\delta+\frac{\theta}{2}}\right),\frac{2}{3}\cos^2{\delta}\right\}
 +\max\left\{\frac{\beta}{4}\left(\sin{\delta}+\sqrt{3}\cos{\delta}\right)^2,\frac{1}{6}\left(\sqrt{3}\sin{\delta}-\cos{\delta}\right)^2\right\}\nonumber\\
 &&+ \max\left\{\frac{\gamma}{4}\left(\sin{\delta}-\sqrt{3}\cos{\delta}\right)^2,\frac{1}{6}\left(\sqrt{3}\sin{\delta}+\cos{\delta}\right)^2\right\}\Bigg].
\eea
\end{widetext}
It is easy to see that for $-\frac{\pi}{3}<\theta< 0$, 
 $\delta=\frac{\pi}{6}$ gives the maximum value of \eqref{DMSth6} and for $0<\theta<\frac{\pi}{3}$, $\delta=\frac{5\pi}{6}$ gives the optimum. With these values of $\delta$, the maximum value of \eqref{DMSth6} reduces to,
\be \label{DMSMAX}
    \mathcal{DMS}=
 \begin{cases}
   \frac{1+\beta}{2}, & \text{for $-\frac{\pi}{3}<\theta< 0$}\\
    \frac{1+\gamma}{2},      & \text{for $0<\theta<\frac{\pi}{3}$}.
\end{cases}
\ee 
As $\beta$ and $\gamma$ are always less than $1$ from our definition, \eqref{DMSMAX} is always less than $1$, which is nothing but \eqref{theorem6}.
\end{proof}

From Theorem \ref{DbarMSgeqDME}, we already know that for a pair of qubit projective measurements, the maximally entangled state is the optimal one for the evaluation of $\mathcal{DME}$. 
  Now we want to check if this conclusion also holds for two rank-one qubit POVMs.

\begin{thm}\label{th10}

     For $\textit{'Right Asymmetric Trine'}$\eqref{RAT} and $\textit{'Reverse Trine'}$\eqref{trine2}, there exists a non-maximally entangled state which gives a better value of $\mathcal{DME}$ than with maximally entangled state for $\theta\in(\frac{\pi}{7},\frac{\pi}{3})\cup(-\frac{\pi}{3},-\frac{\pi}{7})$. 
\end{thm}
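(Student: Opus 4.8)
The plan is to write $\mathcal{DME}$ as an explicit function of the shared state, evaluate it at a maximally entangled state to fix a baseline, and then show that perturbing the Schmidt weights away from the balanced point strictly increases it precisely in the stated $\theta$-range.

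First I would set up the state. By Theorem \ref{th2} it suffices to take a qubit-qubit pure state, written in Schmidt form as $\ket{\zeta}^{AB} = \sqrt{p}\,\ket{\eta}\ket{0} + \sqrt{1-p}\,\ket{\eta^\perp}\ket{1}$, with Schmidt weight $p \in [0,1]$ and Alice-side basis $\ket{\eta} = \cos\frac{\delta}{2}\ket{0} + e^{i\Delta}\sin\frac{\delta}{2}\ket{1}$; the maximally entangled states are exactly $p = 1/2$. Applying \eqref{DMEequaln} to the pair $\{J_a(\theta)\}$ of \eqref{RAT} and $\{G_a\}$ of \eqref{trine2} (with $p_1=p_2=1/2$), for each outcome $a \in \{1,2,3\}$ I would compute the post-selected reduced states on Bob's side, $\rho^B_{a|1}$ from the Right Asymmetric Trine and $\rho^B_{a|2}$ from the Reverse Trine, together with the outcome weights $p(a|x,\zeta)$. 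Since each summand is then the two-pure-state distinguishability \eqref{DSpsi12}, $\mathcal{DME}$ becomes an explicit closed-form function $f(p,\delta,\Delta;\theta)$, namely a sum of three square-root terms.

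By Theorem \ref{equientangled}, at $p=1/2$ the value $f(1/2,\delta,\Delta;\theta)$ is independent of $(\delta,\Delta)$; I would evaluate it to obtain the quantity $f_{\mathrm{ME}}(\theta)$ achieved by every maximally entangled state. Because this baseline carries no residual freedom, it suffices to exhibit one non-maximally entangled state beating it. The key step is a first-order perturbation in the Schmidt weight: fixing a convenient local basis $\ket{\eta}$, I would compute $\partial f/\partial p$ at $p=1/2$. There is no symmetry forcing this slope to vanish, since the map $p \mapsto 1-p$ is a symmetry of $f$ only when combined with $\ket{\eta} \leftrightarrow \ket{\eta^\perp}$ (which Bob absorbs into his free measurement), so at fixed generic $\ket{\eta}$ the slope is generally nonzero. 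I would then argue that for a suitable $\ket{\eta}$ and sign of perturbation this slope is nonzero precisely for $\theta \in (\frac{\pi}{7},\frac{\pi}{3}) \cup (-\frac{\pi}{3},-\frac{\pi}{7})$, so that moving $p$ off $1/2$ in the favourable direction gives $\mathcal{DME} \ge f(p,\eta) > f_{\mathrm{ME}}(\theta)$; the angles $\theta=\pm\pi/7$ are where the slope crosses zero, which is what delimits the range.

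The main obstacle is tracking this derivative. After normalising the conditional states $\rho^B_{a|x}$, the overlaps $|\langle\cdot|\cdot\rangle|^2$ entering \eqref{DSpsi12} are rational functions of $p,\delta,\Delta,\theta$, and each of the three summands contributes a square root whose $p$-derivative is unwieldy; moreover the three contributions partly cancel, so the sign of the total slope at $p=1/2$ is delicate and the exact threshold $\theta=\pm\pi/7$ is awkward to isolate by hand. In the spirit of Theorem \ref{DbarMSgeqDME}, I expect the final sign analysis and the identification of the crossover to be carried out with numerical support: sweeping $\theta$ and, at each value, comparing $f_{\mathrm{ME}}(\theta)$ against $\max_{p,\delta,\Delta} f$ to confirm that the optimiser departs from $p=1/2$ exactly on the stated interval.
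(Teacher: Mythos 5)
Your strategy is sound and would establish the theorem, but it is a genuinely different route from the paper's. The paper does not perturb the Schmidt weight: after fixing the maximally entangled baseline \eqref{DMERATmaxent} (by Theorem \ref{equientangled} it suffices to use $\ket{\phi^+}$), it writes $\mathcal{DME}$ for a fully general state $a\ket{00}+b\ket{01}+c\ket{10}+d\ket{11}$ as the three-term sum \eqref{dme<1} and then simply exhibits the explicit one-parameter family $a=\cos^2(\frac{\pi}{4}-\frac{\theta^2}{8})$, $b=c=0$, $d=\sin^2(\frac{\pi}{4}-\frac{\theta^2}{8})$, checking numerically (to four decimals) that it beats the baseline on the stated range. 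Your ansatz $\sqrt{p}\,\ket{\eta}\ket{0}+\sqrt{1-p}\,\ket{\eta^\perp}\ket{1}$ with $\ket{\eta}=\ket{0}$ contains the paper's family, and the first-order step is cleaner than you fear: at $p=1/2$ the two ``$v_\pm$'' summands of \eqref{dme<1} involve orthogonal conditional states, so their square roots are harmless and their combined contribution to $\partial f/\partial p$ at $p=1/2$ is just $\frac{\alpha}{4}-\frac{1}{6}>0$; adding the first summand's derivative one finds a strictly positive slope (of order $\theta^2$ for small $\theta$), and your symmetry observation $f(p,\eta)=f(1-p,\eta^\perp)$ correctly explains why this is compatible with Theorem \ref{equientangled}. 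What your approach buys is a sharper statement --- the advantage of non-maximal entanglement appears for essentially all admissible $\theta\neq0$, not only $|\theta|>\pi/7$ --- while the paper's buys an explicit witness state. The one claim you should drop is that $\theta=\pm\pi/7$ is where the slope crosses zero: the paper states explicitly after the proof that outside this range its \emph{chosen coefficients} are of no good, so the endpoints are an artifact of that particular ansatz and the claimed four-decimal precision, not an intrinsic threshold; since the theorem is only an existence claim on that subrange, this misattribution does not damage your argument.
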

\begin{proof}  
As theorem \ref{equientangled} dictates, we can start with the Bell state $\ket{\phi^+}^{AB}=\frac{1}{\sqrt{2}}(\ket{00}+\ket{11})$ for $\mathcal{DME}$ instead of any maximally entangled state,
$\mathcal{DME}$ between \eqref{RAT} and \eqref{trine2} taken from equal probability distribution is given by,
 
\bea
\mathcal{DME}{\{J_a,G_a\}}_{\ket{\phi^+}}
&=&\frac12\left(\mathcal{DS}\left[\left\{U\ket{0},\ket{1}\right\},\left\{\frac{\alpha}{2},\frac13\right\}\right]\right.\nonumber\\
     &&+ \mathcal{DS}\left[\left\{\ket{v_+},\ket{v_+^{\perp}}\right\},
     \left\{\frac{\beta}{2},\frac13\right\}\right]\nonumber\\
     &&\left. + \mathcal{DS}\left[\left\{\ket{v_-},\ket{v_-^{\perp}}\right\},\left\{\frac{\gamma}{2},\frac13 \right\}\right]\right).\nonumber\\
\eea
Leveraging \eqref{DSpsi12},

\bea\label{DMERATmaxent}
&&\mathcal{DME}{\{J_a,G_a\}}_{\ket{\phi^+}}\nonumber\\
     &=&\frac{2}{3}+\frac{\beta+\gamma}{8}+\frac14\sqrt{\left(\frac{\alpha}{2}+\frac13\right)^2-\frac{2\alpha}{3}\sin^2\frac{\theta}{2}}\nonumber\\
     &=&\frac{11}{12}-\frac{\alpha}{8}+\frac14\sqrt{\left(\frac{\alpha}{2}+\frac13\right)^2-\frac{2\alpha}{3}\sin^2\frac{\theta}{2}}.
\eea
Now we will calculate $\mathcal{DME}$ with non-maximally entangled states. We take the most general qubit entangled state, i.e, $\ket{\phi}^{AB}=a\ket{00}+b\ket{01}+c\ket{10}+d\ket{11}$ as the best possible state where $|a|^2+|b|^2+|c|^2+|d|^2 =1$ by normalization. With this $\ket{\phi}^{AB}$ and these two measurements \eqref{RAT} and \eqref{trine2}, we obtain from \eqref{DMEequaln},
 \begin{widetext}  
   \bea\label{dme<1}
   &&\mathcal{DME}{\{J_a,G_a\}}_{\ket{\phi}^{AB}}\nonumber\\
   &=&\frac12\Bigg(\mathcal{DS}\left[\left\{\frac{\sqrt{\alpha}}{\sqrt{p_1}}\left(\left(a\cos\frac{\theta}{2}+c\sin\frac{\theta}{2}\right)\ket{0}+\left(b\cos\frac{\theta}{2}+d\sin\frac{\theta}{2}\right)\ket{1}\right),\frac{\sqrt{2/3}}{\sqrt{p_2}}\left(c\ket{0}+d\ket{1}\right)\right\},\left\{p_1,p_2\right\}\right]\nonumber\\
   &&+ \mathcal{DS}\left[\left\{\frac{\sqrt{\beta}}{\sqrt{p_3}}\left(\left(\frac{a}{2}+\frac{\sqrt{3}c}{2}\right)\ket{0}+\left(\frac{b}{2}+\frac{\sqrt{3}d}{2}\right)\ket{1}\right),\frac{\sqrt{2/3}}{\sqrt{p_4}}\left(\left(\frac{\sqrt{3}a}{2}-\frac{c}{2}\right)\ket{0}+\left(\frac{\sqrt{3}b}{2}-\frac{d}{2}\right)\ket{1}\right)\right\},
    \left\{p_3,p_4\right\}\right] \nonumber\\
    &&+ \mathcal{DS}\left[\left\{\frac{\sqrt{\gamma}}{\sqrt{p_5}}\left(\left(\frac{a}{2}-\frac{\sqrt{3}c}{2}\right)\ket{0}+\left(\frac{b}{2}-\frac{\sqrt{3}d}{2}\right)\ket{1}\right),\frac{\sqrt{2/3}}{\sqrt{p_6}}\left(\left(\frac{\sqrt{3}a}{2}+\frac{c}{2}\right)\ket{0}+\left(\frac{\sqrt{3}b}{2}+\frac{d}{2}\right)\ket{1}\right)\right\},
    \{p_5,p_6\}\right]\Bigg).\nonumber\\
    \eea

    All the probability terms are given below: 
  \bea
  &&p_1=\alpha\left[\left(a\cos\frac{\theta}{2}+c\sin\frac{\theta}{2}\right)^2+\left(b\cos\frac{\theta}{2}+d\sin\frac{\theta}{2}\right)^2\right],
  p_2=\frac23\left[c^2+d^2\right],\nonumber\\
    &&p_3=\beta\left[\left(\frac{a}{2}+\frac{\sqrt{3}c}{2}\right)^2+\left(\frac{b}{2}+\frac{\sqrt{3}d}{2}\right)^2\right],
    p_4=\frac23\left[\left(\frac{\sqrt{3}a}{2}-\frac{c}{2}\right)^2+\left(\frac{\sqrt{3}b}{2}-\frac{d}{2}\right)^2\right],\nonumber\\
    &&p_5= \gamma\left[\left(\frac{a}{2}-\frac{\sqrt{3}c}{2}\right)^2+\left(\frac{b}{2}-\frac{\sqrt{3}d}{2}\right)^2\right],
    p_6=\frac23\left[\left(\frac{\sqrt{3}a}{2}+\frac{c}{2}\right)^2+\left(\frac{\sqrt{3}b}{2}+\frac{d}{2}\right)^2\right].\nonumber
    \eea
   \end{widetext}
  The coefficients are chosen like the following:
\bea
&& a=\cos^2\left(\frac{\pi}{4}-\frac{\theta^2}{8}\right),b=c=0,d=\sin^2\left(\frac{\pi}{4}-\frac{\theta^2}{8}\right).\nonumber
\eea
With these coefficients, one can check, in the given range of $\theta$ in theorem \ref{th10},
\be
\mathcal{DME}{\{J_a,G_a\}}_{\ket{\phi^+}}< \mathcal{DME}{\{J_a,G_a\}}_{\ket{\phi}^{AB}}\nonumber
\ee
up to the fourth decimal.
\end{proof}
A detailed numerical analysis is discussed here.
Outside of the range of $\theta$ taken in theorem \ref{th10}, our chosen coefficients are of no good. From \eqref{DMERATmaxent} and \eqref{dme<1}, we have the expressions of $\mathcal{DME}$ of these two measurements with maximally entangled state and non-maximally entangled state, respectively. For numerical analysis, we fix the value of $\theta$ within the range $\theta\in(-\pi/3,\pi/3)$ in small interval and we iterate all possible values of $a,b,c,d$ of \eqref{dme<1} to get the highest value of  $\mathcal{DME}{\{J_a,G_a\}}_{\ket{\phi}^{AB}}$. For the maximally entangled state, we have a closed form as a function of $\theta$. Then we plot these two $\mathcal{DME}$ against $\theta$, which is shown in figure \ref{DMEplot}.

\begin{figure}[h!]
\includegraphics[width=\linewidth]{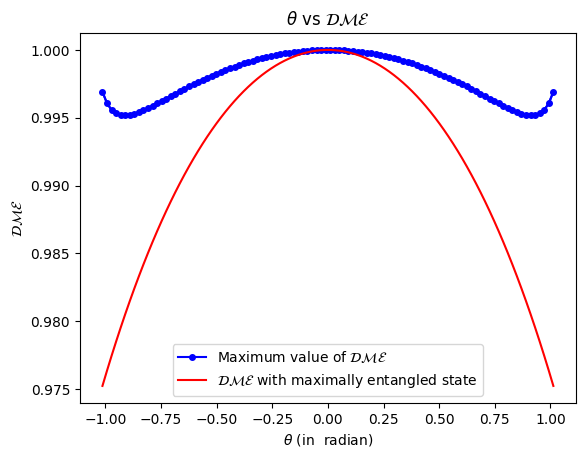}
\caption{$\mathcal{DME}$ between '\textit{Right Asymmetric Trine}' \eqref{RAT} and '\textit{Reverse Trine}' \eqref{trine2} is presented here, once, starting with maximally entangled state and again with non-maximally entangled state. We see a clear gap between the red and the blue scattered line, and that indicates that the non-maximally entangled state gives a better value of $\mathcal{DME}$ than the maximally entangled state.}  
\label{DMEplot}
\end{figure}

We are moving into some examples for which the supremacy of one scenario over the other(s) becomes more clear in the case of distinguishability. 
\begin{thm}\label{trineproof}
    For the two measurements defined by '\textit{Left Asymmetric Trine}' \eqref{LAT} and '\textit{Reverse Trine}'\eqref{trine2}, $\mathcal{D\overline{M}S} < \mathcal{DME}=1$ if and only if at least two of the unitaries $\{U_i\}, i=1,2,3$ are not $\I$.
\end{thm}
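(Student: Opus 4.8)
The plan is to prove the two halves separately, exploiting the fact that the left unitaries $\{U_i\}$ affect only the post-measurement states and not the POVM elements. First I would dispatch the easy half, $\mathcal{DME}=1$. Since $H_a^\dagger H_a = \frac23\ket{\text{trine}_a}\bra{\text{trine}_a}$ (the left unitaries cancel), the POVM elements of the Left Asymmetric Trine \eqref{LAT} coincide exactly with those of the ordinary Trine \eqref{trine1}; and because $\mathcal{DME}$ in \eqref{DME} depends on the measurements only through these POVM elements (and the reduced states they induce on Bob), it suffices to show Trine versus Reverse Trine \eqref{trine2} is perfectly distinguishable in scenario $(ii)$. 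Taking the maximally entangled state $\ket{\phi^+}=\tfrac{1}{\sqrt2}(\ket{00}+\ket{11})$, Bob's reduced states for outcome $a$ are the real (hence self-conjugate) vectors $\ket{0},\ket{v_+},\ket{v_-}$ from the Trine branch and $\ket{1},\ket{v_+^\perp},\ket{v_-^\perp}$ from the Reverse branch. For each $a$ these are orthogonal by construction, so every $\mathcal{DS}$ term saturates its maximum $\sum_x p_x p(a|x,\rho)$; summing over $a$ yields $\sum_{a,x}p_x p(a|x,\rho)=1$, establishing $\mathcal{DME}=1$ regardless of the $U_i$.

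The content lies in the condition on $\mathcal{D\overline{M}S}$, which I would attack through theorem \ref{th3}. The decisive simplification is that every Kraus operator here is rank one, so for any initial $\rho$ the post-measurement state of outcome $a$ is the fixed pure state $U_a\ket{\text{trine}_a}$ (LAT branch) or $\ket{\text{reverse}_a}$ (Reverse branch), independent of $\rho$ whenever $\tr(\rho M_{a|x})\neq 0$. Thus for a pair of $\rho$-independent candidate states ``distinguishable'' means ``orthogonal,'' and the criterion of theorem \ref{th3} decouples outcome by outcome: for each $a$ one needs either (a) $U_a\ket{\text{trine}_a}\perp\ket{\text{reverse}_a}$, valid for every $\rho$, or (b) one branch killed, i.e. $\tr(\rho M_{a|1})=0$ or $\tr(\rho M_{a|2})=0$. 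Since $M_{a|1}\propto\ket{\text{trine}_a}\bra{\text{trine}_a}$ and $M_{a|2}\propto\ket{\text{reverse}_a}\bra{\text{reverse}_a}$ with $\ket{\text{trine}_a}\perp\ket{\text{reverse}_a}$, killing a branch forces the qubit $\rho$ to equal the pure state $\ket{\text{reverse}_a}$ or $\ket{\text{trine}_a}$ respectively, even for mixed $\rho$.

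I would then identify option (a) with ``$U_a=\I$'': because $\ket{\text{reverse}_a}=\ket{\text{trine}_a^\perp}$, the orthogonality $U_a\ket{\text{trine}_a}\perp\ket{\text{reverse}_a}$ is equivalent to $U_a\ket{\text{trine}_a}\propto\ket{\text{trine}_a}$, i.e. $U_a$ fixing the measurement direction, which for the genuine in-plane rotations of Fig.~\ref{2c} happens exactly when $U_a=\I$. With this identification the ``only if'' direction is its contrapositive: if at most one $U_i\neq\I$, then at least two outcomes satisfy (a) automatically and the single remaining problematic outcome $a_0$ is killed by taking $\rho=\ket{\text{reverse}_{a_0}}$, which leaves the auto-good outcomes untouched, so theorem \ref{th3} gives $\mathcal{D\overline{M}S}=1$.

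For the ``if'' direction, suppose $U_i\neq\I$ for (say) $i=1,2$, so neither outcome $1$ nor outcome $2$ is auto-good and each must be killed: outcome $1$ forces $\rho\in\{\ket{0},\ket{1}\}$ and outcome $2$ forces $\rho\in\{\ket{v_+},\ket{v_+^\perp}\}$. The crux of the argument, and the step I expect to be the main obstacle, is to show these requirements are mutually incompatible. This reduces to the geometric fact that the three Trine directions together with their complements are six pairwise-distinct rays, so the ``kill sets'' of any two distinct outcomes are disjoint; hence no single $\rho$ can kill a branch of both outcomes at once, the criterion of theorem \ref{th3} fails for every $\rho$, and $\mathcal{D\overline{M}S}<1$. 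The same disjointness handles the remaining pairs $\{1,3\}$ and $\{2,3\}$. The one point requiring real care throughout is the identification in (a) — verifying that each nontrivial left rotation genuinely breaks the relevant orthogonality — which I would discharge via the explicit $X$--$Z$-plane parametrization of the $U_i$.
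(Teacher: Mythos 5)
Your proposal is correct and follows essentially the same route as the paper: reduce $\mathcal{D\overline{M}S}$ to the outcome-wise criterion of Theorem \ref{th3}, observe that each outcome yields a fixed pair of pure post-measurement states that is orthogonal exactly when $U_a$ fixes the trine direction, and note that the ``kill'' conditions $\tr(\rho M_{a|x})=0$ for distinct outcomes select disjoint pure states so at most one problematic outcome can be neutralized; the $\mathcal{DME}=1$ half via the maximally entangled state is likewise identical. Your write-up is in fact slightly more careful than the paper's, both in spelling out the disjointness of the kill sets and in flagging that $U_a\neq\I$ must be read as $U_a\ket{\mathrm{trine}_a}\not\propto\ket{\mathrm{trine}_a}$ for the orthogonality argument to bite.
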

\begin{proof}
    We take our best initial state as $\ket{\zeta}$. By substituting these two measurements \eqref{LAT} and \eqref{trine2} into \eqref{DIS}, we get, \\   
    \begin{widetext}
\bea\label{trineDbarMS}
 \mathcal{D\overline{M}S} 
&=& \frac{1}{2} \max_{\ket{\zeta}} \left[\mathcal{DS}\left(\Big\{U_1\ket{0}, \ket{1}\Big\},\left\{\frac{2}{3}|\langle \zeta|0\rangle|^2,\frac{2}{3}|\langle \zeta|1\rangle|^2\right\}\right)
 + \mathcal{DS}\left(\left\{U_2\ket{v_+}, \ket{v^\perp_+}\right\},\left\{\frac{2}{3}|\langle \zeta|v_+\rangle|^2,\frac{2}{3}|\langle \zeta|v^\perp_+\rangle|^2\right\}\right)\right. \nonumber\\
&&\left. + \mathcal{DS}\left(\left\{U_3\ket{v_-}, \ket{v^\perp_-}\right\},\left\{\frac{2}{3}|\langle \zeta|v_-\rangle|^2,\frac{2}{3}|\langle \zeta|v^\perp_-\rangle|^2\right\}\right)\right].\nonumber\\
\eea
 \end{widetext}
 If one of $U_i$=$U_j\neq\I$, we can choose $\ket{\psi}$ such that the probability of getting the outcome $j$ is zero. Now \eqref{trineDbarMS} consists of two pairs of orthogonal states, and for $j'$th outcome, there is only one state with non-zero probability. So, for the $j'$th outcome, it satisfies $(ii)$ of theorem \ref{th3} and for the other two outcomes, it satisfies $(i)$ of theorem \ref{th3}.
 
Now, let us take any two of the unitaries that are not equal to $\I$. Unlike the previous situation, it consists of at least one pair of non-orthogonal states with non-zero probabilities. This contradicts $(i)$ of theorem \ref{th3}.
Therefore, if we choose at least two unitaries that are not $\I$, $\mathcal{D\overline{M}S}<1$.
   
  Applying these two measurements to Alice's part of the maximally entangled state $\ket{\phi^+}=\frac{1}{\sqrt{2}}(|00\rangle+|11\rangle)$,
  Bob's part is projected onto the orthogonal states for each outcome.
  From \eqref{DMEequaln}, it follows,
  \bea\label{dmelatt2}
 \mathcal{DME} &=& \frac12 \left[\mathcal{DS}\left(\Big\{\ket{0}, \ket{1}\Big\},\left\{\frac13 ,\frac{1}{3}\right\}\right)\right. \nonumber \\
&&+ \mathcal{DS}\left(\left\{\ket{v_+}, \ket{v^{\perp}_{+}}\right\},\left\{\frac13 ,\frac{1}{3}\right\}\right) \nonumber\\
&&\left. + \mathcal{DS}\left(\left\{\ket{v_-}, \ket{v^\perp_-}\right\},\left\{\frac13,\frac{1}{3}\right\}\right)\right]. \nonumber 
  \eea
Leveraging \eqref{DSpsi12}, this reduces to,
\bea
& \mathcal{DME}= \frac12\left[\frac13+\frac13+\frac13 +\frac13+\frac13+\frac13\right]
 = 1.
\eea
Henceforth, theorem \ref{trineproof} is proved.
\end{proof}
One can be curious about the highest gap possible between the values of $\mathcal{DME}$ and $\mathcal{D\overline{M}S}$ with these two classes of measurements. For this, we will choose the unitaries such that $U_1\ket{0}=\ket{1}, U_2\ket{v_+}=\ket{v_+^{\perp}}$ and $U_3\ket{v_-}=\ket{v_-^{\perp}}$. With these choices of unitaries, $\mathcal{D\overline{M}S}=\frac12$ because it eventually reduces to the distinguishability of two same states at each outcome. From \eqref{dmelatt2}, we can see $\mathcal{DME}$ does not depend on $U_i$. So $\mathcal{DME}$ is still $1$. Now we will show an example the other way around, i.e., $\mathcal{DME} < \mathcal{D\overline{M}S} =1$.
\begin{thm}\label{th12}
    For the two measurements defined by '\textit{Right Asymmetric Trine}'\eqref{RAT} and '\textit{Reverse Trine}'\eqref{trine2}, $\mathcal{DME} < \mathcal{D\overline{M}S} =1$.
\end{thm}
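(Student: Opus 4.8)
The plan is to prove the two halves $\mathcal{D\overline{M}S}=1$ and $\mathcal{DME}<1$ separately, and the mechanism behind the gap is the asymmetric action of the right-unitary $U$ in \eqref{RAT}. Writing the first Kraus operator as $J_1=\alpha^{1/2}\ket{0}\bra{0}U=\alpha^{1/2}\ket{0}\bra{w_1}$ with $\ket{w_1}=U^{\dagger}\ket{0}$, one sees that $J_1$ collapses Alice's system to the fixed vector $\ket{0}$ regardless of the input, whereas the induced POVM element $\alpha\ket{w_1}\bra{w_1}$, and hence the direction to which Bob's reduced state is sensitive, points along $\ket{w_1}=\ket{\phi(\pm\theta)}$, tilted away from $\ket{0}$. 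This separation between the post-measurement ket (seen in $\mathcal{D\overline{M}S}$) and the measured direction (seen in $\mathcal{DME}$) is exactly what will make one scenario perfect and the other not.

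For $\mathcal{D\overline{M}S}=1$ I would apply Theorem \ref{th3}. The Kraus operators of \eqref{RAT} collapse Alice to $\ket{0},\ket{v_+},\ket{v_-}$ on outcomes $1,2,3$, while those of \eqref{trine2} collapse her to $\ket{1},\ket{v_+^{\perp}},\ket{v_-^{\perp}}$; crucially these post-measurement kets are independent of the input state. Outcome by outcome the two candidate post-measurement states therefore form an orthogonal pair, since $\la 0|1\ra=\la v_+|v_+^{\perp}\ra=\la v_-|v_-^{\perp}\ra=0$. Taking the input $\rho=\I/2$ makes every probability $\tr(\rho M_{a|x})$ strictly positive, so condition $(i)$ of Theorem \ref{th3} is met at every outcome and $\mathcal{D\overline{M}S}=1$.

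For $\mathcal{DME}<1$ the claim must hold for every initial bipartite state, so I would reason about Alice's reduced state $\sigma=\rho^{A}$. Both POVMs are rank one, so each outcome steers Bob to a pure state $(\bra{w_a}\!\otimes\I)\ket{\psi}$; since $\sum_a\sum_x p_x\,p(a|x,\rho)=1$, the representation \eqref{DME} gives $\mathcal{DME}\le 1$ with equality iff every outcome's $\mathcal{DS}$ term is maximal, which by the equality case of \eqref{DSpsi12} forces the RAT-steered and reverse-trine-steered vectors to be orthogonal whenever both branches carry weight. A short computation using $\ket{\psi}=\sum_{ij}C_{ij}\ket{i}\ket{j}$ and $\sigma=CC^{\dagger}$ turns this into the bilinear conditions $\bra{w_a^{G}}\sigma\ket{w_a^{J}}=0$ for the three measured-direction pairs $(w_a^{J},w_a^{G})=(\phi(\pm\theta),1),\,(v_+,v_+^{\perp}),\,(v_-,v_-^{\perp})$. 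The last two say that $\ket{v_+}$ and $\ket{v_-}$ are both eigenvectors of the Hermitian $\sigma$; as these are non-orthogonal ($\la v_+|v_-\ra=-\tfrac{1}{2}$), $\sigma$ must be degenerate, forcing $\sigma=\I/2$. But then the first condition reads $\tfrac{1}{2}\la 1|\phi(\pm\theta)\ra=\pm\tfrac{1}{2}\sin(\theta/2)\neq 0$ for $\theta\neq 0$, a contradiction.

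Finally I would dispose of the zero-probability loophole by a rank dichotomy for the qubit state $\sigma\ge 0$: either $\sigma$ is full rank, in which case $\tr(\sigma M_{a|x})>0$ everywhere and all three orthogonality conditions are genuinely forced, contradicting the previous step; or $\sigma$ is pure, i.e. $\ket{\psi}$ is a product state, in which case Bob is steered to the same fixed vector on both branches of every outcome and the problem reduces to the single-system value $\mathcal{DMS}$, already shown to be strictly below $1$ in Theorem \ref{theorem6}. Either way $\mathcal{DME}<1$. I expect the delicate step to be the faithful translation of ``steered states orthogonal'' into the bilinear form on $\sigma$ together with the rank case analysis; in particular, carrying the tilted direction $\ket{\phi(\pm\theta)}$ rather than the collapse target $\ket{0}$ into the outcome-$1$ condition is the crux that separates $\mathcal{DME}$ from $\mathcal{D\overline{M}S}$.
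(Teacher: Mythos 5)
Your proof is correct and follows essentially the same route as the paper: $\mathcal{D\overline{M}S}=1$ from the fixed orthogonal post-measurement kets via Theorem \ref{th3}, and $\mathcal{DME}<1$ by showing the three pairwise-orthogonality conditions on Bob's steered states (equivalent to the paper's Eqs.~\eqref{s1}--\eqref{s3}) are jointly unsatisfiable. Your packaging is marginally cleaner — reading the last two conditions as ``$\ket{v_+},\ket{v_-}$ are non-orthogonal eigenvectors of $\sigma$, hence $\sigma=\I/2$'' replaces the paper's explicit add/subtract algebra in $a,b,c,d,z$, and your rank dichotomy (full-rank vs.\ product, falling back on Theorem \ref{theorem6}) explicitly closes the zero-probability loophole that the paper leaves implicit.
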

\begin{proof}
   The expression of $\mathcal{DME}$ between \eqref{RAT} and \eqref{trine2} with the most general entangled state as the initial state is given by \eqref{dme<1}.
   For $\mathcal{DME}$ to be $1$, distinguishability of all three pairs of reduced states of \eqref{dme<1} should be the sum of their respective probabilities, i.e., each pair should consist of orthogonal states. Applying this condition, we find from the first pair,
   \bea\label{s1}
   \left(\left|c\right|^2+\left|d\right|^2\right)\sin\frac{\theta}{2}+z\cos\frac{\theta}{2}=0.
   \eea
   From the second pair,
   \bea\label{s2}
   \frac{\sqrt{3}}{4}\left(|a|^2+|b|^2-|c|^2-|d|^2\right)+\frac34\Bar{z}-\frac14 z=0.
   \eea
   From the third pair,
   \bea\label{s3}
   \frac{\sqrt{3}}{4}\left(|a|^2+|b|^2-|c|^2-|d|^2\right)-\frac34\bar{z}+\frac14 z=0,
   \eea
   where $z=ac^*+bd^*$.
   If \eqref{s2} and \eqref{s3} has to be true simultaneously, then we get from adding \eqref{s2} and \eqref{s3},
   \bea\label{s4}
   &&|a|^2+|b|^2=|c|^2+|d|^2,
   \eea
     subtracting \eqref{s3} from \eqref{s2},
\bea\label{s5}
   3\bar{z}=z.
   \eea
   \eqref{s5} will be true if and only if $z=0$. Putting this into \eqref{s1}, we arrive at,
   \be
   \left(|c|^2+|d|^2\right)\sin\frac{\theta}{2}=0.
   \ee
   From the construction of our measurement, we know $\theta \neq 0$. So, $(|c|^2+|d|^2)$ needs to be zero which is not possible as from \eqref{s4}, $|a|^2+|b|^2=|c|^2+|d|^2=0$ and that is contradicting the normalization condition. So, \eqref{s1},\eqref{s2} and \eqref{s3} can not be satisfied simultaneously. Therefore, from \eqref{DME},
   \be
   \mathcal{DME}<\frac12(p_1+p_2+p_3+p_4+p_5+p_6)=1.
   \ee
  \\
    With these two measurements of theorem \ref{th12}, from \eqref{DIS}, we can derive,
    \bea
     &&\mathcal{D\overline{M}S}\nonumber\\
     &=& \frac12\max_{\ket{\zeta}} \left(\mathcal{DS}\left[\Big\{\ket{0},\ket{1}\Big\},\left\{\alpha|\langle \zeta|U|0\rangle|^2,\frac{2}{3}|\langle \zeta|1\rangle|^2\right\}\right] \right. \nonumber\\
    &&+ \mathcal{DS}\left[\left\{\ket{v_+},\ket{v_+^{\perp}}\right\},\left\{\beta|\langle \zeta|v_+\rangle|^2,\frac{2}{3}|\langle \zeta|v_+^{\perp}\rangle|^2\right\}\right]\nonumber\\
    &&\left. + \mathcal{DS}\left[\left\{\ket{v_-},\ket{v_-^{\perp}}\right\},\left\{\gamma|\langle \zeta|v_-\rangle|^2,\frac{2}{3}|\langle \zeta|v_-^{\perp}\rangle|^2\right\}\right]\right).\nonumber\\
    \eea
    So, regardless of the choice of the initial state $\ket{\zeta}$, it satisfies $(i)$ of theorem \eqref{th3} and we can infer,
    $\mathcal{D\overline{M}S}=1.$\\  
\end{proof}
 To provide explicit values of $\mathcal{DME}$, we take $\theta=\frac{13\pi}{45}$ from the numerical analysis presented in Fig.\ref{DMEplot}. where $\mathcal{DME}=0.995$ and $\mathcal{D\overline{M}S}=1$. 

From \eqref{all}, we came to know that $\mathcal{D\overline{M}E}$ can be more advantageous than both of $\mathcal{DME}$ and $\mathcal{D\overline{M}S}$. We provide an example in this direction. 
\begin{thm}\label{th13}
     For the measurements defined by '\textit{Left-right Asymmetric Trine}'\eqref{LRAT} and '\textit{Reverse Trine}'\eqref{trine2},
     $\mathcal{DME} < 1, \mathcal{D\overline{M}S} < 1$, but $\mathcal{D\overline{M}E}=1$ if and only if all the unitaries $\{U_i\}, i=1,2,3$ are not $\I$. 
\end{thm}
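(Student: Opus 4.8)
The plan is to handle the three asserted relations one at a time, each via the state-based reformulations already derived---\eqref{DME} for $\mathcal{DME}$, \eqref{DIS} for $\mathcal{D\overline{M}S}$, and \eqref{DIE} for $\mathcal{D\overline{M}E}$---together with the characterization in Theorem~\ref{th3}, and then to assemble the equivalence. I would begin with $\mathcal{D\overline{M}E}=1$ by feeding the maximally entangled state $\ket{\phi^+}=\tfrac{1}{\sqrt2}(\ket{00}+\ket{11})$ into \eqref{DIE}. For a rank-one Kraus operator $F=\sqrt{\kappa}\,W\ket{u}\bra{w}$ one has $(F\otimes\I)\ket{\phi^+}\propto W\ket{u}\otimes\ket{w^{*}}$, so each joint post-measurement state $\rho^{AB}_{a|x}$ factorizes into Alice's outgoing ket and the conjugate of the bra vector on Bob's side. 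Reading the two measurements outcome by outcome: at $a=1$ Alice holds $\ket{0}$ for $K_1$ versus $\ket{1}$ for $G_1$ (orthogonal on Alice's side); at $a=2$ Bob holds $\ket{v_+}$ versus $\ket{v^{\perp}_+}$, and at $a=3$ Bob holds $\ket{v_-}$ versus $\ket{v^{\perp}_-}$ (orthogonal pairs on Bob's side). Hence every pair $\{\rho^{AB}_{a|1},\rho^{AB}_{a|2}\}$ is perfectly distinguishable, and the $\mathcal{D\overline{M}E}$ analogue of Theorem~\ref{th3} yields $\mathcal{D\overline{M}E}=1$. This step is manifestly insensitive to the left unitaries.

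For $\mathcal{DME}<1$ the key observation is that in the entangled scenario without the post-measurement state, Bob's conditional state depends only on the POVM element $M_{a|x}=F^{\dagger}_{a|x}F_{a|x}$, because $\tr_A[(F\otimes\I)\rho^{AB}(F^{\dagger}\otimes\I)]=\tr_A[(F^{\dagger}F\otimes\I)\rho^{AB}]$. The left unitaries $U_2,U_3$ leave $M_{a|x}$ unchanged, so the Left-right Asymmetric Trine $\{K_a\}$ and the Right Asymmetric Trine $\{J_a\}$ generate identical families $\{\rho^{B}_{a|x}\}$ and hence identical values of $\mathcal{DME}$ in \eqref{DME}. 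Since Theorem~\ref{th12} already establishes $\mathcal{DME}<1$ for $\{J_a\}$ and the Reverse Trine, the same strict bound follows here. I would emphasize that by Theorem~\ref{th2} it suffices to optimize over qubit-qubit states and that the reduction holds for every such state, which is exactly what upgrades the claim from a statement about $\ket{\phi^+}$ to the strict inequality $\mathcal{DME}<1$.

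For $\mathcal{D\overline{M}S}<1$ I would invoke \eqref{DIS}. Since each $K_a$ and $G_a$ is rank-one, the single-system post-measurement states are independent of the probe $\rho$ and read off as $\{\ket{0},\ket{1}\}$ at $a=1$, $\{U_2\ket{v_+},\ket{v^{\perp}_+}\}$ at $a=2$, and $\{U_3\ket{v_-},\ket{v^{\perp}_-}\}$ at $a=3$. Outcome $1$ is always an orthogonal (distinguishable) pair, but whenever $U_2$ fails to leave $\ket{v_+}$ invariant the pair at $a=2$ is non-orthogonal; by Theorem~\ref{th3} the value $1$ could then be attained only if a single input suppressed both probabilities at that outcome, which is impossible because $p(2|1,\rho)\propto|\la v_+|\psi\ra|^2$ and $p(2|2,\rho)\propto|\la v^{\perp}_+|\psi\ra|^2$ cannot vanish simultaneously for any qubit $\ket{\psi}$. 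Thus $\mathcal{D\overline{M}S}<1$, and the identical argument runs through $U_3$ at $a=3$.

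It remains to assemble the equivalence. The backward direction follows at once from the three steps above once $U_2,U_3\neq\I$ (recall $U\neq\I$ is forced by $\theta\neq0$), so that the relevant outcomes in $\mathcal{D\overline{M}S}$ are genuinely non-orthogonal. For the forward direction I would argue contrapositively: if the left structure is trivial, $\{K_a\}$ collapses to the Right Asymmetric Trine, for which Theorem~\ref{th12} gives $\mathcal{D\overline{M}S}=1$, contradicting the separation. I expect the real obstacle to be the $\mathcal{DME}<1$ clause: unlike $\mathcal{D\overline{M}E}$ and $\mathcal{D\overline{M}S}$, it quantifies over \emph{all} probe states at once, so the whole argument hinges on verifying cleanly that the left unitaries drop out of $\{\rho^{B}_{a|x}\}$ and thereby reduce the problem exactly to Theorem~\ref{th12}; the distinguishability conditions at the individual outcomes are then routine.
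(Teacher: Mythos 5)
Your overall strategy coincides with the paper's: $\mathcal{D\overline{M}E}=1$ via the Bell state and the outcome-wise orthogonality of the joint post-measurement states, $\mathcal{DME}<1$ by observing that left unitaries cancel in $M_{a|x}=F_{a|x}^\dagger F_{a|x}$ so the problem reduces verbatim to Theorem \ref{th12}, and $\mathcal{D\overline{M}S}<1$ via Theorem \ref{th3}. The first two steps are correct (your explicit identity $\tr_A[(F\otimes\I)\rho^{AB}(F^{\dagger}\otimes\I)]=\tr_A[(F^{\dagger}F\otimes\I)\rho^{AB}]$ is in fact a cleaner justification of the reduction than the paper's).

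There is, however, a genuine gap in your $\mathcal{D\overline{M}S}<1$ step. You assert that, since the pair at $a=2$ is non-orthogonal, the value $1$ could only be attained if a single input suppressed \emph{both} probabilities at that outcome. That is not what Theorem \ref{th3} says: condition $(i)$ only requires distinguishability of the post-measurement states \emph{with nonzero probability}, so killing just \emph{one} of the two probabilities reduces that set to a singleton, which is trivially distinguishable. And this escape is available for a single outcome: choosing $\ket{\zeta}=\ket{v_+^{\perp}}$ (or $\ket{v_+}$) annihilates exactly one of $|\la v_+|\zeta\ra|^2$, $|\la v_+^{\perp}|\zeta\ra|^2$. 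Indeed, the paper exploits precisely this singleton loophole in the proof of Theorem \ref{trineproof}. Treating $a=2$ and $a=3$ by ``the identical argument'' separately therefore does not close the proof. The correct argument must couple the two outcomes: a single qubit pure state can be orthogonal to at most one of the four distinct directions $\ket{v_+},\ket{v_+^{\perp}},\ket{v_-},\ket{v_-^{\perp}}$, so for every probe state at least one of the outcomes $a=2,3$ retains \emph{two} strictly positive probabilities on a non-orthogonal pair, which is what violates Theorem \ref{th3} and yields $\mathcal{D\overline{M}S}<1$. This is a one-sentence fix, but as written your argument does not establish the claim.
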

\begin{proof}
    The expression of $\mathcal{DME}$ is exactly like that of \eqref{dme<1}, and we proved $\mathcal{DME}<1$ in the theorem \ref{th12}.\\
    With these two measurements, \eqref{DIS} can be simplified like the following:
    \begin{widetext}
    \bea
     \mathcal{D\overline{M}S} 
    &=& \frac12 \left(\mathcal{DS}\left[\Big\{\ket{0},\ket{1}\Big\},\left\{\alpha|\langle \zeta|U|0\rangle|^2,\frac{2}{3}|\langle \zeta|1\rangle|^2\right\}\right]  
    + \mathcal{DS}\left[\left\{U_2\ket{v_+},\ket{v_+^{\perp}}\right\},\left\{\beta|\langle \zeta|v_+\rangle|^2,\frac{2}{3}|\langle \zeta|v_+^{\perp}\rangle|^2\right\}\right] \right. \nonumber\\
    &&\left. + \mathcal{DS}\left[\left\{U_3\ket{v_-},\ket{v_-^{\perp}}\right\},\left\{\gamma|\langle \zeta|v_-\rangle|^2,\frac{2}{3}|\langle \zeta|v_-^{\perp}\rangle|^2\right\}\right]\right).\nonumber\\
    \eea
    \end{widetext}
    As $U_2,U_3 \neq \I$, there are two pair of non orthogonal post-measurement states with at least one pair has two non-zero probabilities, irrespective of the choice of $\zeta$. That contradicts $(i)$ of theorem \ref{th2}. So, $\mathcal{D\overline{M}S}<1$.

    For $\mathcal{D\overline{M}E}$, we put these two measurements into \eqref{DIE} and we take the best initial state as $\ket{\phi^+}=\frac{1}{\sqrt{2}}(\ket{00}+\ket{11})$. Now, \eqref{equaldbarme} looks like,

\bea\label{th13c}
\mathcal{D\overline{M}E}&=& \frac12 \left(\mathcal{DS}\left[\Big\{\ket{0}U\ket{0},\ket{1}\ket{1}\Big\},\left\{\frac{\alpha}{2},\frac13\right\}\right] \right. \nonumber\\
 &&+ \mathcal{DS}\left[\left\{U_2\ket{v_+}\ket{v_+},\ket{v_+^{\perp}}\ket{v_+^{\perp}}\right\},\left\{\frac{\beta}{2},\frac13\right\}\right]\nonumber\\
&&\left. + \mathcal{DS}\left[\left\{U_3\ket{v_-}\ket{v_-},\ket{v_-^{\perp}}\ket{v_-^{\perp}}\right\},
\left\{\frac{\gamma}{2},\frac13 \right\}\right]\right).\nonumber\\
\eea
 \   
    As it reduced to three pairs of orthogonal states, by the grace of \eqref{DSpsi12}, \eqref{th13c} leads to,
    \bea
    & \mathcal{D\overline{M}E}=\frac12\left[\frac{\alpha}{2}+\frac13+\frac{\beta}{2}+
     \frac13+\frac{\gamma}{2} +\frac13\right]=1,
    \eea
which completes the proof.    
\end{proof}
We see the strict implication of \eqref{all} for the distinguishability of measurements and now we move into antidistinguishability of measurements. \\

In section \ref{SEC III}, we have shown that for a pair of binary outcome measurements, $\mathcal{AMS}$ is identical with $\mathcal{DMS}$. The other three scenarios eventually involve the antidistinguishability of the states. So, for antidistinguishability, we need to have at least three measurements to make these arrangements different from distinguishability.\\
From section \ref{SEC III}, it also can be easily commented that $\mathcal{AMS}$ for qubit projective measurements is always less than $1$. We will show availing the entangled system or post-measurement state, antidistinguishability will be better than that in the case of $\mathcal{AMS}$.

\begin{thm}\label{th14}
    For the three qubit projective measurements defined as $F_{1|1}=\ket{0}\bra{0}, F_{1|2}=\ket{v_+}\bra{v_+}, F_{1|3}=\ket{v_-}\bra{v_-}$,
      $\mathcal{AMS} <1$, $\mathcal{A\overline{M}S} = \mathcal{AME} = 1$.
\end{thm}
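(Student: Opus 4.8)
The plan is to handle the three assertions separately, unified by one geometric observation: the ``upper'' defining states $\{\ket{0},\ket{v_+},\ket{v_-}\}$ and the ``lower'' complementary states $\{\ket{1},\ket{v^\perp_+},\ket{v^\perp_-}\}$ each form a trine on a great circle of the Bloch sphere, with every distinct pairwise squared overlap equal to $\tfrac14$. Indeed $\langle 0|v_\pm\rangle=\tfrac12$, $\langle v_+|v_-\rangle=-\tfrac12$, and the analogous identities hold for the complementary set, so $|\langle\psi_x|\psi_{x'}\rangle|^2=\tfrac14$ for all $x\neq x'$ in either triple. The whole proof then reduces to the single nontrivial fact that such a trine is antidistinguishable, which I verify using the criterion \eqref{condAS}.

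For $\mathcal{AMS}<1$ I would simply invoke the closed form of Theorem \ref{th1}. Substituting $n=3$ and $|\langle\psi_x|\psi_{x'}\rangle|^2=\tfrac14$ into \eqref{amsprojective}, every pair contributes the same value $\tfrac13\bigl(1-\tfrac{\sqrt3}{2}\bigr)$, so $\mathcal{AMS}=1-\tfrac13\bigl(1-\tfrac{\sqrt3}{2}\bigr)$, which is strictly below $1$ because $\sqrt3/2<1$. This is an immediate computation requiring no optimization.

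For the two perfect cases I would first check antidistinguishability of both trines through \eqref{condAS}: with $x_1=x_2=x_3=\tfrac14$ one has $x_1+x_2+x_3=\tfrac34<1$ and $(\tfrac34-1)^2=\tfrac1{16}=4\cdot(\tfrac14)^3$, so both conditions are met, the second with equality. To conclude $\mathcal{A\overline{M}S}=1$ I would apply Theorem \ref{th3}: since these are projective measurements, the post-measurement state for outcome $a$ under measurement $x$ is the corresponding eigenprojector state \emph{independent} of $\rho$, so choosing any full-rank $\rho$ (e.g.\ $\I/2$) makes all probabilities positive, and the outcome-$1$ and outcome-$2$ ensembles are exactly the trine and reverse-trine sets, each antidistinguishable; thus condition $(i)$ of Theorem \ref{th3} holds for every outcome. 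To conclude $\mathcal{AME}=1$ I would take the maximally entangled state $\ket{\phi^+}$ in \eqref{AME}: when Alice obtains outcome $a$ on measurement $x$, Bob's reduced state is the complex conjugate of the projector state, which equals the state itself since all amplitudes are real, so for each $a$ Bob again holds one of the two antidistinguishable trines. Each summand in \eqref{AME} therefore saturates $\mathcal{AS}=\sum_x p_x p(a|x,\rho)$, and since $\sum_{a,x}p_x p(a|x,\rho)=1$ while $\mathcal{AME}\le 1$ always, I obtain $\mathcal{AME}=1$.

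The main obstacle is the antidistinguishability of the trine itself: it holds only because condition $(b)$ of \eqref{condAS} is met \emph{exactly} at the borderline value of overlap $\tfrac14$, so the argument has no slack and the verification must be done carefully. The remaining points are bookkeeping: computing the overlaps, observing that perfect antidistinguishability depends only on the states and not on the positive weights $p_x p(a|x,\rho)$, and checking that Bob's reduced states under $\ket{\phi^+}$ coincide with the real projector states.
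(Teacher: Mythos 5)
Your proposal is correct and follows essentially the same route as the paper: the closed form \eqref{amsprojective} with pairwise overlap $\tfrac14$ for $\mathcal{AMS}<1$, Theorem \ref{th3} applied to the two antidistinguishable trines $\{\ket{0},\ket{v_+},\ket{v_-}\}$ and $\{\ket{1},\ket{v^\perp_+},\ket{v^\perp_-}\}$ for $\mathcal{A\overline{M}S}=1$, and the Bell state in \eqref{AME} for $\mathcal{AME}=1$. Your explicit check of \eqref{condAS} at the borderline value $(\tfrac34-1)^2=4(\tfrac14)^3$ and the remark that Bob's reduced states are the (real, hence self-conjugate) trine states are welcome details the paper leaves implicit, but they do not change the argument.
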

\begin{proof}
     If we put these three measurements into \eqref{amsprojective}, it takes the form (with equiprobable distribution),
    \bea
     \mathcal{AMS}&=& 1-\frac13\min\left\{\left(1-\sqrt{1-|\langle 0|v_+\rangle|^2}\right),\right. \nonumber\\
    &&\left. \left(1-\sqrt{1-|\langle 0|v_-\rangle|^2}\right),
    \left(1-\sqrt{1-|\langle v_-|v_+\rangle|^2}\right) \right\}\nonumber\\
    &=&\frac23+\frac{1}{2\sqrt{3}}\nonumber\\
    &<& 1.
    \eea
     With these three measurements and $\ket{\zeta}$ as the initial state, \eqref{AbarMS} becomes,
\bea\label{thm14part}
 \mathcal{A\overline{M}S} &=& \frac13\max_{\ket{\zeta}}\left(\mathcal{AS}\left[\Big\{\ket{0},\ket{v_+},\ket{v_-}\Big\},\right.\right. \nonumber\\
&&\left. 
\left\{|\langle \zeta|0\rangle|^2,|\langle \zeta|v_+\rangle|^2,|\langle \zeta|v_-\rangle|^2\right\}\right]\nonumber\\
&& +\mathcal{AS}\left[\left\{\ket{1},\ket{v_+^{\perp}},\ket{v_-^{\perp}}\right\},\right. \nonumber\\
&& \left. \left. \left\{|\langle \zeta|1\rangle|^2,|\langle \zeta|v_+^{\perp}\rangle|^2,|\langle \zeta|v_-^{\perp}\rangle|^2\right\}\right]\right).
\eea
As $\{\ket{0},\ket{v_+},\ket{v_-}\}$ and $\{\ket{1},\ket{v_+^{\perp}},\ket{v_-^{\perp}}\}$ are two sets of antidistinguishable states, $(i)$ of theorem \ref{th3} tells that $\mathcal{A\overline{M}S}=1$.

For $\mathcal{AME},$ the bell state $\ket{\phi^+}$ is taken as the best state. Consequently, \eqref{AME} becomes, 
\bea\label{ameth14}
    \mathcal{AME} &=& \frac13\left(\mathcal{AS}\left[\Big\{\ket{0},\ket{v_+},\ket{v_-}\Big\},\left\{\frac12,\frac12,\frac12\right\}\right] \right. \nonumber\\
   &&\left. + \mathcal{AS}\left[\left\{\ket{1},\ket{v_+^{\perp}},\ket{v_-^{\perp}}\right\},
    \left\{\frac12,\frac12,\frac12\right\}\right]\right).\nonumber\\
\eea
This is the sum of two triplets of antidistinguishable states. With the help of \eqref{ADSpsi123}, \eqref{ameth14} reduces to,
\bea
\mathcal{AME}=\frac13\left(6\times\frac12\right)=1.
\eea
Thus, theorem \ref{th14} is demonstrated.
\end{proof}

\begin{thm}\label{thm20}
     For the three qubit projective measurements defined as $F_{1|1}=\ket{0}\bra{0}, F_{1|2}=\ket{+}\bra{+}, F_{1|3}=\ket{\nu}\bra{\nu},(\ket{\nu} =\cos\left(\frac{\theta}{2}\right)\ket{0}+\sin\left(\frac{\theta}{2}\right)\ket{1})$,
     $\mathcal{AME} < 1, \mathcal{A\overline{M}S} < 1$ but $\mathcal{A\overline{M}E}=1$ if $\theta \in (0,\pi)\cup(3\pi/2,2\pi)$ and $\kappa =(\sin\theta+\cos\theta)$ obeys these two inequalities following:
     \bea\label{zetacond}
     &(i)& \kappa < 0,\nonumber\\
     &(ii)& \kappa^2 \geqslant \left(\frac{\kappa+1}{2}\right)^4,\nonumber\\
     \eea
\end{thm}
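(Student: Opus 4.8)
The plan is to treat each rank-one projective measurement as a genuine two-outcome measurement, so that $F_{1|x}=\ket{\psi_x}\bra{\psi_x}$ comes paired with $F_{2|x}=\ket{\psi_x^\perp}\bra{\psi_x^\perp}$, with $(\ket{\psi_1},\ket{\psi_2},\ket{\psi_3})=(\ket{0},\ket{+},\ket{\nu})$. Each of the three quantities then splits, via \eqref{AME}, \eqref{AbarMS} and \eqref{AbarME}, into a sum over the two outcomes $a\in\{1,2\}$ of a state-antidistinguishability term. The key observation is that for a maximally entangled probe $\ket{\phi^+}=\tfrac{1}{\sqrt2}(\ket{00}+\ket{11})$, as well as for a single-system probe, the post-measurement states carried by outcome $a=1$ form the triple $\{\ket{0},\ket{+},\ket{\nu}\}$ and those carried by $a=2$ form $\{\ket{1},\ket{-},\ket{\nu^\perp}\}$. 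A direct calculation shows these two triples share the same three pairwise overlaps $x_1=\tfrac12$, $x_2=\cos^2\tfrac\theta2$, $x_3=\tfrac12(1+\sin\theta)$, whose sum is $x_1+x_2+x_3=\tfrac32+\tfrac\kappa2$.

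For $\mathcal{A\overline{M}E}=1$ I would use $\ket{\phi^+}$, so that the rank-one projections collapse the joint state to the product states $\ket{\psi_x}\ket{\psi_x}$ (and $\ket{\psi_x^\perp}\ket{\psi_x^\perp}$). Their pairwise overlaps are the squares $y_i=x_i^2$, i.e. $y_1=\tfrac14$, $y_2=\cos^4\tfrac\theta2$, $y_3=\bigl(\tfrac{1+\sin\theta}{2}\bigr)^2$. A short computation gives $y_1+y_2+y_3-1=\tfrac\kappa2$ and, using the identity $(1+\cos\theta)(1+\sin\theta)=\tfrac{(1+\kappa)^2}{2}$, also $4y_1y_2y_3=\tfrac{(1+\kappa)^4}{64}$. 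Substituting these into the Caves sufficient conditions \eqref{condAS} turns them into precisely $\kappa<0$ and $\kappa^2\ge\bigl(\tfrac{\kappa+1}{2}\bigr)^4$, which are exactly hypotheses $(i)$ and $(ii)$. Since the second outcome-triple has identical overlaps, it is antidistinguishable under the same conditions, so each $\mathcal{AS}$ term attains its maximal value $\sum_x p_x\,p(a|x,\rho)=\tfrac12$; summing over the two outcomes and invoking the equality criterion behind Theorem \ref{th3} yields $\mathcal{A\overline{M}E}=1$.

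The two negative statements I would argue from the necessity of the conditions for qubit states. For $\mathcal{A\overline{M}S}$, the post-measurement state of a projective measurement is the fixed eigenvector, so $a=1$ always produces $\{\ket{0},\ket{+},\ket{\nu}\}$ and $a=2$ produces $\{\ket{1},\ket{-},\ket{\nu^\perp}\}$, only the weights depending on the probe. By Theorem \ref{th3}, $\mathcal{A\overline{M}S}=1$ would need each outcome's surviving states to be antidistinguishable; but each full triple has overlap sum $\tfrac32+\tfrac\kappa2>1$ (as $\kappa\in(-1,0)$ throughout the admissible range), so by the qubit necessity of the first inequality in \eqref{condAS} neither triple is antidistinguishable. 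A pure qubit probe annihilates at most one weight per outcome, and the probe directions that delete a state from the $a=1$ triple (namely $\ket{1},\ket{-},\ket{\nu^\perp}$) are disjoint from those deleting a state from the $a=2$ triple (namely $\ket{0},\ket{+},\ket{\nu}$); hence at least one outcome retains all three non-antidistinguishable states and $\mathcal{A\overline{M}S}<1$. The same overlap bound shows the reduced qubit triples on Bob fail the first inequality of \eqref{condAS} for the maximally entangled probe, which drives $\mathcal{AME}<1$.

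The delicate part, and the main obstacle, is making $\mathcal{AME}<1$ rigorous for \emph{every} probe, since non-maximally entangled states deform Bob's reduced states. Writing the probe as $(\I\otimes T)\ket{\phi^+}$ and using the singular-value decomposition of $T$, Bob's outcome-triples become $\{T\ket{0},T\ket{+},T\ket{\nu}\}$ and $\{T\ket{1},T\ket{-},T\ket{\nu^\perp}\}$; the unitary factors of $T$ reduce the freedom to a single simultaneous rotation of the three bases together with one diagonal squeeze, so $\mathcal{AME}=1$ amounts to asking whether both squeezed-and-rotated triples can be antidistinguishable at once. Showing that this two-parameter family never achieves simultaneous antidistinguishability is where a careful (possibly partly numerical) argument is needed, mirroring the optimality checks already used for $\mathcal{DME}$ in Theorem \ref{DbarMSgeqDME}. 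I would also verify the geometric fact underpinning everything: the constraints $\theta\in(0,\pi)\cup(3\pi/2,2\pi)$ together with $\kappa<0$ confine $\theta$ to $(3\pi/4,\pi)\cup(3\pi/2,7\pi/4)$, on which $\kappa=\sqrt2\sin(\theta+\tfrac\pi4)$ lies in $(-1,0)$, which is exactly what forces the qubit overlap sum above one while still permitting $(ii)$.
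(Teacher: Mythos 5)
Your proposal is essentially the paper's own argument, and the one computation that carries the positive claim is identical: for the Bell-state probe the joint post-measurement states are $\ket{\psi_x}\ket{\psi_x}$ and $\ket{\psi_x^\perp}\ket{\psi_x^\perp}$, the pairwise overlaps get squared, and substituting $y_1=\tfrac14$, $y_2=\cos^4\tfrac{\theta}{2}$, $y_3=\bigl(\tfrac{1+\sin\theta}{2}\bigr)^2$ into \eqref{condAS} yields exactly $\kappa<0$ and $\kappa^2\geq\bigl(\tfrac{\kappa+1}{2}\bigr)^4$, which is the paper's Eq.~\eqref{suba} reduced to \eqref{zetacond}. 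Two points of comparison are worth recording. First, your treatment of $\mathcal{A\overline{M}S}<1$ is actually more careful than the paper's: the paper only observes that the triple $\{\ket{0},\ket{+},\ket{\nu}\}$ is not antidistinguishable, whereas you additionally rule out the loophole in Theorem~\ref{th3} whereby a probe orthogonal to one of the states could delete a weight and leave an antidistinguishable (or trivial) residue --- you show the probe directions that annihilate a weight in the $a=1$ triple never simultaneously help the $a=2$ triple, so at least one outcome always carries all three non-antidistinguishable states. Second, for $\mathcal{AME}<1$ you stop at identifying the reduction (SVD of the probe, one rotation plus one squeeze) and flag that a possibly numerical sweep is needed; this is not a defect relative to the paper, because the paper's own proof of this inequality is precisely such a numerical sweep over the general state $a\ket{00}+b\ket{01}+c\ket{10}+d\ket{11}$ checking \eqref{condAS} for both outcome-triples. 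Neither you nor the paper gives an analytic proof of that part, so your proposal reproduces the theorem to the same standard of rigor as the original.
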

\begin{proof}
     With these three measurements and $\{p_x\}_x=\frac13$, \eqref{AbarMS} simplified into,
\bea\label{abarmsproj}
 \mathcal{A\overline{M}S} &=& \frac13\max_{\ket{\zeta}}\left(\mathcal{AS}\left[\Big\{\ket{0},\ket{+},\ket{\nu}\Big\},\right. \right.\nonumber\\
&& \left. \left\{|\langle \zeta|0\rangle|^2,|\langle \zeta|+\rangle|^2,|\langle \zeta|\nu\rangle|^2\right\}\right]\nonumber\\
&&+\mathcal{AS}\left[\left\{\ket{1},\ket{-},\ket{\nu^{\perp}}\right\},\right.\nonumber\\
&&\left. \left. \left\{|\langle \zeta|1\rangle|^2,|\langle \zeta|-\rangle|^2,|\langle \zeta|\nu^{\perp}\rangle|^2\right\}\right]\right),
\eea
which will be less than 1 if at least any of these triplets $\{\ket{0},\ket{+},\ket{\phi}\}$ or $\{\ket{1},\ket{-},\ket{\phi^\perp}\}$ is not antidistinguishable. Without loss of generality, we take the first triplet to be not antidistinguishable. From that, we can infer $\ket{\phi}$ should be on the X-Z plane of the Bloch sphere and $\theta \in (0,\pi)\cup(3\pi/2,2\pi)$, which is consistent with the range of $\theta$ mentioned in the theorem. So $\mathcal{A\overline{M}S}< 1.$\\

Now, we want to take a look at $\mathcal{AME}$ with these three measurements. Let us take the general entangled state $\ket{\phi}=a\ket{00}+b\ket{01}+c\ket{10}+d\ket{11}$ as the best possible state where $|a|^2+|b|^2+|c|^2+|d|^2 =1$ by normalization. In this case, \eqref{AME} can be put down as, \\
\begin{widetext}
     \bea\label{ameproj}
 \mathcal{AME} &=& \frac13\Bigg(\mathcal{AS}\Bigg[\left\{\frac{a\ket{0}+b\ket{1}}{\sqrt{p_1}},\frac{\frac{1}{\sqrt{2}}(a+c)\ket{0}+\frac{1}{\sqrt{2}}(b+d)\ket{1}}{\sqrt{p_2}},
 \frac{(a\cos{\frac{\theta}{2}}+c\sin{\frac{\theta}{2}})\ket{0}+(b\cos{\frac{\theta}{2}}+d\sin{\frac{\theta}{2}})\ket{1}}{\sqrt{p_3}}\right\},\nonumber\\
&& \left\{p_1,p_2,p_3\right\}\Bigg]+\mathcal{AS}\Bigg[\left\{\frac{c\ket{0}+d\ket{1}}{\sqrt{q_1}},\frac{\frac{1}{\sqrt{2}}(a-c)\ket{0}+\frac{1}{\sqrt{2}}(b-d)\ket{1}}{\sqrt{q_2}},
 \frac{(a\sin{\frac{\theta}{2}}-c\cos{\frac{\theta}{2}})\ket{0}+(b\sin{\frac{\theta}{2}}-d\cos{\frac{\theta}{2}})\ket{1}}{\sqrt{q_3}}\right\}\nonumber\\
&& \left\{q_1,q_2,q_3\right\}\Bigg]\Bigg).
\eea
\end{widetext}
The probabilities are follows:
 \bea
 &&p_1=a^2+b^2, p_2=\frac12\left\{(a+c)^2+(b+d)^2\right\}, \nonumber\\
 &&p_3=\left(a\cos{\frac{\theta}{2}}+c\sin{\frac{\theta}{2}}\right)^2+\left(b\cos{\frac{\theta}{2}}+d\sin{\frac{\theta}{2}}\right)^2,\nonumber\\
 &&q_1=c^2+d^2, q_2=\frac12\left\{(a-c)^2+(b-d)^2\right\}, \nonumber\\
 &&q_3=\left(a\cos{\frac{\theta}{2}}-c\sin{\frac{\theta}{2}}\right)^2+\left(b\cos{\frac{\theta}{2}}-d\sin{\frac{\theta}{2}}\right)^2.\nonumber
\eea

By iterating all the values of $(a,b,c,d)$ under the normalization constraint, we check the conditions of \eqref{condAS} for both outcomes. We have done this numerical analysis by fixing the value of $\theta \in (0,\pi)\cup (3\pi/2,2\pi)$ in small intervals. We find that there is no such set of $(a,b,c,d)$ for any $\theta$ in this range, for which both the triplets will be antidistinguishable. So we conclude, for these three measurements, $\mathcal{AME} < 1$ in the given range of $\theta$. \\

For $\mathcal{A\overline{M}E}$, we take Bell state $\ket{\phi^+}$ as the best possible state. From \eqref{AbarME}, we can write with $\{p_x\}_x=\frac13$ ,
\bea\label{AbarME=1}
&& \mathcal{A\overline{M}E}\nonumber\\
&=& \frac13\max_{\ket{\phi^+}}\left(\mathcal{AS}\left[\left\{\ket{0}\ket{0},\ket{+}\ket{+},\ket{\phi}\ket{\phi}\right\},
 \left\{\frac12,\frac12,\frac12\right\}\right]\right. \nonumber\\
 &&\left. + \mathcal{AS}\left[\left\{\ket{1}\ket{1}, \ket{-}\ket{-}, \ket{\phi^{\perp}}\ket{\phi^{\perp}}\right\},  \left\{\frac12,\frac12,\frac12\right\}\right]\right).\nonumber\\
\eea
 Both the triplets should satisfy the conditions of \eqref{condAS}, so that $\mathcal{A\overline{M}E} = \frac13\big(6\times\frac12 \big)=1.$ For the first triplet to be antidistinguishable, the sufficient conditions (\eqref{condAS}) appear as,
 
 \bea\label{suba} 
&& x_1^2+x_2^2+x_3^2<1,\nonumber\\
&& \left(x_1^2+x_2^2+x_3^2-1\right)^2 \geqslant 4x_1^2x_2^2x_3^2,\nonumber\\
 \eea
  where, $x_1=\frac12, x_2=\cos^2\frac{\theta}{2}, x_3=\cos^2(\frac{\theta}{2}-\frac{\pi}{4})$.\\

 If we impose the conditions of antidistinguishability of the second triplet, we will get the exactly same inequalities like \eqref{suba}.  By simplification, \eqref{suba} becomes \eqref{zetacond} which is the sufficient condition with the given range of $\theta$ for theorem \ref{thm20} to be true.
\end{proof}

We could not confirm any vantage point between $\mathcal{AME}$ and $\mathcal{A\overline{M}S}$ from theorems \ref{th14} and \ref{thm20}. So we move into POVMs if there are some examples where we can show certain advantages between these two scenarios. For this exploration, we propose some new measurements, which are relabelled versions of the previous measurements, which are defined at the beginning of this section. From \eqref{RAT}, we take one combination of outcomes and denote those measurements by $\{M_a(\theta)\}$:
\bea\label{RAT1}
& M_1=\beta^{\frac{1}{2}}\ket{v_+}\!\bra{v_+}, M_2=\gamma^{\frac{1}{2}}\ket{v_-}\!\bra{v_-},\nonumber\\
& M_3=\alpha^{\frac{1}{2}}\ket{0}\!\bra{0}U.
\eea
Similarly, another combination of \eqref{LRAT} is taken with different unitary with respect to \eqref{RAT1} and denoted by $\{N_a(\Theta)\}$:
\bea\label{LRAT1}
 & N_1=\gamma^{\frac{1}{2}}U_3\ket{v_-}\!\bra{v_-}, N_2=\alpha^{\frac{1}{2}}\ket{0}\!\bra{0}U',\nonumber\\
 & N_3=\beta^{\frac{1}{2}}U_2\ket{v_+}\!\bra{v_+}.
\eea
and $U'\ket{0}=\ket{\phi(\Theta)}$.

\begin{thm}\label{th17}
For the measurements defined by '\textit{Trine}'\eqref{trine1},
\eqref{RAT1} and \eqref{LRAT1}, $\mathcal{A\overline{M}S}<  \mathcal{AME}=1$ with a certain choice of unitaries such that $U_2\ket{v_+}=\ket{0}, U_3\ket{v_-}=\ket{0}$.\\

\end{thm}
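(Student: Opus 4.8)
The plan is to treat the two claims separately, exploiting that all three measurements consist of rank-one elements. Every element has the form $F_{a|x}=\kappa\,W\ket{\psi}\bra{\chi}$ with $W$ a left unitary, and I will use two structural facts. First, for the maximally entangled state $\ket{\phi^+}$ one has $(F_{a|x}\otimes\I)\ket{\phi^+}=\tfrac{\kappa}{\sqrt2}(W\ket{\psi})\otimes\ket{\bar{\chi}}$, so Bob's reduced state is $\ket{\bar\chi}\bra{\bar\chi}$ (the left unitary is traced away); since every vector here is real, Bob's states are just the bra vectors $\ket{\chi}$. Second, for a single probe the post-measurement state $\rho_{a|x}=W\ket{\psi}\bra{\psi}W^\dagger$ is \emph{independent} of $\rho$, so it is fixed once the measurements are fixed.

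For $\mathcal{AME}=1$ I would apply \eqref{AME} with $\rho^{AB}=\ket{\phi^+}\bra{\phi^+}$ and $p_x=1/3$, and collect Bob's three reduced states for each outcome. I expect outcome $1$ to give $\{\ket{0},\ket{v_+},\ket{v_-}\}$ (the trine), outcome $2$ to give $\{\ket{v_+},\ket{v_-},\ket{\phi(-\Theta)}\}$, and outcome $3$ to give $\{\ket{v_-},\ket{\phi(-\theta)},\ket{v_+}\}$, all lying on the $X$--$Z$ great circle. I then check \eqref{condAS}: the trine saturates them, and for the other two triples the pairwise overlaps are $x_1=\tfrac14$ and $x_2,x_3=\cos^2(\tfrac{\pi}{3}\mp\tfrac{\theta}{2})$. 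Substituting these, the cubic condition $(x_1+x_2+x_3-1)^2\ge 4x_1x_2x_3$ collapses to an identity, while the linear condition $x_1+x_2+x_3<1$ is equivalent to $\cos\theta>\tfrac12$, i.e. $|\theta|<\pi/3$ (and identically for $\Theta$), which is exactly the POVM-validity window. Because the Bell state gives $p(a|x)=\tfrac12\tr(M_{a|x})>0$ for all $a,x$, every outcome's \emph{full} triple is antidistinguishable, so each $\mathcal{AS}$ term in \eqref{AME} attains its maximum $\sum_x p_x\,p(a|x)$ and they sum to $1$.

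For $\mathcal{A\overline{M}S}<1$ I would use the state-independence of the post-measurement states together with the prescribed $U_2\ket{v_+}=U_3\ket{v_-}=\ket0$. Computing the outcome-$1$ post-measurement family gives $\ket0$ from the trine, $\ket{v_+}$ from $M$, and (since $N_1=\gamma^{1/2}U_3\ket{v_-}\bra{v_-}$) again $\ket0$ from $N$, so outcome $1$ always carries the set $\{\ket0,\ket{v_+},\ket0\}$ with a repeated element, for every $\rho$. I then argue that no nonempty sub-collection is perfectly antidistinguishable: the full triple and the pair $\{\ket0,\ket0\}$ fail because two POVM elements annihilating $\ket0$ must be proportional to $\ket1\bra1$, leaving the $\ket0$ direction uncovered against $\sum_b N_b=\I$ (an explicit off-diagonal obstruction once $\ket{v_+^\perp}\bra{v_+^\perp}$ is forced in); the non-orthogonal pair $\{\ket0,\ket{v_+}\}$ fails; and a lone survivor has $\mathcal{AS}=0$. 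Finally, all three probabilities cannot vanish at once because $\ket0\bra0+\ket{v_+}\bra{v_+}+\ket{v_-}\bra{v_-}=\tfrac32\I$ forces $\tr(\rho\,\cdot)$ of their sum to be $\tfrac32$. Hence the outcome-$1$ term of \eqref{AbarMS} is strictly below $\sum_x p_x\,p(1|x)$ for \emph{every} $\rho$, and maximizing the continuous objective over the compact set of density matrices yields $\mathcal{A\overline{M}S}<1$.

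The main obstacle I anticipate is the antidistinguishability check for the two non-trine triples in the $\mathcal{AME}$ part, since one must verify \eqref{condAS} across the whole admissible range and not merely at the symmetric trine point; the saving grace is that the cubic condition becomes an identity and only the (transparent) linear condition survives. The $\mathcal{A\overline{M}S}$ half is conceptually routine given Theorem \ref{th3}: it suffices to exhibit a single outcome whose fixed post-measurement family can never be rendered antidistinguishable, and the repeated $\ket0$ produced by $U_2\ket{v_+}=U_3\ket{v_-}=\ket0$ is precisely the mechanism that guarantees this.
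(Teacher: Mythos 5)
Your proposal is correct and follows essentially the same route as the paper: for $\mathcal{AME}=1$ it feeds the Bell state into \eqref{AME} and verifies the antidistinguishability conditions \eqref{condAS} for the three outcome-triples, and for $\mathcal{A\overline{M}S}<1$ it uses Theorem \ref{th3} together with the fact that $U_2\ket{v_+}=U_3\ket{v_-}=\ket{0}$ forces a repeated $\ket{0}$ in a post-measurement triple that can never be antidistinguished for any probe state. Your write-up is in fact more explicit than the paper's at the two points it leaves to the reader (the identity collapse of the cubic condition in \eqref{condAS}, and the case analysis over which probabilities can vanish), but the underlying argument is the same.
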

\begin{proof}
      With initial state $\ket{\zeta}$, \eqref{AbarMS} expands as,
\bea\label{thm17part}
&&\mathcal{A\overline{M}S} \nonumber\\
&=& \frac{1}{3} \max_{\ket{\zeta}} \left(\mathcal{AS}\left[\Big\{\ket{0},\ket{v_+},U_3\ket{v_-}\Big\},\right.\right.\nonumber\\
&&\left.\left\{\frac{2}{3}|\langle \zeta|0\rangle|^2,\beta(\theta)|\langle \zeta|v_+\rangle|^2,\gamma(\Theta)|\langle \zeta|v_-\rangle|^2\right\}\right]\nonumber\\
&&+\mathcal{AS}\left[\Big\{\ket{v_+},\ket{v_-},\ket{0}\Big\},\right.\nonumber\\
&&\left.\left\{\frac23|\langle \zeta|v_+\rangle|^2,\gamma(\theta)|\langle \zeta|v_-\rangle|^2,\alpha(\Theta)|\langle \zeta|U'|0\rangle|^2\right\}\right]\nonumber\\
&&+\mathcal{AS}\left[\Big\{\ket{v_-},\ket{0},U_2\ket{v_+}\Big\},\right.\nonumber\\
&&\left.\left\{\frac23|\langle \zeta|v_-\rangle|^2,\alpha(\theta)|\langle \zeta|U|0\rangle|^2,\beta(\Theta)|\langle \zeta|v_+\rangle|^2\right\}\right].\nonumber\\
\eea

With $U_2\ket{v_+}=\ket{0}$ and $U_3\ket{v_-}=\ket{0}$, first and third triplet becomes not antidistinguishable. Irrespective of the chosen initial state, it does not satisfy theorem \ref{th3}. So, $\mathcal{A\overline{M}S} < 1$.

For $\mathcal{AME},$ we take the Bell state $\ket{\phi^+}$ as the best initial state. With the same set of measurements, \eqref{AME} can be brought into,
\bea
    \mathcal{AME} &=& \frac13\left(\mathcal{AS}\left[\left\{\ket{0},\ket{v_+},\ket{v_-}\right\},\left\{\frac13,\frac{\beta(\theta)}{2},\frac{\gamma(\Theta)}{2}\right\}\right]\right. \nonumber\\
   &&+\mathcal{AS}\left[\left\{\ket{v_+},\ket{v_-},\ket{\phi(\Theta)}\right\},
    \left\{\frac13,\frac{\gamma(\theta)}{2},\frac{\alpha(\Theta)}{2}\right\}\right]\nonumber\\
  &&\left. + \mathcal{AS}\left[\left\{\ket{v_-},\ket{\phi(\theta)},\ket{v_+}\right\},
    \left\{\frac13,\frac{\alpha(\theta)}{2},\frac{\beta(\Theta)}{2}\right\}\right]\right).\nonumber\\
   \eea
   From \eqref{condAS}, it is easy to check $\{\ket{0},\ket{v_+},\ket{v_-}\}$, $\{\ket{v_+},\ket{v_-},\ket{\phi(\Theta)}\}$ and $\{\ket{v_-},\ket{\phi(\theta)},\ket{v_+}\}$ are antidistinguishable. Subsequently,
   \bea 
   \mathcal{AME}&=&\frac13\left(3\times\frac13+ \frac12\Big\{\alpha(\theta)+\beta(\theta)+ \gamma(\theta) \right.  \nonumber\\
   &&\left. + \alpha(\Theta) +\beta(\Theta) + \gamma(\Theta) \Big\} \right) =1.
   \eea
   Therefore, theorem \ref{th17} is proved.
\end{proof}

Similarly, like the analogous theorem of distinguishability, we are here interested in the gap between the values of $\mathcal{A\overline{M}S}$ and $\mathcal{AME}$. antidistinguishability does not have any closed form in this case. So, we make use of semi-definite programming. First, we fix three measurements, that is, fixing the values of $\theta$ and $\Theta$. For these three measurements with $\theta=\frac{\pi}{6}, \Theta=\frac{\pi}{12}$, from SDP, we found $\mathcal{A\overline{M}S}=0.923$. The best initial state happens to be $\ket{\zeta}=-0.9991\ket{0}+0.0416\ket{1}$.  $\mathcal{AME}$ does not depend on unitaries. So, irrespective of the values of $\theta,\Theta$, $\mathcal{AME}=1$. 

As usual, we want an example of the reverse statement of the theorem \ref{th17}. For this thing, some measurements are needed to be relabelled. From \eqref{RAT}, we take one combination of outcomes and denote those measurements by $\{R_a(\theta)\}$:
\bea\label{RAT2}
& R_1=\gamma^{\frac{1}{2}}\ket{v_-}\!\bra{v_-}, R_2=\beta^{\frac{1}{2}}\ket{v_+}\!\bra{v_+},\nonumber\\
& M_3=\alpha^{\frac{1}{2}}\ket{0}\!\bra{0}U.
\eea
From \eqref{LRArT}, we take other combinations of outcomes. This is denoted by $\{S_a(\mu)\}$:
\bea\label{LRArT2}
& S_1 = c^\frac12 V_3\ket{v_-^{\perp}}\!\bra{v_-^{\perp}}, S_2 = b^\frac12\ket{v_+^{\perp}}\!\bra{v_+^{\perp}}V_2,\nonumber\\
& S_3= a^\frac{1}{2}V_1\ket{1}\!\bra{1}.
\eea

\begin{thm}\label{th18}
For the measurements defined by '\textit{Trine}'\eqref{trine1}, \eqref{RAT2} and \eqref{LRArT2}, $\mathcal{A\overline{M}S}= 1$ but $\mathcal{AME}<1$, if $V_1\ket{1}=\ket{v_+}$ and $V_3\ket{v_-^{\perp}}=\ket{v_+}$.
\end{thm}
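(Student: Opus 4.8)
The plan is to prove the two claims by separating the \emph{ket} and \emph{bra} parts of the rank-one POVM elements: access to the post-measurement state in the $\mathcal{A\overline{M}S}$ scenario exposes the ket $\ket{k_{a|x}}$ of each element $F_{a|x}=\kappa\ket{k_{a|x}}\!\bra{b_{a|x}}$, whereas in the $\mathcal{AME}$ scenario Bob's reduced state is fixed by the bra $\bra{b_{a|x}}$. The constraints $V_1\ket{1}=\ket{v_+}$ and $V_3\ket{v_-^\perp}=\ket{v_+}$ only reorient the kets of $S_3$ and $S_1$ and leave every bra untouched, which is exactly what drives the two scenarios apart.

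For $\mathcal{A\overline{M}S}=1$ I would apply Theorem~\ref{th3} to \eqref{AbarMS}. On a generic input $\rho$ the post-measurement state of $F_{a|x}=\kappa\ket{k_{a|x}}\!\bra{b_{a|x}}$ is $\ket{k_{a|x}}\!\bra{k_{a|x}}$, so only the ket triples matter. Collecting them outcome by outcome gives $\{\ket{0},\ket{v_-},V_3\ket{v_-^\perp}\}=\{\ket{0},\ket{v_-},\ket{v_+}\}$ for outcome $1$, $\{\ket{v_+},\ket{v_+},\ket{v_+^\perp}\}$ for outcome $2$, and $\{\ket{v_-},\ket{0},V_1\ket{1}\}=\{\ket{v_-},\ket{0},\ket{v_+}\}$ for outcome $3$. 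Outcomes $1$ and $3$ are the trine $\{\ket{0},\ket{v_+},\ket{v_-}\}$, antidistinguishable by \eqref{condAS} (all pairwise overlaps equal $1/4$, saturating the inequality), while outcome $2$ is antidistinguishable because $\ket{v_+^\perp}\perp\ket{v_+}$. Choosing $\rho$ with $\tr(\rho M_{a|x})\neq0$ for all $a,x$ makes condition $(i)$ of Theorem~\ref{th3} hold at every outcome, so $\mathcal{A\overline{M}S}=1$.

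For $\mathcal{AME}<1$ I would start from \eqref{AME} with the general real two-qubit state $\ket{\phi}^{AB}=a\ket{00}+b\ket{01}+c\ket{10}+d\ket{11}$ (realness is without loss of generality since all the states lie in the X-Z plane), using that Bob's reduced state at outcome $a$ for measurement $x$ is the pure state proportional to $(\bra{b_{a|x}}\otimes\I)\ket{\phi}$ and hence depends only on the bra. The bra triples are $\{\bra{0},\bra{v_-},\bra{v_-^\perp}\}$, $\{\bra{v_+},\bra{v_+},\bra{v_+^\perp}V_2\}$, and $\{\bra{v_-},\bra{0}U,\bra{1}\}$ for outcomes $1,2,3$. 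The key structural fact is that at outcome $2$ the trine \eqref{trine1} and \eqref{RAT2} carry the \emph{same} bra $\bra{v_+}$, so two of Bob's three reduced states coincide; since a triple $\{\ket{\Omega},\ket{\Omega},\ket{\Omega'}\}$ is antidistinguishable only when $\ket{\Omega'}\perp\ket{\Omega}$, demanding $\mathcal{AME}=1$ already forces $(\bra{v_+^\perp}V_2\otimes\I)\ket{\phi}\perp(\bra{v_+}\otimes\I)\ket{\phi}$, in addition to the conditions \eqref{condAS} for the outcome-$1$ and outcome-$3$ triples.

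The hard part is to show that these three requirements cannot be met simultaneously by \emph{any} entangled $\ket{\phi}^{AB}$, rather than merely by the maximally entangled one. The obstruction is genuine: the assignment $\ket{b}\mapsto(\bra{b}\otimes\I)\ket{\phi}$ is a non-unitary linear map, so it distorts the great-circle geometry differently at each outcome and antidistinguishability is not transported from the bras to Bob's states. Following the route used after Theorem~\ref{th17}, I would cast the three conditions as the inequalities \eqref{condAS} and, in the absence of a closed form, certify by semidefinite programming (or by exhausting $(a,b,c,d)$ under normalization as in Theorem~\ref{thm20}) that no entangled state antidistinguishes all three outcome triples at once, whence $\mathcal{AME}<1$. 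A quick consistency check is the maximally entangled input, for which outcome $2$ fails unless $V_2\ket{v_+^\perp}=\ket{v_+^\perp}$, pinpointing the right unitary $V_2$ as the source of the gap.
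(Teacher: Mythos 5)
Your ket/bra decomposition is the right structural reading of the construction, and your proof of $\mathcal{A\overline{M}S}=1$ is correct; it is in fact marginally cleaner than the paper's. The paper handles the outcome-$2$ ket triple $\{\ket{v_+},\ket{v_+},\ket{v_+^{\perp}}\}$ by choosing the special input $\ket{\zeta}=\ket{v_+^{\perp}}$ so that the two $\ket{v_+}$ entries acquire zero weight and the triple degenerates to a single state, whereas you observe directly that a triple of the form $\{\ket{\Omega},\ket{\Omega},\ket{\Omega'}\}$ with $\ket{\Omega'}\perp\ket{\Omega}$ is already perfectly antidistinguishable (take $N_1=\ket{\Omega'}\!\bra{\Omega'}$, $N_2=0$, $N_3=\ket{\Omega}\!\bra{\Omega}$), so any full-support input works. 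Both routes legitimately invoke Theorem \ref{th3}.

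The gap is in the $\mathcal{AME}<1$ half. You correctly identify the same pivotal fact as the paper --- the shared bra $\bra{v_+}$ at outcome $2$ forces two of Bob's three reduced states to coincide for \emph{every} initial bipartite state --- and you are right that this alone does not kill antidistinguishability, because the degenerate triple is still antidistinguishable when the third state is orthogonal to the repeated one. The paper simply asserts ``two states the same, hence not antidistinguishable'' and stops; you refine the claim into the residual orthogonality condition $(\bra{v_+^{\perp}}V_2\otimes\I)\ket{\phi}\perp(\bra{v_+}\otimes\I)\ket{\phi}$ plus the \eqref{condAS} conditions at outcomes $1$ and $3$, but then you do not rule out their simultaneous satisfiability: you defer to an SDP or a brute-force sweep over $(a,b,c,d)$. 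As written, that is a verification plan, not a proof --- the orthogonality condition at outcome $2$ is a single scalar equation on the input state and \emph{can} be met by suitable non-maximally entangled $\ket{\phi}^{AB}$ whenever $V_2^{\dagger}\ket{v_+^{\perp}}\not\propto\ket{v_+^{\perp}}$, so the incompatibility genuinely has to come from combining it with the outcome-$1$ and outcome-$3$ constraints, and that combination is exactly what is left unestablished. To match the paper's conclusion you must either adopt its (overstated but decisive) one-line claim, or actually carry out the case analysis/numerics you outline; your last sentence about the maximally entangled input only checks one point of the feasible set and does not close the argument.
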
 
\begin{proof}
    With these three measurements, from \eqref{AbarMS}, we can write,
\bea
 \mathcal{A\overline{M}S} &=& \frac{1}{3}\max_{\ket{\zeta}}\left(\mathcal{AS}\left[\left\{\ket{0},\ket{v_-},V_3\ket{v_-^\perp}\right\},\right. \right.\nonumber\\
 &&\left. \left\{\frac{2}{3}|\langle \zeta|0\rangle|^2,\gamma|\langle \zeta|v_-\rangle|^2,c|\langle \zeta|v_-^{\perp}\rangle|^2\right\}\right]\nonumber\\
 &&+\mathcal{AS}\left[\left\{\ket{v_+},\ket{v_+},\ket{v_+^\perp}\right\},\right. \nonumber\\
 &&\left. \left\{\frac{2}{3}|\langle \zeta|v_+\rangle|^2,\beta|\langle \zeta|v_+\rangle|^2,b|\langle \zeta|V_2|v_+^\perp\rangle|^2\right\}\right]\nonumber\\
 &&+ \mathcal{AS}\left[\Big\{\ket{v_-},\ket{0},V_1\ket{1}\Big\},\right.\nonumber\\
&&\left. \left. \left\{\frac{2}{3}|\langle \zeta|v_-\rangle|^2,\alpha|\langle \zeta|U|0\rangle|^2,a|\langle \zeta|1\rangle|^2\right\}\right]\right).\nonumber\\
\eea
The choice of $V_1$ and $V_3$ of theorem \ref{th18} makes two triplets, $\{\ket{0},\ket{v_-},V_3\ket{v_-^{\perp}}\}$ and $\{\ket{v_-},\ket{0},V_1\ket{1}\}$ antidistinguishable.  We choose the best possible state as $\ket{\zeta}=\ket{v_+^{\perp}}$, so two probabilities in the second triplet becomes zero and as a result, there will be only one state $\ket{v_+^{\perp}}$. So, the second triplet is also antidistinguishable. So, all the three triplets are antidistinguishable, so, from theorem \ref{th3}, we can say $\mathcal{A\overline{M}S}=1.$ \\

For $\mathcal{AME},$ Alice and Bob can take any general entangled state as the best initial state. For $\mathcal{AME}$ to be $1$, the reduced state for each outcome at Bob's side should be antidistinguishable. For the second outcome of first two measurements, the same operator is acting on Alice's part of entangled state. So, irrespective of the initial entangled state, Bob's part will be reduced to same state for this case. Now, Bob has one triplet (2nd) where two states are the same. So, this triplet is not antidistinguishable. As they confirm that one out of three triplet is always not antidistinguishable, we can conclude $\mathcal{AME}<1$. This completes the proof.

\end{proof}
Like the previous theorem, we want to cite a specific example for this instance. We take $\mu=\frac{23\pi}{12}$ and $\theta=-\frac{\pi}{6}$ and find $\mathcal{AME}=0.9954$ by SDP. The initial entangled state is found to be $\ket{\zeta}=\sqrt{0.88}\ket{\eta}\ket{0}+\sqrt{0.12}\ket{\eta^\perp}\ket{1}$, where $\ket{\eta}=0.17\ket{0}+0.9854\ket{1}$. We do not hamper $V_1,V_3$, so, $\mathcal{A\overline{M}S}$ is still $1$. 
 

\section{Conclusion}
In this paper, we developed a general framework for distinguishability and antidistinguishability of quantum measurements. To distinguish or antidistinguish a known set of measurements, we need the best possible initial state (single or entangled) to be fed into the measurement device. Depending on the setup, we may or may not have access to the post-measurement state after one of the measurements from the set has been carried out. So, there are four scenarios for distinguishing or antidistinguishing quantum measurements, considering all possible combinations of the initial state and the availability of the post-measurement state. For any pair of qubit projective measurements, access to the post-measurement state with an initial single state provides an advantage over an entangled initial state without the post-measurement state. There exist some qubit non-projective measurements for which maximum distinguishability for entangled systems without access to post-measurement state is obtained using non-maximally entangled states. For any set of measurements, using an initial entangled state and access to the post-measurement state provides the best distinguishing or antidistinguishing probability, with the least being starting with single systems and without the post-measurement state. There is no strict relationship between the scenario of probing with entangled systems but without access to the post-measurement state and the scenario of probing with single systems and with access to the post-measurement state. There are measurements that are perfectly distinguishable or antidistinguishable in one scenario but not in other(s). We have introduced different variants of a well-known \textit{'trine'} qubit POVM to construct examples such that they are perfectly distinguishable or antidistinguishable using an entangled state without the post-measurement state, but not with single systems with an available post-measurement state, and vice-versa. There exist qubit measurements that can be perfectly distinguished (or antidistinguished) only using an initial entangled system with access to the post-measurement state but not in other frameworks.

Let us point out some possible directions in future research. In the course of the proof of theorem \ref{DbarMSgeqDME}, we numerically show that the maximally entangled state is the best initial state for $\mathcal{DME}$ of a pair of qubit projective measurements. But, if we want to analyse the projective measurements of higher dimensions, the numerical method would be cumbersome. So, it would be an interesting task to find analytical proof for the sufficiency of maximally entangled states for any number of qudit projective measurements. The same task can be done for $\mathcal{AME}$. Interestingly, all our results in section \ref{SEC V} can be shown using qubit measurements. One can try to find examples of higher dimensional measurements that are perfectly distinguishable or antidistinguishable in one scenario but not in other(s). Thus, there will be a possibility to find the families of measurements in different dimensions to fulfill the earlier condition. Alongside, one can quantify the advantage of one scenario over the other for both the distinguishability and antidistinguishability across the dimension and then find the class of measurements for which the ratios between $\mathcal{D\overline{M}S}$ and $\mathcal{DMS}$, $\mathcal{DME}$ and $\mathcal{DMS}$, $\mathcal{D\overline{M}E}$ and $\mathcal{DME}$ are unbounded with respect to dimension. It will also be interesting to consider the multiple-shot regime for studying distinguishability and antidistinguishability in scenarios with or without access to the post-measurement state. Furthermore, one can think of the application of these quantities in information-theoretic tasks. 

%

\subsection*{Acknowledgements}
This work is supported by STARS (STARS/STARS-2/2023-0809), Govt.
of India.

\bibliography{ref}

\end{document}